\def\SSEarrow{\ensuremath{\rotatebox[origin=c]{-90}{$\twoheadrightarrow$}}}
\def\Ssearrow{\ensuremath{\rotatebox[origin=c]{90}{$\twoheadrightarrow$}}}
\newcommand\Sets{{\bf Sets}}
\newcommand\op{{\rm op}}
\newcommand\ps[1]{\underline{#1}}
\newcommand\ts{\tilde{\Sigma}}
\newcommand\ic{\mathbf{I}\Cl}
\newtheorem{Theorem}{Theorem}[section]
\newtheorem{Definition}{Definition}[section]
\newtheorem{Corollary}{Corollary}[section]
\newtheorem{Proposition}{Proposition}[section]
\newtheorem{Conjecture}{Conjecture}[section]
\newcommand\Sig{\underline{\Sigma}}
\def\be{\begin{equation}}
\def\ee{\end{equation}}
\def\ba{\begin{eqnarray}}
\def\ea{\end{eqnarray}}
\def\a{{\cal A}}
\def\mv{\mathcal{V}}
\def\mh{\mathcal{H}}
\def\mb{\mathcal{B}}
\def\us{\underline{\Sigma}}
\def\mn{\mathcal{N}}
\def\Nl{{\mathchoice
{\setbox0=\hbox{$\displaystyle\rm N$}\hbox{\hbox to0pt
{\kern0.4\wd0\vrule height0.9\ht0\hss}\box0}}
{\setbox0=\hbox{$\textstyle\rm N$}\hbox{\hbox to0pt
{\kern0.4\wd0\vrule height0.9\ht0\hss}\box0}}
{\setbox0=\hbox{$\scriptstyle\rm N$}\hbox{\hbox to0pt
{\kern0.4\wd0\vrule height0.9\ht0\hss}\box0}}
{\setbox0=\hbox{$\scriptscriptstyle\rm N$}\hbox{\hbox to0pt
{\kern0.4\wd0\vrule height0.9\ht0\hss}\box0}}}}
\def\Zl{{\mathchoice
{\setbox0=\hbox{$\displaystyle\rm Z$}\hbox{\hbox to0pt
{\kern0.4\wd0\vrule height0.9\ht0\hss}\box0}}
{\setbox0=\hbox{$\textstyle\rm Z$}\hbox{\hbox to0pt
{\kern0.4\wd0\vrule height0.9\ht0\hss}\box0}}
{\setbox0=\hbox{$\scriptstyle\rm Z$}\hbox{\hbox to0pt
{\kern0.4\wd0\vrule height0.9\ht0\hss}\box0}}
{\setbox0=\hbox{$\scriptscriptstyle\rm Z$}\hbox{\hbox to0pt
{\kern0.4\wd0\vrule height0.9\ht0\hss}\box0}}}}
\def\Ql{{\mathchoice
{\setbox0=\hbox{$\displaystyle\rm Q$}\hbox{\hbox to0pt
{\kern0.4\wd0\vrule height0.9\ht0\hss}\box0}}
{\setbox0=\hbox{$\textstyle\rm Q$}\hbox{\hbox to0pt
{\kern0.4\wd0\vrule height0.9\ht0\hss}\box0}}
{\setbox0=\hbox{$\scriptstyle\rm Q$}\hbox{\hbox to0pt
{\kern0.4\wd0\vrule height0.9\ht0\hss}\box0}}
{\setbox0=\hbox{$\scriptscriptstyle\rm Q$}\hbox{\hbox to0pt
{\kern0.4\wd0\vrule height0.9\ht0\hss}\box0}}}}
\def\Rl{{\mathchoice
{\setbox0=\hbox{$\displaystyle\rm R$}\hbox{\hbox to0pt
{\kern0.4\wd0\vrule height0.9\ht0\hss}\box0}}
{\setbox0=\hbox{$\textstyle\rm R$}\hbox{\hbox to0pt
{\kern0.4\wd0\vrule height0.9\ht0\hss}\box0}}
{\setbox0=\hbox{$\scriptstyle\rm R$}\hbox{\hbox to0pt
{\kern0.4\wd0\vrule height0.9\ht0\hss}\box0}}
{\setbox0=\hbox{$\scriptscriptstyle\rm R$}\hbox{\hbox to0pt
{\kern0.4\wd0\vrule height0.9\ht0\hss}\box0}}}}
\def\Cl{{\mathchoice
{\setbox0=\hbox{$\displaystyle\rm C$}\hbox{\hbox to0pt
{\kern0.4\wd0\vrule height0.9\ht0\hss}\box0}}
{\setbox0=\hbox{$\textstyle\rm C$}\hbox{\hbox to0pt
{\kern0.4\wd0\vrule height0.9\ht0\hss}\box0}}
{\setbox0=\hbox{$\scriptstyle\rm C$}\hbox{\hbox to0pt
{\kern0.4\wd0\vrule height0.9\ht0\hss}\box0}}
{\setbox0=\hbox{$\scriptscriptstyle\rm C$}\hbox{\hbox to0pt
{\kern0.4\wd0\vrule height0.9\ht0\hss}\box0}}}}
\def\Hl{{\mathchoice
{\setbox0=\hbox{$\displaystyle\rm H$}\hbox{\hbox to0pt
{\kern0.4\wd0\vrule height0.9\ht0\hss}\box0}}
{\setbox0=\hbox{$\textstyle\rm H$}\hbox{\hbox to0pt
{\kern0.4\wd0\vrule height0.9\ht0\hss}\box0}}
{\setbox0=\hbox{$\scriptstyle\rm H$}\hbox{\hbox to0pt
{\kern0.4\wd0\vrule height0.9\ht0\hss}\box0}}
{\setbox0=\hbox{$\scriptscriptstyle\rm H$}\hbox{\hbox to0pt
{\kern0.4\wd0\vrule height0.9\ht0\hss}\box0}}}}
\def\Ol{{\mathchoice
{\setbox0=\hbox{$\displaystyle\rm O$}\hbox{\hbox to0pt
{\kern0.4\wd0\vrule height0.9\ht0\hss}\box0}}
{\setbox0=\hbox{$\textstyle\rm O$}\hbox{\hbox to0pt
{\kern0.4\wd0\vrule height0.9\ht0\hss}\box0}}
{\setbox0=\hbox{$\scriptstyle\rm O$}\hbox{\hbox to0pt
{\kern0.4\wd0\vrule height0.9\ht0\hss}\box0}}
{\setbox0=\hbox{$\scriptscriptstyle\rm O$}\hbox{\hbox to0pt
{\kern0.4\wd0\vrule height0.9\ht0\hss}\box0}}}}
\title{{\sf Complex Numbers, One-Parameter of Unitary Transformations and Stone's Theorem in Topos Quantum Theory}\\ 
}
\author{{\sf W. Brenna$^1$\thanks{{\sf wbrenna@uwaterloo.ca}},  C. Flori$^2$}\thanks{{\sf cflori@perimeterinstitute.ca}}\\
\\
{\sf $^2$ Perimeter Institute for Theoretical Physics,}\\
{\sf 31 Caroline Street N, Waterloo, ON N2L 2Y5, Canada}\\
{\sf $^1$ Department of Physics \& Astronomy, University of Waterloo}\\
{\sf 200 University Avenue West, Waterloo, Ontario, Canada, N2L 3G1 }}
\date{{\small\sf Preprint }}
\begin{document}
\maketitle
\begin{abstract}
Topos theory has been suggested first by Isham and Butterfield, and then by Isham and
D¬oring, as an alternative mathematical structure within which to formulate physical theories.
In particular, it has been used to reformulate standard quantum mechanics in such a way that
a novel type of logic is used to represent propositions. In recent years the topic has been considerably progressing with the introduction of probabilities, group and group transformations. In the present paper we will introduce a candidate for the complex quantity value object and analyse its relation to the real quantity value object. By defining the Grothendieck k-extension of these two objects, so as to turn them into abelian groups, it is possible to define internal one parameter groups in a topos. We then use this new definition to construct the topos analogue of the Stone's theorem.
\end{abstract}

\newpage\section{Introduction}
Recently, Isham and D\"oring   have developed a novel formulation of quantum theory based on the mathematical structure of topos theory,
first suggested by Isham and Butterfield, 
\cite{Isham1999a}, \cite{Isham1999},  \cite{Hamilton1999},  \cite{Butterfield2002},  \cite{Isham2000},  \cite{Doering2008}. 

The aim of this new formulation is to overcome the Copenhagen (instrumentalist) interpretation of quantum theory and replace it with an observer-independent, non-instrumentalist interpretation.

The strategy adopted to attain such a new formulation is to
re-express quantum theory as a type of `classical theory' in a
particular topos.
In this setting, the notion of classicality is defined in terms of the notion of {\it context} or {\it classical snapshots}.
In particular, in this framework, quantum theory is seen as a collection of local `classical snapshots',
where the quantum information is determined by the relation between these local classical snapshots.

Mathematically, each classical snapshot is represented by an
abelian von-Neumann sub-algebra $V$ of the algebra
$\mathcal{B}(\mh)$ of bounded operators on a Hilbert space. The
collection of all these contexts forms a category $\mv(\mh)$,
which is actually a poset by inclusion.
As one goes to smaller sub-algebras $V^{'}\subseteq V$ one obtains a coarse-grained classical perspective on the theory.

The fact that the collection of all such classical snapshots forms a category, in particular a poset, means that the quantum information can be retrieved by the relations between such snapshots, i.e. by the categorical structure.

A topos that allows for such a classical local description is the
topos of presheaves over the category $\mv(\mh)$. This is denoted
as $\Sets^{\mv(\mh)^{\op}}$. By utilising the topos $\Sets^{\mv(\mh)^{\op}}$ to
reformulate quantum theory, it was possible to define pure quantum
states, quantum propositions and truth values of the latter
without any reference to external observer, measurement or any
other notion implied by the instrumentalist interpretation. In
particular, for pure quantum states, probabilities are replaced by
truth values, which derive from the internal structure of the
topos itself.
These truth values are lower sets in the poset $\mv(\mh)$, thus they are interpreted as the collection of all classical snapshots for which the proposition is true.
Of course, being true in one context implies that it will be true in any coarse graining of the latter.

However, this formalism lacked the ability to consider mixed
states in a similar manner as pure states, in particular it lacked
the ability to interpret truth values for mixed states as
probabilities. This problem was solved in \cite{Doering2011} by
enlarging the topos $\Sets^{\mv(\mh)^{\op}}$ and considering, instead, the
topos of sheaves over the category $\mv(\mh)\times (0,1)_L$, i.e.
$Sh( \mv(\mh)\times(0,1)_L)$. Here $(0,1)_L$ is the category
whose open sets are the intervals $(0,r)$ for $0\leq r\leq 1$.
Within such a topos it is also possible to define a logical
reformulation of probabilities for mixed states. In this way
probabilities are derived internally from the logical structure of
the topos itself and not as an external concept related to
measurement and experiment. Probabilities thus gain a more
objective status which induces an interpretation in terms of
propensity rather than relative frequencies.

Moreover, it was also shown in \cite{Doering2011} that all that was done for the topos $\Sets^{\mv(\mh)^{\op}}$ can be translated to the topos $Sh( \mv(\mh)\times(0,1)_L)$.
Although much of the quantum formalism has been re-expressed in
the topos framework there are still many open questions and
unsolved issues. Of particular importance is the role of unitary
operators and the associated concept of group transformations. In
\cite{Doering2008a}, \cite{Doering2008b}, \cite{Doering2008c},
\cite{Doering2008d}, \cite{Doering2008} the role of unitary operators in
the topos $\Sets^{\mv(\mh)^{\op}}$ was discussed and it was shown that
generalised truth values of propositions transform `covariantly'.
However, the situation is not ideal since `twisted' presheaves had
to be introduced.
This problem was solved in \cite{Flori2011} where the authors define the notion of what a group and associated group transformation is in the topos representation of quantum theory, in such a way that the problem of twisted presheaves is avoided.
In order to do this they slightly change the topos they work with.
The reasons for this shift are: i) they require the group action to be continuous ii) In order to avoid twisted presheaves the base category has to be fixed, i.e. the group can not be allowed to act on it.

Although the problem of group transformations was solved, the precise definition of unitary operators and unitary transformations still remains open.
In particular, one may ask how unitary operators represented in the topos formulation of quantum theory.
We know that self-adjoint operators are represented as arrows from the state space $\us$ to the quantity value object $\ps{\Rl^{\leftrightarrow}}$.
The natural question to ask is whether such a representation can be extended to all normal operators.
To this end one needs to, first of all, define the topos analogue of the complex numbers.
Of course there is the trivial object $\ps{\Cl}$ but this, as we will see, can not be identified with the complex number object since 
a) it does not reduce to $\ps{\Rl^{\leftrightarrow}}$,
and b) since the presheaf maps in $\ps{\Cl}$ are the identity maps, these maps will not respect the ordering induced by the yet to be defined daseinisation of normal operators.
Thus, some other object has to be chosen as the complex valued quantity object.
In this paper we will define such an object.
In order to construct this object we will first of all define an ordering on the complex numbers $\Cl$, which is related to the ordering of the spectra of the normal operators induced by the ordering of the self-adjoint operators comprising them.
We then arrive at a definition of the complex number object $\ps{\Cl^{\leftrightarrow}}$ and, consequently, at a definition of normal operators as maps from the state space to the newly defined object.

Given that we now have both the complex and real quantity value objects, we attempted to define one parameter goup in terms on these objects, however, this was not possible since neither $\ps{\Cl^{\leftrightarrow}}$ nor $\ps{\Rl^{\leftrightarrow}}$ are groups, but are only monoid.
To solve this problem we applied the Grothendieck k-extension \cite{Doering2008} so as to obtain the abelian group objects $k(\ps{\Cl^{\leftrightarrow}})$ and $k(\ps{\Rl^{\leftrightarrow}})$.
To simplify the notation we switched to the objects $k(\ps{\Cl}^{\geq})\subseteq k(\ps{\Cl^{\leftrightarrow}}) $ and $k(\ps{\Rl}^{\geq})\subseteq k(\ps{\Rl^{\leftrightarrow}})$; this poses no loss in generality.
We then were able to define the topos description of a one parameter group taking values in $k(\ps{\Cl}^{\geq})$ and $k(\ps{\Rl^{\leftrightarrow}})$.
We then apply these topos analogue of one parameter group of transformation to define and proof the topos analogue of Stone's theorem.

\section{Possible Ordering of Complex Numbers}
\subsection{Spectral Theorem for Normal Operators}

In the current formalism of topos quantum physics \cite{Doering2008}, the spectral theorem is used when representing ``physical quantities''.
In fact, in this formulation self-adjoint operators are well-defined, as maps between the spectral presheaf and the quantity value object (see Section \ref{sec:topos} in the Appendix for the relevant definitions):
\be
\breve{\delta}(\hat{A}):\us\rightarrow\underline{\Rl}^{\leftrightarrow}
\ee
such that for each context $V\in\mv(\mh)$ we have
\ba
\breve{\delta}(\hat{A})_V:\us_V&\rightarrow&\underline{\Rl}^{\leftrightarrow}_V\\
\lambda&\mapsto&\big(\breve{\delta}^i(\hat{A})(\lambda), \breve{\delta}^o(\hat{A})(\lambda)\big)
\ea
where 
\ba
\breve{\delta}^i(\hat{A})(\lambda): \downarrow V&\rightarrow& \Rl\\
V^{'}&\mapsto&\lambda(\delta^i(\hat{A})_{V^{'}})
\ea
is an order preserving function and $\lambda(\delta^i(\hat{A})_{V^{'}})$ represents the value of the inner daseinised operator $\delta^i(\hat{A})_{V^{'}}$. On the other hand
\ba
\breve{\delta}^o(\hat{A})(\lambda): \downarrow V&\rightarrow& \Rl\\
V^{'}&\mapsto&\lambda(\delta^o(\hat{A})_{V^{'}})
\ea
is an order reversing function and $\lambda(\delta^o(\hat{A})_{V^{'}})$ represents the value of the outer daseinised operator $\delta^o(\hat{A})_{V^{'}}$ see \cite{Doering2008} for details. \\

We would like to pursue the same approach but for normal operators, so we begin by stating the spectral theorem for bounded normal operators \cite{Whitley1968,Helmberg2008}.
Since we know the spectral decomposition of a self-adjoint operator, we begin by breaking up the normal operator $C$ into two self-adjoint parts
\begin{align*}
B &= \frac{1}{2} \left( \hat{C} + \hat{C}^* \right) \\
A &= \frac{1}{2i} \left( \hat{C} - \hat{C}^* \right)
\end{align*}

This decomposition has a number of unfortunate downsides, of which the most important to us is the fact that daseinisation is not additive:
\begin{equation}
\label{nadditive}
\delta(\hat{A} + \hat{B} )_V \le \delta(\hat{A})_V + \delta(\hat{B})_V
\end{equation}

Therefore, directly generalizing, the definition of self-adjoint operator will not hold.
It can be seen, however, that the spectral decomposition can be better defined \cite{Helmberg2008}.
We know that normal operators have the representation
\be
\hat{C}=\int_{\Cl}\lambda d\hat{E}^{\hat{C}}_{\lambda}
\ee
However we also know that $\hat{C}=\hat{A}+i\hat{B}$ with $\hat{A}=\int_{\Rl}\gamma\hat{E}^{\hat{A}}_{\gamma}$ and $\hat{B}=\int_{\Rl}\sigma\hat{E}^{\hat{B}}_{\sigma}$, therefore
\be
\hat{C}=\int_{\Rl}\gamma d\hat{E}^{\hat{A}}_{\gamma}+\int_{\Rl}i\sigma d\hat{E}^{\hat{B}}_{\sigma}=\int_{\Rl}\left(\gamma+i\sigma \right) d\hat{E}^{\hat{A}}_{\gamma} d\hat{E}^{\hat{B}}_{\sigma}
\ee
So what exactly is the relation between those two expressions, and furthermore, what is the relation between $\lambda$ and $\gamma+i\sigma$?\\The answer can be found in the following theorem:

\begin{Theorem}
Given a bounded operator $\hat{A}=\hat{C}+i\hat{B}$ on $\mh$, there exists a family of projection operators 
$\{\hat{P}(\varepsilon, \eta):=\hat{P}_1(\varepsilon)\hat{P}_2(\eta)|(\varepsilon, \eta)\in\Rl^2\}$
which commute with $\hat{A}$, where $\{\hat{P}_1(\varepsilon)|\varepsilon\in \Rl\}$ is the spectral family of $\hat{C}$ and 
$\{\hat{P}_2(\eta)|\eta\in\Rl\}$ is the spectral family of $\hat{B}$.
We then say that $\{P(\varepsilon, \eta):=\hat{P}_1(\varepsilon)\hat{P}_2(\eta)|(\varepsilon, \eta)\in\Rl^2\}$ is the spectral family of $\hat{A}$.
Such a family has the following properties:
\begin{enumerate}
\item [a)]  $\hat{P}(\varepsilon,\eta)\hat{P}(\varepsilon^{'},\eta^{'})=\hat{P}(min\{\varepsilon, \varepsilon^{'}\}, min\{\eta, \eta^{'}\})$ for all $(\varepsilon,\eta)\in \Rl^2$ and $(\varepsilon^{'},\eta^{'})\in\Rl^2$;
\item[b)] $\hat{P}(\varepsilon,\eta)=0$ for all $\varepsilon<-||A||$ or $\eta<-||A||$ where $||A||$ is the Frobenius norm;
\item [c)] $\hat{P}(\varepsilon,\eta)=I$ for all $\varepsilon\geq||A||$ and $\eta\geq ||A||$;
\item [d)] $\hat{P}(\varepsilon+0,\eta+0)=\hat{P}(\varepsilon,\eta)$ for all $(\varepsilon,\eta)\in \Rl^2$
\item[e)] \be
\hat{A}=\int_{-\infty}^{\infty}\int_{-\infty}^{\infty}(\varepsilon+i \eta )d(\hat{P}_1(\varepsilon)\hat{P}_2(\eta))=\int_{-\infty}^{\infty}\int_{-\infty}^{\infty}(\varepsilon+i \eta )d(\hat{P}(\varepsilon, \eta))
\ee

\end{enumerate}
\end{Theorem}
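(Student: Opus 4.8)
The plan is to reduce the whole statement to ordinary spectral theory for the two bounded self-adjoint parts of $\hat{A}$, the entire construction hinging on the single algebraic observation that the real and imaginary parts of a \emph{normal} operator commute. Writing $\hat{C}=\frac{1}{2}(\hat{A}+\hat{A}^*)$ and $\hat{B}=\frac{1}{2i}(\hat{A}-\hat{A}^*)$, a direct expansion gives
\[
[\hat{C},\hat{B}]=\tfrac{1}{2i}\big(\hat{A}^*\hat{A}-\hat{A}\hat{A}^*\big),
\]
which vanishes precisely because $\hat{A}$ is normal. First I would record this computation, since every later step rests on it.

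Next I would upgrade the commutation of the operators to commutation of their spectral families. Each of $\hat{C},\hat{B}$ is bounded self-adjoint and so carries a right-continuous spectral family $\{\hat{P}_1(\varepsilon)\}$, $\{\hat{P}_2(\eta)\}$ by the usual spectral theorem, with each spectral projection a bounded Borel function of the corresponding operator. Since $\hat{C}$ and $\hat{B}$ commute, any bounded Borel function of $\hat{C}$ commutes with any bounded Borel function of $\hat{B}$; in particular $\hat{P}_1(\varepsilon)\hat{P}_2(\eta)=\hat{P}_2(\eta)\hat{P}_1(\varepsilon)$ for all $(\varepsilon,\eta)$. Hence each $\hat{P}(\varepsilon,\eta):=\hat{P}_1(\varepsilon)\hat{P}_2(\eta)$ is again an orthogonal projection, being a product of two commuting projections, and it commutes with both $\hat{C}$ and $\hat{B}$, hence with $\hat{A}=\hat{C}+i\hat{B}$.

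The four elementary properties then follow by transferring the corresponding properties of the single families through this commuting product. For (a), commutativity lets me regroup $\hat{P}_1(\varepsilon)\hat{P}_2(\eta)\hat{P}_1(\varepsilon')\hat{P}_2(\eta')=\hat{P}_1(\varepsilon)\hat{P}_1(\varepsilon')\hat{P}_2(\eta)\hat{P}_2(\eta')$ and then apply the nesting identity $\hat{P}_j(s)\hat{P}_j(s')=\hat{P}_j(\min\{s,s'\})$ valid for any monotone spectral family. For (b) and (c) I would use the bounds $\|\hat{C}\|,\|\hat{B}\|\le\|\hat{A}\|$, immediate from the defining formulas, so that the spectra of $\hat{C}$ and $\hat{B}$ lie in $[-\|\hat{A}\|,\|\hat{A}\|]$; this forces $\hat{P}_1(\varepsilon)=0$ (resp. $\hat{P}_2(\eta)=0$) below $-\|\hat{A}\|$ and $\hat{P}_1(\varepsilon)=\hat{P}_2(\eta)=I$ at and above $\|\hat{A}\|$, whence the stated values of the product. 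For (d) the right-continuity of the product follows from right-continuity of each factor together with the fact that a product of uniformly bounded, strongly convergent nets converges strongly to the product of the limits.

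The substantive step, and the place I expect the real work to sit, is (e): assembling the one-dimensional data into a genuine joint spectral measure on $\Rl^2$ and checking that integration against it reproduces $\hat{A}$. Here I would define the product spectral measure $\hat{E}$ on the Borel sets of $\Rl^2$ by first setting $\hat{E}\big((-\infty,\varepsilon]\times(-\infty,\eta]\big):=\hat{P}(\varepsilon,\eta)$ on rectangles and extending; the commutativity established above is exactly what guarantees that $\hat{E}$ is a well-defined projection-valued measure whose two marginals recover $\{\hat{P}_1\}$ and $\{\hat{P}_2\}$. One then has $\hat{C}=\int\varepsilon\,d\hat{E}$ and $\hat{B}=\int\eta\,d\hat{E}$, and linearity of the spectral integral yields
\[
\hat{A}=\hat{C}+i\hat{B}=\int_{\Rl^2}(\varepsilon+i\eta)\,d\hat{E}(\varepsilon,\eta),
\]
which is the claimed representation with $d\hat{E}(\varepsilon,\eta)=d(\hat{P}_1(\varepsilon)\hat{P}_2(\eta))$. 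The only genuine care needed is in justifying the existence and countable additivity of the joint measure and the passage from the iterated integral to a single integral over the plane; once the commuting product spectral family is in hand, this is standard Stone-type spectral calculus.
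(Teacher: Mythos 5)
Your proposal is correct: the commutator computation $[\hat{C},\hat{B}]=\tfrac{1}{2i}(\hat{A}^*\hat{A}-\hat{A}\hat{A}^*)$, the transfer of commutativity to the spectral projections via functional calculus, the norm bounds $\|\hat{C}\|,\|\hat{B}\|\le\|\hat{A}\|$ for properties b) and c), and the assembly of the commuting families into a product projection-valued measure for e) are all sound, and you rightly make explicit the normality hypothesis that the theorem's phrase ``bounded operator'' leaves implicit (without it $\hat{P}_1(\varepsilon)\hat{P}_2(\eta)$ need not even be a projection). The paper itself supplies no proof of this statement --- it defers entirely to the cited reference (Helmberg) --- and your argument is essentially the standard one given there, so the two approaches coincide.
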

The proof of this theorem can be found in \cite{Helmberg2008}.
\\
Given this definition of spectral decomposition for normal operators, it is now possible to define a spectral ordering. Considering two normal operators $\hat{E}=\hat{D}+i\hat{F}=\int\int \left( \alpha+i\beta \right) d(\hat{Q}_1(\alpha)\hat{Q}_2(\beta))$ and $\hat{A}=\int_{-\infty}^{\infty}\int_{-\infty}^{\infty}(\varepsilon+i \eta )d(\hat{P}(\varepsilon, \eta))$, we then define the spectral order as follows:
\be\label{equ:order1}
\hat{A}\geq_s\hat{E}\;\;\;\text{iff}\;\;\;\hat{P}_1(\varepsilon)\hat{P}_2(\eta)\leq\hat{Q}_1(\varepsilon)\hat{Q}_2(\eta)\text{ for all } (\varepsilon,  \eta)\in \Rl^2
\ee
If we consider the subspaces $\mathcal{M}_{\hat{P}}$ of the Hilbert space on which each of the individual projection operators project, the above condition is equivalent to
\be\label{equ:order2}
\hat{A}\geq_s\hat{E}\;\;\;\text{iff}\;\;\; \mathcal{M}_{\hat{P}_1(\varepsilon)}\cap \mathcal{M}_{\hat{P}_2(\eta)}\subseteq\mathcal{M}_{\hat{Q}_1(\alpha)}\cap \mathcal{M}_{\hat{Q}_2(\beta)}
\ee

However, property a) implies that for any two points $(\varepsilon,\eta)\in \Rl^2$ and $(\varepsilon^{'},\eta^{'})\in\Rl^2$ then
\be
\hat{P}(\varepsilon,\eta)\hat{P}(\varepsilon^{'},\eta^{'})=\hat{P}(\varepsilon,\eta)\text{ for }\varepsilon\leq \varepsilon^{'}\text{ and }\eta\leq \eta^{'}
\ee
Therefore, we could define
\be
\hat{P}(\varepsilon,\eta)\leq \hat{P}(\varepsilon^{'},\eta^{'})\text{ for }\varepsilon\leq \varepsilon^{'}\text{ and }\eta\leq \eta^{'}\text{ iff } \hat{P}(\varepsilon,\eta)\hat{P}(\varepsilon^{'},\eta^{'})=\hat{P}(\varepsilon,\eta)
\ee
The above reasoning shows that the spectral ordering of normal operators is intimately connected to the ordering of the self-adjoint components. 
\subsection{Ordering for the Complex Numbers}
For the case of self-adjoint operators, the spectral ordering implies an ordering of the respective spectra as follows:
\be
\hat{A}\geq_s\hat{B}\Rightarrow\lambda(\hat{A})\geq\lambda(\hat{B})
\ee
where the ordering on the right hand side is defined in $\Rl$. We would like to obtain a similar relation for normal operators, where the spectra now take its values in the complex numbers.
This is because such an ordering of the spectrum is needed when eventually defining normal operators as arrows from the state space to the topos analogue of the complex numbers (yet to be defined).

Thus we would like to define an ordering for the complex numbers compatible with the spectral ordering of normal operators. We know that $\Cl=\Rl+i\Rl$ so, in principle, we could define a partial order in terms of the order in $\Rl$ as follows:
\be
\label{complexordering}
a+ib\leq a_1+ib_1\;\;\text{iff}\;\; a\leq a_1\text{ and } b\leq b_1
\ee
However such an ordering, as will be clear later on, turns to be restrictive and incomplete. To obtain an adequate ordering for the complex numbers we first of all need to analyse whether an ordering on the spectra of normal operators is possible. 

We will perform two alternative analysis: one with respect to the projection operators in the spectral decomposition of normal operators, thus considering equation (\ref{equ:order1}) and, the other, in terms of the respective eigen subspaces, i.e. with ordering given by (\ref{equ:order2}).
We will see that the two analysis lead to the same definition.

\subsubsection{First Analysis}

We are now interested in understanding how the spectral order of normal operators is related to the order in their respective spectra.
In particular let us consider the self-adjoint operators related to $\hat{A}=\hat{C}+i\hat{B}$ and $\hat{E}=\hat{D}+i\hat{F}$, namely $\hat{A}^{'}=\hat{C}+\hat{B}$ and $\hat{E}^{'}=\hat{D}+\hat{F}$.
Moreover to really mimic the situation of the normal case we also assume that: $\hat{C}\hat{B}=\hat{B}\hat{C}$ and $\hat{D}\hat{F}=\hat{F}\hat{D}$\footnote{Since we are utilising the self-adjoint operators which comprise the normal operators $\hat{A}$ and $\hat{B}$.}.

We then obtain that if $\hat{A}^{'}\geq_s\hat{E}^{'}$ (where the ordering is now defined for self-adjoint operators, (see Appendix)) then $\lambda(\hat{A}^{'})\geq\lambda(\hat{E}^{'})$ (i.e. $\lambda(\hat{C}+\hat{B})\geq\lambda(\hat{D}+\hat{F})$).
Moreover if $\hat{A}^{'}\geq_s\hat{E}^{'}$ then $\hat{P}_1(\varepsilon)\hat{P}_2(\eta)\leq\hat{Q}_1(\varepsilon)\hat{Q}_2\eta)$ for all $(\varepsilon, \eta)\in \Rl^2$ (and vice versa), which is the same exact situation as for the normal operators.
We conclude the following set of implications:
\begin{enumerate}
\item If $\hat{A}\geq_s\hat{E}$ for normal operators, then $\hat{P}_1(\varepsilon)\hat{P}_2(\eta)\leq\hat{Q}_1(\varepsilon)\hat{Q}_2(\eta)$ for all $(\varepsilon, \eta)\in \Rl^2$, which implies that $\hat{A}^{'}\geq_s\hat{E}^{'}$ for the respective self-adjoint operators.
\item If  $\hat{A}^{'}\geq_s\hat{E}^{'}$ for self-adjoint operators, then $\hat{P}_1(\varepsilon)\hat{P}_2(\eta)\leq\hat{Q}_1(\varepsilon)\hat{Q}_2(\eta)$ for all $(\varepsilon, \eta)\in \Rl^2$,  which implies that $\hat{A}\geq_s\hat{E}$ for normal operators.
\end{enumerate}

However, since $\hat{A}^{'}\geq_s\hat{E}^{'}$ implies that $\lambda(\hat{A}^{'})\geq\lambda(\hat{E}^{'})$, i.e. $\lambda(\hat{C}+\hat{B})\geq\lambda(\hat{D}+\hat{F})$, we can then define the following ordering for the spectrum of normal operators:
\be
\lambda(\hat{C}+i\hat{B})\geq \lambda(\hat{D}+i\hat{F}) \text{ if } \lambda(\hat{C}+\hat{B})\geq\lambda(\hat{D}+\hat{F})
\ee

We thus obtain that if $\hat{A}\geq_s\hat{E}$, then $\lambda(\hat{C}+i\hat{B})\geq \lambda(\hat{D}+i\hat{F})$ in terms of the ordering given above\footnote{ In detail, we have that $\hat{A}\geq_s\hat{E}$ (for normal operators) implies $\hat{P}_1(\varepsilon)\hat{P}_2(\eta)\leq\hat{Q}_1(\varepsilon)\hat{Q}_2(\eta)$ which, in turn, implies that $\hat{A}^{'}\geq_s\hat{E}^{'}$ for the respective self-adjoint operators. As a consequence $ \lambda(\hat{C}+\hat{B})\geq\lambda(\hat{D}+\hat{F})$, which, from the above definition implies that $\lambda(\hat{C}+i\hat{B})\geq \lambda(\hat{D}+i\hat{F})$. It follows that we can now state that if $\hat{A}\geq_s\hat{E}$, then $\lambda(\hat{A})\geq\lambda(\hat{B})$ ($ \lambda(\hat{C}+i\hat{B})\geq\lambda(\hat{D}+i\hat{F})$).}.
\\
\begin{Definition}
Given two normal operators $ \hat{A}= \hat{C}+i\hat{B}$ and $\hat{E}=\hat{D}+i\hat{F}$, if $\hat{A}\geq_s \hat{E}$ with respect to the spectral order of normal operators defined in (\ref{equ:order1}) and (\ref{equ:order2}), then $\lambda(\hat{C}+i\hat{B})\geq \lambda(\hat{D}+i\hat{F}) $.
\end{Definition}
\subsubsection{Second Analysis}

Let us assume that we have two normal operators $\hat{C}=\hat{A}+i\hat{B}$ and $\hat{C}_1=\hat{A}_1+i\hat{B}_1$, such that $\hat{C}\geq_s\hat{C}_1$ which, according to the spectral theorem, implies that $\hat{E}^{\hat{C}}_{\lambda}\leq\hat{E}^{\hat{C}_1}_{\lambda}$.
Therefore, for each $\lambda\in \Cl$,  the vector space spanned by the eigenvectors of $\hat{E}_{\lambda}^{\hat{C}}$ is a subspace of the space spanned by the eigenvectors of $\hat{E}_{\lambda}^{\hat{C}_1}$. Now, since $[\hat{A},\hat{B}]=0$ and
$[\hat{A}_1,\hat{B}_1]=0$, then each of the commuting pairs has a common set of eigenvectors.
Let us take an eigenvector $|\psi\rangle$ common to both $\hat{A}$ and $\hat{B}$, which is obviously also an eigenvector of $\hat{C}$.
We then have $\left( \hat{A}+i\hat{B} \right) (|\psi\rangle)=\hat{A}|\psi\rangle+i\hat{B} |\psi\rangle= (\gamma+i\sigma)$.

Thus the question is: what is the relation between the space of eigenvectors of $\hat{A}$ and that of $\hat{B}$?
To this end let us simplify the situation and consider the sum $\hat{D}=\hat{A}+\hat{B}$. 
The space of eigenvectors of $\hat{A}+\hat{B}$  will certainly be smaller than the space of eigenvectors of either $\hat{A}$ or $\hat{B}$.
It will actually be the intersection of the space of eigenvectors of $\hat{A}$ and $\hat{B}$, since $(\hat{A}+\hat{B})|\psi\rangle=\hat{A}|\psi\rangle+\hat{B}|\psi\rangle$.
It follows that $\mathcal{M}_{\hat{E}^{\hat{D}}_{\lambda}}=\mathcal{M}_{\hat{E}^{\hat{A}}_{\gamma}}\cap \mathcal{M}_{\hat{E}^{\hat{B}}_{\sigma}}$ \footnote{This is equivalent to $\hat{E}^{\hat{D}}_{\lambda}=\hat{E}^{\hat{A}}_{\gamma}\wedge\hat{E}^{\hat{B}}_{\sigma} $, where $\hat{P}\hat{Q}=\hat{P}\wedge\hat{Q}$ for any two projection operators.}, for all $\lambda=\gamma+\sigma\in \Rl$.

Given another operator $\hat{D}_1=\hat{A}_1+\hat{B}_1$, such that $\hat{D}\geq_s\hat{D}_1$, the definition of the spectral ordering for self-adjoint operators implies that $\mathcal{M}_{\hat{E}^{\hat{A}}_{\gamma}}\cap\mathcal{M}_{\hat{E}^{\hat{B}}_{\sigma}}\subseteq \mathcal{M}_{\hat{E}^{\hat{A}_1}_{\gamma}}\cap\mathcal{M}_{\hat{E}^{\hat{B}_1}_{\sigma}} $. It follows that $ \hat{A}+\hat{B}\geq_s \hat{A}_1+\hat{B}_1$ and consequently $\lambda(\hat{A}+\hat{B})\geq\lambda(\hat{A}_1+\hat{B}_1)$.\\
We now go back to considering normal operators.
We assume that $\hat{C}\geq_s\hat{C}_1$, which from the definition of the spectral ordering of normal operators implies $\hat{E}^{\hat{C}}_{\lambda}\leq\hat{E}^{\hat{C}_1}_{\lambda}$. Since each eigenvalue of both $\hat{C}$ and $\hat{C}_1$ will be of the form $\lambda=\gamma+i\sigma$ and $\lambda_1=\gamma_1+i\sigma_1$, respectively, it is possible to uniquely define the following associations:
\ba
\lambda&\rightarrow&\mathcal{M}_{ \hat{E}^{\hat{A}}_{\gamma}}\cap \mathcal{M}_{\hat{E}^{\hat{B}}_{\sigma}}\\
\lambda_1&\rightarrow& \mathcal{M}_{\hat{E}^{\hat{A}}_{\gamma_1}}\cap\mathcal{M}_{ \hat{E}^{\hat{B}}_{\sigma_1}}
\ea
for all $\lambda\in sp(\hat{C})$ and $\lambda_1\in sp(\hat{C}_1)$ (this can obviously be extended to all the complex numbers). This means that each eigenvector of $\hat{C}$ will be isomorphic to one contained in the subspace spanned by $\hat{E}^{\hat{A}}\wedge \hat{E}^{\hat{B}}$ and similarly for the eigenvectors of $\hat{C}_1$.
As a result the subspaces (of the Hilbert space) $\mathcal{M}_{\hat{E}^{\hat{C}}_{\lambda}}$ and $\mathcal{M}_{\hat{E}^{\hat{A}}_{\gamma}}\cap \mathcal{M}_{\hat{E}^{\hat{B}}_{\sigma}}$, for all $\lambda=\gamma+i\sigma$ are isomorphic.
Therefore, if $\hat{C}\geq_s\hat{C}_1$ such that $\mathcal{M}_{\hat{E}^{\hat{C}}_{\lambda}}\subseteq\mathcal{M}_{\hat{E}^{\hat{C}_1}_{\lambda}}$, then $\mathcal{M}_{\hat{E}^{\hat{A}}_{\gamma}}\cap \mathcal{M}_{\hat{E}^{\hat{B}}_{\sigma}}\subseteq\mathcal{M}_{\hat{E}^{\hat{A}_1}_{\gamma_1}}\cap \mathcal{M}_{\hat{E}^{\hat{B_1}}_{\sigma_1}}$ and vice versa\footnote{All the above might be a direct consequence of the isomorphisms $\Cl\simeq \Rl\times\Rl$.}.
\\
However, since $\mathcal{M}_{\hat{E}^{\hat{A}}_{\gamma}}\cap \mathcal{M}_{\hat{E}^{\hat{B}}_{\sigma}}\subseteq\mathcal{M}_{\hat{E}^{\hat{A}_1}_{\gamma_1}}\cap \mathcal{M}_{\hat{E}^{\hat{B_1}}_{\sigma_1}}$ implies that $\hat{A}+\hat{B}\geq\hat{A}_1+\hat{B}_1$, it follows that $\lambda(\hat{A}+\hat{B})\geq\lambda(\hat{A}_1+\hat{B}_1)$.

What the above reasoning reveals is that it is possible to define an ordering on the spectrum of normal operators even if it consists of complex numbers. In particular, we can now define the following:
\be
\lambda(\hat{A}+i\hat{B})\geq \lambda(\hat{A}_1+i\hat{B}_1) \text{ if } \lambda(\hat{A}+\hat{B})\geq\lambda(\hat{A}_1+\hat{B}_1)
\ee
Since each normal operator is defined as a complex sum of self-adjoint operators, the ordering is well defined for all normal operators.

Therefore, given two normal operators $\hat{C}$ and $\hat{C}_1$, then $\hat{C}\geq_s\hat{C}_1$ iff $\lambda(\hat{A}+\hat{B})\geq\lambda(\hat{A}_1+\hat{B}_1)$. 
We can make two statements:
\begin{itemize}
\item [i)] If $\hat{C}\geq_s\hat{C}_1$ then $\mathcal{M}_{\hat{E}^{\hat{A}}_{\gamma}}\cap \mathcal{M}_{\hat{E}^{\hat{B}}_{\sigma}}\subseteq\mathcal{M}_{\hat{E}^{\hat{A}_1}_{\gamma_1}}\cap \mathcal{M}_{\hat{E}^{\hat{B_1}}_{\sigma_1}}$, therefore $\hat{A}+\hat{B}\geq\hat{A}_1+\hat{B}_1$ which implies $\lambda(\hat{A}+\hat{B})\geq\lambda(\hat{A}_1+\hat{B}_1)$.

\item[ii)] If $\lambda(\hat{A}+\hat{B})\geq\lambda(\hat{A}_1+\hat{B}_1)$, then $\hat{A}+\hat{B}\geq \hat{A}_1+\hat{B}_1$ which implies $\mathcal{M}_{\hat{E}^{\hat{A}}_{\gamma}}\cap \mathcal{M}_{\hat{E}^{\hat{B}}_{\sigma}}\subseteq\mathcal{M}_{\hat{E}^{\hat{A}_1}_{\gamma_1}}\cap \mathcal{M}_{\hat{E}^{\hat{B_1}}_{\sigma_1}}$, therefore $\hat{C}\geq_s\hat{C}_1$.

\end{itemize}
It follows that if $\hat{C}\geq_s\hat{C}_1$, then $\lambda(\hat{A}+i\hat{B})\geq \lambda(\hat{A}_1+i\hat{B}_1)$. 

Given the results of our two analyses we can now define an ordering for the complex numbers as follows:

\begin{Definition}\label{def:order}
Given two complex numbers $\lambda_1=\epsilon_1+i\eta_1$ and $\lambda=\epsilon+i\eta$, then we say that
\be
\lambda_1\geq\lambda\text{ if } (\epsilon_1+\eta_1)\geq(\epsilon+\eta)
\ee
Where $(\epsilon_1+\eta_1)\geq(\epsilon+\eta)$ obeys the usual ordering of the reals.

\end{Definition}

\subsection{An Example}
\label{example1}

An example of the operator ordering we have defined can be illustrated by two non-self-adjoint bounded operators
with finite spectra.
Let us consider a two-state system with non-self-adjoint operator
\begin{equation}
\hat{O}_z = \left(
\begin{array}{c c}
1 &  0 \\
0 & -i
\end{array}
\right)
\end{equation}
and its norm squared
\begin{equation}
\hat{O}^2_z = \left(
\begin{array}{c c}
1 & 0 \\
0 & 1
\end{array}
\right)
\end{equation}

We can decompose $\hat{O}_z$ into two matrices with eigenprojectors $\hat{P}_1, \hat{P}_2$ and $\hat{Q_1}, \hat{Q}_2$:
\begin{equation}
\hat{O}_z = \hat{A} + i \hat{B} = \left(
\begin{array}{c c}
1 & 0 \\
0 & 0
\end{array}
\right)
+ i \left(
\begin{array}{c c}
0 & 0 \\
0 & -1
\end{array}
\right)
\end{equation}

The spectral family for $\lambda = \epsilon + i \eta$ is then as follows:
\begin{equation}
\label{specfam}
\hat{E}_{\lambda}^{\hat{O}_z} = \left\{
	\begin{array}{l l}
	\hat{0} \hat{0} & \text{if } \epsilon < 0, \eta < -1 \\
	\hat{0} \hat{Q}_2 & \text{if } \epsilon < 0, -1 \le \eta < 0 \\
	\hat{0} \left( \hat{Q}_1 + \hat{Q}_2 \right) & \text{if } \epsilon < 0, 0 \le \eta \\
	\hat{P}_2 \hat{0} & \text{if } 0 \le \epsilon < 1, \eta < -1 \\
	\hat{P}_2 \hat{Q}_2 & \text{if } 0 \le \epsilon < 1, -1 \le \eta < 0 \\
	\hat{P}_2 \left( \hat{Q}_1 + \hat{Q}_2 \right) & \text{if } 0 \le \epsilon < 1, 0 \le \eta \\
	\left( \hat{P}_1 + \hat{P}_2 \right) \hat{0} & \text{if } 1 \le \epsilon, \eta < -1 \\
	\left( \hat{P}_1 + \hat{P}_2 \right) \hat{Q}_2 & \text{if } 1 \le \epsilon, -1 \le \eta < 0 \\
	\left( \hat{P}_1 + \hat{P}_2 \right) \left( \hat{Q}_1 + \hat{Q}_2 \right) & \text{if } 1 \le \epsilon, 0 \le \eta \\
	\end{array}
	\right.
\end{equation}

and

\begin{equation*}
\hat{E}_{\lambda}^{\hat{O}_z^2} = \left\{
        \begin{array}{l l}
        \hat{0} \hat{0} & \text{if } \epsilon < 1, \eta < 0 \\
        \hat{0} \hat{Q} & \text{if } \epsilon < 1, 0 \le \eta \\
        \left( \hat{P}_1 + \hat{P}_2 \right) \hat{0} & \text{if } \epsilon \ge 1, \eta < 0 \\
        \left( \hat{P}_1 + \hat{P}_2 \right) \hat{Q} & \text{if } \epsilon \ge 1, 0 \le \eta \\
        \end{array}
        \right.
\end{equation*}
where $\hat{Q} \equiv \hat{Q}_1 + \hat{Q}_2$.
By comparing the spectral families in a piecewise manner,
one can see that for any breakdown of $\epsilon,\eta$, we have
that $ \hat{E}_{\lambda}^{\hat{O}_z} \ge  \hat{E}_{\lambda}^{\hat{O}_z^2} $
and so $\hat{O}_z \le_{S} \hat{O}_z^2$.

By the results shown above, in order to be able to compare these operators, we can also
write the spectral decomposition of the sum of the real and imaginary operator parts.
For the first this becomes
\begin{equation}
\hat{A} + \hat{B} = \left(
\begin{array}{c c}
1 &  0 \\
0 & -1
\end{array}
\right)
\end{equation}
while the second stays the same.
The spectral decomposition of these operators is then
\begin{equation*}
\hat{E}_{\lambda}^{\hat{A}+\hat{B}} = \left\{
        \begin{array}{l l}
        \hat{0} & \text{if } \lambda < -1 \\
        \hat{R}_2 & \text{if } -1 \le \lambda < 1 \\
        \left( \hat{R}_1 + \hat{R}_2 \right) & \text{if } 1 \le \lambda \\
        \end{array}
        \right.
\end{equation*}
and
\begin{equation*}
\hat{E}_{\lambda}^{|\hat{O}_z^2|} = \left\{
        \begin{array}{l l}
        \hat{0} & \text{if } \lambda < 1 \\
        \left( \hat{R}_1 + \hat{R}_2 \right) & \text{if } \lambda \ge 1 \\
        \end{array}
        \right.
\end{equation*}
where $\hat{R}_1,\hat{R}_2$ are the projectors in the non-complex space.

We can see that the natural spectral ordering implies
\begin{equation*}
\left\{
	\begin{array}{c c}
	\hat{E}_{\lambda}^{|\hat{O}_z^2|} = \hat{E}_{\lambda}^{\hat{A}+\hat{B}} & \text{if } \lambda < -1 \\
	\hat{E}_{\lambda}^{|\hat{O}_z^2|} \le \hat{E}_{\lambda}^{\hat{A}+\hat{B}} & \text{if } -1 \le \lambda < 1 \\
	\hat{E}_{\lambda}^{|\hat{O}_z^2|} = \hat{E}_{\lambda}^{\hat{A}+\hat{B}} & \text{if } 1 \le \lambda \\
	\end{array}
\right.
\end{equation*}
which implies that ${\hat{A}+\hat{B}} \le_{s} {|\hat{O}_z^2|}$ and, by definition (\ref{def:order}),
$\hat{O}_z \le_{s} \hat{O}_z^2$.
Therefore, both treatments are equivalent.

\section{Daseinisation of Normal Operators}

In this section we try to extend the daseinisation of self-adjoint operators
to the daseinisation of normal operators. To this end we need to extend the concept of the Gel'fand Transform to normal operators.

\subsection{The Gel'fand Transform}

The Gel'fand representation theorem states that for the Gel'fand spectrum $\Sigma_V$ of a self-adjoint operator $\hat{A}$,
there exists an isomorphism given by \cite{Doering2008}
\begin{align*}
V &\rightarrow C(\Sigma_V) \\
\hat{A} &\rightarrow \bar{A} \equiv \lambda(\hat{A})
\end{align*}
where $\bar{A}$ is also denoted as the {\it Gel'fand transform} of the self-adjoint operator $\hat{A}$.

Firstly, do we have a Gel'fand representation theorem for normal operators?
Indeed, we do \cite{Whitley1968}.
For the closed $^*$-sub-algebra generated by a normal operator $T$, $T^{*}$, and the identity element,
there exists a mapping onto the space of $\Sigma(T)$ given by
\begin{align*}
\mathcal{A} &\rightarrow C(\Sigma(T)) \\
\hat{T} &\rightarrow \bar{T} \equiv \lambda_C(\hat{T})
\end{align*}
where $ C(\Sigma(T))$ is the space of complex continuous functions on $\Sigma(T)$.

We therefore intend to define the inner and outer daseinisations in the same manner as for self-adjoint operators, namely in terms of the Gel'fand transforms:
\ba
\overline{\delta^o(\hat{C})}_V:\us_V&\rightarrow&\Cl\\
\overline{\delta^i(\hat{C})}_V:\us_V&\rightarrow&\Cl
\ea
However, in the case of self-adjoint operator we know that for a sub-context $V^{'}\subseteq V$, then $\delta^o(\hat{A})_{V^{'}}\geq\delta^o(\hat{A})_{V}$ and $\delta^i(\hat{A})_{V^{'}}\leq\delta^i(\hat{A})_{V}$. These relations imply that the respective Gel'fand transforms undergo the following relations: $\overline{\delta^o(\hat{A})_{V^{'}}}(\lambda_{|V^{'}})\geq\overline{\delta^o(\hat{A})_{V}}(\lambda)$ and $\overline{\delta^i(\hat{A})_{V^{'}}}(\lambda_{|V^{'}})\leq\overline{\delta^i(\hat{A})_{V}}(\lambda)$. We want similar relations to hold for normal operators.

\subsection{Daseinisation}
We know that for each self-adjoint operator $\hat{A}$ we have a spectral family $\{\hat{E}^{\hat{A}}_{\lambda} | \forall \lambda\in\sigma(\hat{A})\}$ which can be extended to all $\lambda\in \Rl$ (see Appendix for details).
Moreover, it was shown by de Groote 
(\cite{Groote2007b}, \cite{Groote2007a}),
that if $\lambda\rightarrow  \hat{E}_{\lambda}$ is a spectral family in $P(\mh)$ (or, equivalently, a self-adjoint operator $\hat{A}$),
then, for each context $V\in\mv(\mh)$ , the maps
\ba
\lambda&\rightarrow& \delta^i(\hat{E}_{\lambda})_V\\
\lambda&\rightarrow&\bigwedge_{\mu>\lambda} \delta^o(\hat{E}_{\mu})_V
\ea
are also spectral families. Since these spectral families lie in $P(V)$ they define
self-adjoint operators in $V$.

Similarly, for a normal operator $\hat{A}$ in $\mb(\mh)$, then there exists a unique spectral family $\{\hat{P}(\varepsilon, \eta):=\hat{P}_1(\varepsilon)\hat{P}_2(\eta)|(\varepsilon, \eta)\in\Rl^2\}$ of projection operators. 
Thus, applying the exact same proof as was used in \cite{Groote2007b}, it follows that the following are themselves spectral families
\ba
\lambda=\varepsilon+i \eta&\rightarrow& \delta^i(\hat{P}(\varepsilon, \eta))_V\\
\lambda=\varepsilon+i \eta&\rightarrow& \bigwedge_{\mu>\lambda} \delta^o(\hat{P}(\varepsilon^{'}, \eta^{'}))_V
\ea
where $\mu=\varepsilon^{'}+i\eta^{'}$, and the ordering $\mu>\lambda$ is given in definition \ref{def:order}.
Using this, we can define the daseinisation of normal operators as follows:
\begin{Definition}
Let $\hat{A}=\hat{C}+i\hat{D}$ be an arbitrary normal operator. Then the outer and inner daseinisations of $\hat{A}$ are defined in each sub-context $V$ as:
\ba
\delta^o(\hat{A})_V&:=&\int_{\Cl}\lambda d(\delta^i(\hat{E}_{\lambda})_V=\int_{\Cl}(\varepsilon+i\eta)d(\delta^i(\hat{P}(\varepsilon, \eta)_V))\\
\delta^i(\hat{A})_V&:=&\int_{\Cl}\lambda d \left(\bigwedge_{\mu>\lambda}\delta^o(\hat{E}_{\mu})_V\right) = \int_{\Cl}(\varepsilon+i\eta)d \left(\bigwedge_{\mu>\lambda}\delta^o(\hat{P}(\varepsilon^{'},\eta)_V) \right)
\ea
respectively.
\end{Definition}

Since for all $V\in \mv(\mh)$
\ba
\delta^i(\hat{P}\wedge\hat{Q})_V&\leq& \delta^i(\hat{P})_V\wedge\delta^i(\hat{Q})_V\\
\delta^o(\hat{P}\wedge\hat{Q})_V&\geq& \delta^o(\hat{P})_V\wedge\delta^o(\hat{Q})_V
\ea
it follows that
\ba
\delta^o(\hat{A})_V&\geq &\delta^o(\hat{C})_V+i\delta^o(\hat{D})_V\\
\delta^i(\hat{A})_V&\leq&\delta^i(\hat{C})_V+i\delta^i(\hat{D})_V
\ea

Moreover, from the definition of inner and outer daseinisation of projection operators, for all $V\in\mv(\mh)$ and all $\lambda\in\Cl$ we have
\be
\bigwedge_{\mu\geq\lambda}\delta^o(\hat{E}(\mu))_V\geq\delta^i(\hat{E}(\lambda))_V
\ee

Therefore, from the definition of spectral order it follows that
\be
\delta^i(\hat{A})_V\leq_s\delta^o(\hat{A})_V
\ee
We would now like to analyse the spectrum of these operators. 
As a consequence of the spectral theorem and the fact that both $\delta^i(\hat{A})_V$ and $\delta^o(\hat{A})_V$ are in $V$, it is possible to represent them through the Gel'fand transform as follows:
\ba
\overline{\delta^o(\hat{A})_V}:\us_V&\rightarrow&sp(\delta^o(\hat{A})_V)\\
\overline{\delta^i(\hat{A})_V}:\us_V&\rightarrow&sp(\delta^i(\hat{A})_V)
\ea

Since $sp(\delta^o(\hat{A})_V)\subseteq \Cl$ and $sp(\delta^i(\hat{A})_V)\subseteq\Cl$ we can generalise the above maps to
\ba
\overline{\delta^o(\hat{A})_V}:\us_V&\rightarrow&\Cl\\
\overline{\delta^i(\hat{A})_V}:\us_V&\rightarrow&\Cl
\ea
However, the relation $\delta^i(\hat{A})_V\leq_s\delta^o(\hat{A})_V$ together with the spectral ordering implies that 
for all $V\in\mv(\mh)$,  $\overline{\delta^i(\hat{A})_V}(\lambda)\leq\overline{\delta^o(\hat{A})_V}(\lambda)$ (where again the ordering is the one defined in \ref{def:order}).

Moreover, as we go to smaller sub-algebras $V^{'}\subseteq V$, since $ \delta^o(\hat{E}(\lambda))_{V^{'}}\geq\delta^o(\hat{E}(\lambda))_V$ and $ \delta^i(\hat{E}(\lambda))_{V^{'}}\leq  \delta^i(\hat{E}(\lambda))_{V}$, then $\delta^i(\hat{A})_V\geq_s\delta^i(\hat{A})_{V^{'}}$ while $\delta^o(\hat{A})_V\leq_s\delta^o(\hat{A})_{V^{'}}$. 
Thus, inner daseinisation preserves the order while outer daseinisation reverses the order. 
As a consequence we obtain the following:
\ba
\overline{\delta^o(\hat{A})_V}(\lambda)&\leq&\overline{\delta^o(\hat{A})_{V^{'}}}(\lambda_{|V^{'}})\\
\overline{\delta^i(\hat{A})_V}(\lambda)&\geq&\overline{\delta^i(\hat{A})_{V^{'}}}(\lambda_{|V^{'}})
\ea
where $\overline{\delta^o(\hat{A})_V}(\lambda):=\lambda(\delta^o(\hat{A})_V)$ while $\overline{\delta^o(\hat{A})_{V^{'}}}\lambda_{|V^{'}}:=\lambda_{|V^{'}}(\delta^o(\hat{A})_{V^{'}})$.

\subsection{Daseinisation of our Example State}

Building on to our example in section \ref{example1}, we can explore the
daseinisation of the operator
\begin{equation}
\hat{O}_z = \left(
\begin{array}{c c}
1 &  0 \\
0 & -i
\end{array}
\right)
= \left(
\begin{array}{c c}
1 & 0 \\
0 & 0
\end{array}
\right)
+ i \left(
\begin{array}{c c}
0 & 0 \\
0 & -1
\end{array}
\right)
\end{equation}

We have four projectors, corresponding to the projectors
$\hat{P}_1,\hat{P}_2,\hat{Q}_1,\hat{Q}_2$ above, along with
the projectors $\hat{0}$ and $\hat{1}$.

Therefore, we can use the daseinisation of the spectral family
of our operator (\ref{specfam}) to define the daseinisation;
our problem is simply reduced to the daseinisations
$\delta^0(\hat{P}_1)_V,\delta^0(\hat{P}_2)_V,\delta^0(\hat{Q}_1)_V, \delta^0(\hat{Q}_2)_V$.

Let's choose a sub-context $V$ spanned by the projection operators
$\hat{P}_1+\hat{P}_2,\hat{Q}_1,\hat{Q}_2$.
Then the only nontrivial outer daseinisations are
$\delta^0(\hat{P}_1)_V = \delta^0(\hat{P}_2)_V = \hat{P}_1+\hat{P}_2$.

The spectral family of our daseinised operator is
\begin{equation}
\label{specfam2}
\delta^0 \left( \hat{E}_{\lambda}^{\hat{O}_z} \right)_V = \left\{
        \begin{array}{l l}
        \hat{0} \hat{0} & \text{if } \epsilon < 0, \eta < -1 \\
        \hat{0} \hat{Q}_2 & \text{if } \epsilon < 0, -1 \le \eta < 0 \\
        \hat{0} \left( \hat{Q}_1 + \hat{Q}_2 \right) & \text{if } \epsilon < 0, 0 \le \eta \\
        \left( \hat{P}_1 + \hat{P}_2 \right) \hat{0} & \text{if } 0 \le \epsilon, \eta < -1 \\
        \left( \hat{P}_1 + \hat{P}_2 \right) \hat{Q}_2 & \text{if } 0 \le \epsilon, -1 \le \eta < 0 \\
        \left( \hat{P}_1 + \hat{P}_2 \right) \left( \hat{Q}_1 + \hat{Q}_2 \right) & \text{if } 0 \le \epsilon, 0 \le \eta \\
        \end{array}
        \right.
\end{equation}

Recall that the spectral family of the operator $\hat{E}_{\lambda}^{\hat{O}_z}$ was
\begin{equation*}
\hat{E}_{\lambda}^{\hat{O}_z} = \left\{
        \begin{array}{l l}
        \hat{0} \hat{0} & \text{if } \epsilon < 0, \eta < -1 \\
        \hat{0} \hat{Q}_2 & \text{if } \epsilon < 0, -1 \le \eta < 0 \\
        \hat{0} \left( \hat{Q}_1 + \hat{Q}_2 \right) & \text{if } \epsilon < 0, 0 \le \eta \\
        \hat{P}_2 \hat{0} & \text{if } 0 \le \epsilon < 1, \eta < -1 \\
        \hat{P}_2 \hat{Q}_2 & \text{if } 0 \le \epsilon < 1, -1 \le \eta < 0 \\
        \hat{P}_2 \left( \hat{Q}_1 + \hat{Q}_2 \right) & \text{if } 0 \le \epsilon < 1, 0 \le \eta \\
        \left( \hat{P}_1 + \hat{P}_2 \right) \hat{0} & \text{if } 1 \le \epsilon, \eta < -1 \\
        \left( \hat{P}_1 + \hat{P}_2 \right) \hat{Q}_2 & \text{if } 1 \le \epsilon, -1 \le \eta < 0 \\
        \left( \hat{P}_1 + \hat{P}_2 \right) \left( \hat{Q}_1 + \hat{Q}_2 \right) & \text{if } 1 \le \epsilon, 0 \le \eta \\
        \end{array}
        \right.
\end{equation*}

We can see that for any $\epsilon, \eta$, $\hat{E}_{\lambda}^{\hat{O}_z} \le  \delta^0 \left( \hat{E}_{\lambda}^{\hat{O}_z} \right)_V$ and therefore $\delta^0 \left(\hat{O}_z \right)_V \le_s \hat{O}_z$ as desired.

\section{Complex Numbers in a Topos}\label{sec:complex}
Complex numbers in a topos have been previously defined in various papers \cite{maklane}, \cite{Banasche1} and \cite{Banasche2}, however the definition of these objects did not take into account the spectra of normal operators. In the present situation, since our ultimate aim is to define normal operators as maps from the state space to the (complex) quantity value object, we have to resort to a different characterisation of complex numbers in a topos. This will be very similar to how the real quantity value object is defined.

\begin{Definition}
The complex quantity value object is the presheaf $\underline{\Cl}^{\leftrightarrow}$ which has as
\begin{itemize}
\item Objects: For all contexts $V\in\mv(\mh)$,
\be
\underline{\Cl}^{\leftrightarrow}_V:=\{(\mu, \nu)|\mu\in OP(\downarrow V, \Cl), \nu\in OR(\downarrow V, \Cl); \mu\leq \nu\}
\ee
Where $OP$ denotes the set of order preserving functions, while $OR$ the set of order reversing functions.
\item Morphisms: Given a map between contexts $i_{V^{'}V}:V^{'}\subseteq V$ the corresponding morphisms are
\ba
\underline{\Cl}^{\leftrightarrow}(i_{V^{'}V}):\underline{\Cl}^{\leftrightarrow}_V&\rightarrow& \underline{\Cl}^{\leftrightarrow}_{V^{'}}\\
(\mu, \nu)&\mapsto&(\mu_{|V^{'}}, \nu_{|V^{'}})
\ea
\end{itemize}
where $\mu_{V^{'}}:\downarrow V^{'}\rightarrow \Cl$ is simply the restriction of $\mu$ to the sub-context $V^{'}$.
\end{Definition}

This definition suits our purpose: we need to preserve the fact that under outer daseinisation we obtain the inequality
 $\overline{\delta^o(\hat{A})_V}(\lambda)\leq\overline{\delta^o(\hat{A})_{V^{'}}}\lambda_{|V^{'}}$
 and under inner daseinisation we have 
$\overline{\delta^i(\hat{A})_V}(\lambda)\geq\overline{\delta^i(\hat{A})_{V^{'}}}\lambda_{|V^{'}}$.

Furthermore, we can use the isometry
\be
\Rl\times \Rl\simeq\Cl
\ee
as a guideline to rigorously define the transformation between the quantity value object
and the complex quantity value object.
Recall that the quantity value object is a monoid (a semigroup with unit) and, as such, it is equipped with the summation operation 
\be
+:\underline{\Rl}^{\leftrightarrow}\times \underline{\Rl}^{\leftrightarrow}\rightarrow \underline{\Rl}^{\leftrightarrow}
\ee
which is defined for each $V\in \mv(\mh)$ as

\ba
+_V:\underline{\Rl}^{\leftrightarrow}_V\times \underline{\Rl}^{\leftrightarrow}_V&\rightarrow& \underline{\Rl}^{\leftrightarrow}_V\\
\Big((\mu_1,\nu_1),(\mu_2,\nu_2)\Big)&\mapsto&(\mu_1+\mu_2,\nu_1+\nu_2)=(\mu, \nu)
\ea
Here $(\mu_1+\mu_2,\nu_1+\nu_2)$ is defined, for each $V^{'}\subseteq V$ as $(\mu_1(V)+\mu_2(V),\nu_1(V)+\nu_2(V))$.
We can make use of this to define the map
\ba
f_V:\underline{\Rl}^{\leftrightarrow}_V\times \underline{\Rl}^{\leftrightarrow}_V&\rightarrow& \underline{\Cl}^{\leftrightarrow}_V\\
\Big((\mu_1,\nu_1),(\mu_2,\nu_2)\Big)&\mapsto&(\mu_1+i\mu_2, \nu_1+i\nu_2)
\ea
Even in this case, for each $V^{'}\subseteq V$ the above complex sum should be intended as
\ba
\mu (V^{'})&:=&\mu_1(V^{'})+i\mu_2(V^{'})\\
\nu (V^{'})&:=&\nu_1(V^{'})+i\nu_2(V^{'})
\ea
for each context  $V\in\mv(\mh)$.\\Thus the map $f_V$ takes the pair $((\mu_1, \nu_1), (\mu_2, \nu_2))\in\ps{R}^{\leftrightarrow}\times\ps{\Rl^{\leftrightarrow}}$ and maps it to the element $(\mu, \nu)\in\ps{\Cl^{\leftrightarrow}}$ consisting of a pair of order reversing and order preserving maps such that for each $V\in\mv(\mh)$. Such a pair is defined as follows:
$\mu (V):=\mu_1(V)+i\mu_2(V)$ and $\nu (V^{'}):=\nu_1(V)+i\nu_2(V)$.\\
Therefore, we have a relationship
\be
f:\underline{\Rl}^{\leftrightarrow}\times \underline{\Rl}^{\leftrightarrow}\rightarrow \underline{\Cl}^{\leftrightarrow}
\ee
between the quantity value object and the complex quantity value object.

First of all we will show that $f$ is indeed a natural transformation. To this end we need to show that the following diagram commutes
\[\xymatrix{
\underline{\Rl}^{\leftrightarrow}_V\times \underline{\Rl}^{\leftrightarrow}_V\ar[rr]^{f_V}\ar[dd]_g&&\underline{\Cl}^{\leftrightarrow}_V\ar[dd]^h\\
&&\\
\underline{\Rl}^{\leftrightarrow}_{V^{'}}\times \underline{\Rl}^{\leftrightarrow}_{V^{'}}\ar[rr]_{f_{V^{'}}}&&\underline{\Cl}^{\leftrightarrow}_{V^{'}}\\
}\]
where $h$ and $g$ are the presheaf maps, i.e. we want to show that $h\circ f_V=f_{V^{'}}\circ g$. Let us consider an element $\Big((\mu_1,\nu_1),(\mu_2,\nu_2)\Big)$. Chasing the diagram around in one direction we have
\be
h\circ f_V\Big((\mu_1,\nu_1),(\mu_2,\nu_2)\Big)=h(\mu_1+i\mu_2, \nu_1+i\nu_2)=\Big((\mu_1+i\mu_2)_{|V^{'}}, (\nu_1+i\nu_2)_{|V^{'}}\Big)
\ee
where $(\mu_1+i\mu_2)_{|V^{'}}=(\mu_1)_{|V^{'}}+i(\mu_2)_{|V^{'}}$.
On the other hand
\be
f_{V^{'}}\circ g\Big((\mu_1,\nu_1),(\mu_2,\nu_2)\Big)=f_{V^{'}}\Big((\mu_1,\nu_1)_{|V^{'}},(\mu_2,\nu_2)_{|V^{'}}\Big)=\Big((\mu_1)_{|V^{'}}+i(\mu_2)_{|V^{'}},(\nu_1)_{|V^{'}}+i(\nu_2)_{|V^{'}}\Big)
\ee
Thus indeed $f$ is a natural transformation.\\

It can also be shown that $f$ is 1:1. 
If $f_V\Big((\mu_1,\nu_1),(\mu_2,\nu_2)\Big)=f_V\Big((\mu^{'}_1,\nu^{'}_1),(\mu^{'}_2,\nu^{'}_2)\Big)$ then $(\mu_1+i\mu_2, \nu_1+i\nu_2)=(\mu^{'}_1+i\mu^{'}_2, \nu^{'}_1+i\nu^{'}_2)$. Therefore $(\mu_1+i\mu_2)=(\mu^{'}_1+i\mu^{'}_2)$ and $(\nu_1+i\nu_2)=(\nu^{'}_1+i\nu^{'}_2)$.  By evaluating such maps at each $V$ it follows that
$\Big((\mu_1,\nu_1),(\mu_2,\nu_2)\Big)=\Big((\mu^{'}_1,\nu^{'}_1),(\mu^{'}_2,\nu^{'}_2)\Big)$.
However, the map $f$ is not onto. 
This is because the ordering that we defined on $\underline{\Cl}^{\leftrightarrow}$ is more general than the ordering coming from pairs of order reversing and order preserving maps.
In fact, for $\mu_1+i\mu_2\leq\mu_3+i\mu_4$ we only require that
$\mu_1+\mu_2\leq\mu_3+\mu_4$, not that $\mu_1\leq \mu_3$ and $\mu_2\leq \mu_4$.
Obviously, the latter relation implies the former, but the converse is not true. Thus it follows that $\ps{\Rl^{\leftrightarrow}}\times\ps{\Rl^{\leftrightarrow}}$ will be isomorphic to a sub-object of $\ps{\Cl^{\leftrightarrow}}$.

\subsection{Properties of $\underline{\Cl}^{\leftrightarrow}$}
Given the object $\Cl^{\leftrightarrow}$ defined above we are interested in analysing what types of properties it has. In particular, we know that $\Cl$ is a group and it is also a vector space over the reals. Can the same be said for $\ps{\Cl^{\leftrightarrow}}$? We first analyse whether the usual operations present in $\Cl $ are also present in $\ps{ \Cl}^{\leftrightarrow}$.
\begin{enumerate}
\item {\it Conjugation}.
The most obvious way of defining conjugation would be the following: \\
For each $V\in \mv(\mh)$ we have
\ba
^*_V:\ps{\Cl^{\leftrightarrow}}_V&\rightarrow&\ps{\Cl^{\leftrightarrow}}_V\\
(\mu, \nu)&\mapsto&(\mu^*, \nu^*)
\ea
where $\mu^*(V):=(\mu(V))^*$ However, if $\mu$ is order preserving it is not necessarily the case that $\mu^*$ is. 
This is related to the same problem which prevents us from defining subtraction in $\ps{\Rl^{\leftrightarrow}}$.
\item {\it Sum} 
\begin{Definition}
The sum operation is defined to be a map $+:\ps{\Cl^{\leftrightarrow}}\times \ps{\Cl^{\leftrightarrow}}\rightarrow \ps{\Cl^{\leftrightarrow}}$ such that for each $V\in \mv(\mh)$ we have
\ba
+_V:\ps{\Cl^{\leftrightarrow}}_V\times \ps{\Cl^{\leftrightarrow}}_V&\rightarrow& \ps{\Cl^{\leftrightarrow}}_V\\
\big((\mu, \nu), (\mu^{'}, \nu^{'})\big)&\mapsto&(\mu+\mu^{'}, \nu+\nu^{'})
\ea
Where $(\mu+\mu^{'}, \nu+\nu^{'})(V^{'}):=(\mu+\mu^{'}(V^{'}),  \nu+\nu^{'}(V^{'}))=(\mu(V^{'})+\mu^{'}(V^{'}),  \nu(V^{'})+\nu^{'}(V^{'}))$ for all $V\in \mv(\mh)$.
\end{Definition}
It is straightforward to see that, as defined above, the maps $(\mu+\mu^{'},  \nu+\nu^{'})$ are a pair of order preserving and order reversing maps.
\item {\it Multiplication}. Similar to the case for $\ps{\Rl^{\leftrightarrow}}$ we can not define multiplication.
\item {\it Subtraction}. Similar to the case for $\ps{\Rl^{\leftrightarrow}}$ we can not define subtraction.
\end{enumerate}
Because of the above properties $\ps{\Cl^{\leftrightarrow}}$ is only a monoid (semigroup with a unit). However, as we will see later on, it is possible to transform such a semigroup into a group through the process of Grothendieck k-extension, obtaining the object $k(\ps{\Cl^{\leftrightarrow}})$. As we will see, this object can be seen as a vector space over $\ps{\Rl}$. 

It would be of particular interest to understand what and if $\ps{\Cl^{\leftrightarrow}}$ has any topological properties. To this end we would have to define $
\Cl^{\leftrightarrow}$ as an internal locale.
Work in this direction has been partially done in \cite{dutch}.  Here the authors introduce an alternative {\it internal} formulation of quantum topos theory. In particular, given a $C^*$-algebra $\a$, they define the internal\footnote{Note that in this internal aproach one is working with co-presheaves instead of presheaves.} $C^*$ algebra $\tilde{\a}\in Sets^{\mathcal{C}(\a)}$ where $\mathcal{C}(\a)$ is the category of abelian sub-algebras of $\a$ ordered by inclusion. 
Given such an internal algebra they construct its spectrum $\ts$ and show that it is an internal locale. This enables them to define the (internal) topos of sheaves $Sh(\ts)$. 
They then construct the locale $\ps{\Rl}_{Sh(\ts)}$ which has as associated sheaf the sheaf $pt(\ps{\Rl}_{Sh(\ts)})$ of Dedekind Reals in $Sh(\ts)$. The detailed way in which $pt(\ps{\Rl}_{Sh(\ts)})$ is defined can be found in \cite{maklane}. 
Given the internal local $\ps{\Rl}_{Sh(\ts)}$ it is now possible to construct the internal locale $\ps{\Cl}_{Sh(\ts)}$ whose associated sheaf would be the complex number object in $Sh(\ts)$ defined by $pt(\ps{\Cl}_{Sh(\ts)})\simeq pt(\ps{\Rl}_{Sh(\ts)})\times pt(\ps{\Rl}_{Sh(\ts)}) $. A very in depth analysis of the complex number object can be found in \cite{Banasche1}, \cite{Banasche2} and \cite{chrisa}.
What is interesting for us is that the object $pt(\ps{\Rl}_{Sh(\ts)})$ was shown (\cite{dutch}) to be related to $\ps{\Rl^{\leftrightarrow}}$ in the case of quantum theory. This would suggest that the object $\ps{C}^{\leftrightarrow}$ is related to $\ps{\Cl}_{Sh(\ts)}$. If this were the case, it would help unveil what, if any, topological properties $\ps{C}^{\leftrightarrow}$ has. Such an analysis, however, is left for future publications.

\subsection{Domain-Theoretic Structure}
In the recent paper \cite{domain} the authors show how the quantity value object $\ps{\Rl^{\leftrightarrow}}$ can be given a domain theoretic structure. 
This then results in the fibres $\ps{\Rl^{\leftrightarrow}}_V$ being almost-bounded directed-complete posets (see later on for the appropriate definitions).
We will now give a brief description on how the results given in \cite{domain} can be generalised for the complex quantity value object $\ps{\Cl^{\leftrightarrow}}$ defined above.
As a first step we give the definition of a closed rectangle in the complex plane $\Cl$
\begin{Definition}
The set 
\be
T_{\alpha,\beta}=\{z\in\Cl|a\leq Re(z)\leq c, b\leq Im(z)\leq d\, ; a,b,c,d\in \Rl\}
\ee
defines a closed rectangle in $\Cl$. Denoting $\alpha=a+ib$ and $\beta=c+id$, then the above closed rectangle is defined by the two points $(\alpha, \beta)$.
\end{Definition}
It is clear from the definition that $\alpha\leq \beta$ for the ordering in $\Cl$ defined in \ref{def:order}. However this definition of closed rectangles does not account for the general case in which $\alpha\leq \beta$ if $a+b\leq c+d$ but not necessarily $a\leq c$ and $b\leq d$. To remedy this we slightly change the above definition of a closed rectangle, obtaining
\begin{Definition}
Given any two points $\alpha, \beta\in \Cl$ such that $\alpha\leq \beta$ according to Definition \ref{def:order}, then the general closed rectangle ``spanned'' by them is 
\be
[\alpha, \beta]:=\{z\in\Cl|Re(z)\in [Re(\alpha), Re(\beta)] \wedge Im(z)\in[Im(\alpha),Im(\beta)]\}
\ee
where $ [Re(\alpha), Re(\beta)] $ is the closed line interval spanned by $Re(\alpha)$ and $Re(\beta)$ and similarly for $[Im(\alpha),Im(\beta)]$.
Clearly, for a given $\alpha,\beta$, $T_{\alpha,\beta} \equiv [\alpha,\beta]$.
\end{Definition}
Following the discussion of Section \ref{sec:complex}, it is clear that in the topos approach normal operators (as well as self-adjoint operators) are assigned an interval of values which are called ``unsharp values''.  The set of such ``unsharp complex values'' is defined as follows
\be
\mathbf{I}\Cl=\{[\alpha,\beta]|\alpha, \beta\in \Cl\,; \alpha\leq\beta\}
\ee
Clearly $\Cl\subset \ic$ and it consists of all those intervals for which $\alpha=\beta$.\\
The claim is that, similarly as done in \cite{domain} for $\mathbf{I}\Rl$, the set $\ic$ is a {\it domain} whose definition is given as follows:
\begin{Definition}
A {\it domain} $\langle D, \sqsubseteq \rangle$ is a poset such that i) any directed set\footnote{ A set $P$ is directed if for any $x, y\in P$ there exists a $z\in P$ such that $x,y\sqsubseteq z$} has a supremum, i.e. it is a directed-complete poset (dcpo); ii) it is continuous: for any $d\in D$ one has $\bigsqcup^{\uparrow}\SSEarrow y=y$ where $\SSEarrow y:=\{x\in D|x\ll y\}$\footnote{Here $\bigsqcup^{\uparrow}$ indicates the supremum of a directed set and the relation $x\ll y$ indicates that $x$ {\it approximates} $y$. In particular $x\ll y$ if, for any directed set $S$ with a supremum, then $y\sqsubseteq  \bigsqcup^{\uparrow} S\Rightarrow \exists s\in S:x\sqsubseteq s$.}. 
\end{Definition}
\noindent For the case of $\ic$ we obtain the definition
\begin{Definition}
The {\it complex interval domain} is the poset of closed rectangles in $\Cl$ partially ordered by reverse inclusion
\be
\ic:=\langle\{[\alpha, \beta] | \alpha, \beta\in \Cl\}, \sqsubseteq:=\supseteq \rangle
\ee
where $[\alpha,\beta]\sqsubseteq [\alpha_1,\beta_2]$ if $[\alpha_1,\beta_2]\subseteq [\alpha,\beta]$ in the sense that any complex number $z$ lying in the rectangle $[\alpha_1,\beta_1]$ will also lie in the rectangle $[\alpha,\beta]$.
\end{Definition}
Following the discussion given in \cite{domain}, we can denote a rectangle as $x=[x_{-}, x_{+}]$ where $x_-$ represents the left ``end point'' while $x_+$ represents the right ``end point''. In this way, for each function $f:X\rightarrow \ic$ there corresponds a pair of functions $f_-, f_+:X\rightarrow \Cl$ defined as $f_{\pm}(x):=(f(x))_{\pm}$ such that  $f_-\leq f_+$ (pointwise order) and  $f(x)=[f_-(x), f_+(x)]$. 
Conversely, for each pair of functions $g\leq h:X\rightarrow\Cl$ there corresponds a function $f:X\rightarrow\ic$ such that $f_-=g$ and $f_+=h$. The decomposition of each map $f:X\rightarrow \ic$ into two maps can be explicitly stated by the following diagram
\[\xymatrix{
&&&&\Cl\\
&&&&\\
X\ar[rr]^{f}\ar[rrrruu]^{f_-}\ar[rrrrdd]_{f_+}&&\ic\ar@{>->}[rr]&&\Cl\times\Cl\ar[uu]_{\pi_1}\ar[dd]^{\pi_2}\\
&&&&\\
&&&&\Cl\\
}\]
In order for $\ic$ to be a well defined domain we need to show that it satisfies the definition of a domain. To achieve this we will need the generalised nested rectangle theorem
\begin{Theorem}
Given a sequence $\{[\alpha, \beta]_n\}$ of nested generalised closed rectangles (as defined above ) such that $\lim_{n\rightarrow\infty}l([\alpha, \beta]_n)=0$\footnote{Here 
$l([\alpha, \beta]_n)$ represents the length of the largest side of the rectangle and thus is defined as $l([\alpha, \beta]_n)=Max\{|a_n-c_n|, |b_n-d_n|\}$ where $\alpha=a+ib$ and $\beta=c+id$.} then the following conditions hold:
\begin{enumerate}
\item 
$[\alpha, \beta]=\bigcap_{i\in N} [\alpha, \beta]_i=z_0$ for some $z_0\in \Cl$
\item Given $\epsilon > 0$ there is an $m\in\Nl$ such that 
\be
[\alpha, \beta]_n\subset \{z:|z-z_0|<\epsilon\}\forall n> m
\ee
\end{enumerate}

\end{Theorem}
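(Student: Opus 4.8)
The plan is to reduce this two-dimensional statement to two independent applications of the classical one-dimensional nested-interval theorem, exploiting the fact that every generalised closed rectangle is, by its very definition, a Cartesian product of two closed real intervals under the identification $\Cl\simeq\Rl\times\Rl$. First I would fix notation: for the $n$-th rectangle write $\alpha_n=a_n+ib_n$ and $\beta_n=c_n+id_n$, and set $I_n:=[\min\{a_n,c_n\},\max\{a_n,c_n\}]$ and $J_n:=[\min\{b_n,d_n\},\max\{b_n,d_n\}]$. By the generalised rectangle definition, $[\alpha,\beta]_n=\{z\in\Cl\,|\,Re(z)\in I_n,\ Im(z)\in J_n\}=I_n\times J_n$. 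The point of the generalised convention is only that the spanning points need not satisfy $a_n\le c_n$ and $b_n\le d_n$; the resulting set is still an ordinary axis-aligned rectangle, so this relabelling is harmless for the topological claims.

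Next I would show that the nesting $[\alpha,\beta]_{n+1}\subseteq[\alpha,\beta]_n$ descends to each coordinate factor. For nonempty products, $I_{n+1}\times J_{n+1}\subseteq I_n\times J_n$ forces $I_{n+1}\subseteq I_n$ and $J_{n+1}\subseteq J_n$ (fix a point in one factor and project onto the other axis); since closed intervals, even degenerate ones, are nonempty, this applies throughout. Thus $\{I_n\}$ and $\{J_n\}$ are nested sequences of closed bounded real intervals, and the hypothesis $l([\alpha,\beta]_n)=\max\{|a_n-c_n|,|b_n-d_n|\}\to 0$ gives $|I_n|\to 0$ and $|J_n|\to 0$, where $|\cdot|$ denotes the length.

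I would then invoke the classical nested interval theorem (a direct consequence of the Dedekind completeness of $\Rl$) on each coordinate: there exist unique $x_0\in\bigcap_n I_n$ and $y_0\in\bigcap_n J_n$, each a singleton precisely because the lengths vanish. Setting $z_0:=x_0+iy_0$ and using $\bigcap_n(I_n\times J_n)=(\bigcap_n I_n)\times(\bigcap_n J_n)$ yields $\bigcap_{i}[\alpha,\beta]_i=\{z_0\}$, establishing claim 1. For claim 2 I would control the Euclidean diameter of $R_n:=[\alpha,\beta]_n$: since $z_0\in\bigcap_k R_k\subseteq R_n$, every $z\in R_n$ satisfies $|z-z_0|\le\mathrm{diam}(R_n)=\sqrt{|I_n|^2+|J_n|^2}\le\sqrt{2}\,l(R_n)$. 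Given $\epsilon>0$, choosing $m\in\Nl$ with $l(R_n)<\epsilon/\sqrt{2}$ for all $n>m$ (possible because $l(R_n)\to 0$) gives $R_n\subset\{z:|z-z_0|<\epsilon\}$, as required.

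The only genuine subtlety, and the step I would treat most carefully, is the reduction in the second paragraph: one must verify that the generalised endpoint convention does not break the identification of $[\alpha,\beta]_n$ with a bona fide product of closed intervals, and that set-theoretic nesting of the rectangles is equivalent to coordinatewise nesting of their projections. Once this is secured, the remainder is the standard two-fold application of the real nested-interval theorem together with an elementary diagonal estimate, so no difficult analysis remains.
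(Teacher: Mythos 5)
Your proposal is correct and follows essentially the same route as the paper's own proof: both reduce claim 1 to two coordinatewise applications of the classical nested-interval theorem in $\Rl$ via the identification $[\alpha,\beta]_n \simeq I_n\times J_n$, and both establish claim 2 with the diagonal estimate $|z-z_0|\leq\sqrt{2}\,l([\alpha,\beta]_n)$. Your version is in fact slightly more careful than the paper's, since you explicitly handle the generalised (unordered) endpoint convention via $\min/\max$ and justify that nesting of the product rectangles forces nesting of each coordinate factor, points the paper glosses over.
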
 
\begin{proof}
To prove the first condition we resort to the theorem of nested intervals in $\Rl$. In particular the rectangle $[\alpha, \beta]_n$ has as boundary lines $Rel(z)=a_n$, $Rel(z)=c_n$, $Im(z)=b_n$, $Im(z)=d_n$ where $\alpha_n=a_n+ib_n $ and $\beta_n=c_n+id_n$. 
Therefore we have two sequences of nested intervals $R_n=[a_n, c_n]$ and $I_n=[b_n, d_n]$ such that $[\alpha, \beta]_n=R_n\times I_n$. From the theorem of nested sequences of intervals it follows that $\bigcap_{i\in N} R_n=a$ and $\bigcap_{i\in N} I_n=d$ thus $\bigcap_{i\in N} [\alpha, \beta]_n=a+id=z_0$.\\
To prove condition 2) we choose an arbitrary element $z\in [\alpha, \beta]_n$. Then, given the existence of $ [\alpha, \beta]=\bigcap_{i\in N}  [\alpha, \beta]_i=z_0$ it follows that
\be
|z-z_0|^2\leq |\alpha_n-\beta_n|^2=|a_n-c_n|^2+|b_n-d_n|^2=2(l( [\alpha, \beta]_n))^2
\ee
Since, by assumption, $\lim_{n\rightarrow\infty}l([\alpha, \beta]_n)=0$, it follows that there exists an $m$ such that given $\epsilon> 0$
\be
\sqrt{2}l( [\alpha, \beta]_n)<\epsilon\;,\forall n> m
\ee
Therefore
\be
|z-z_0|\leq\sqrt{2}l( [\alpha, \beta]_n)<\epsilon
\ee
The above holds for any $z\in  [\alpha, \beta]_n$  therefore $ [\alpha, \beta]_n\subset \{z:|z-z_0|<\epsilon\}$.
\end{proof}
The reason we went through the trouble of stating the nested rectangle theorem is because we will use it when defining the supremum of directed subsets $S\in \ic$. In particular, given such a directed set we then have
\be
\bigsqcup\hspace{.0001pt}^{\uparrow} [\alpha, \beta]=\bigcap  [\alpha, \beta]=[sup\{x_-|x\in  [\alpha, \beta]\}, inf\{x_+|x\in  [\alpha, \beta]\}]
\ee
Thus for a sequence $ [\alpha, \beta]_n$ of nested rectangles we simply get a point $z_0$ as the supremum. \\
Given the above, we can define the relation $\ll$ in $\ic$ as follows
\begin{Definition}
Given any two rectangles $x, y$ then \be
x\ll y\text{ iff } (x_-< y_-)\wedge(y_+< x_+)
\ee
\end{Definition}
\noindent For $\ic $ to be a well defined domain we need to show that $\bigcup^{\uparrow}\SSEarrow y=y$. Since $\SSEarrow y:=\{x\in D|x\ll y\}$ then clearly
\be
\bigcup\hspace{.0001pt}^{\uparrow}\SSEarrow y=\bigcap \{x\in D|x\ll y\}=\bigcap\{x\in\ic | (x_-< y_-)\wedge(y_+< x_+)\}=y
\ee
Since $\ic$ is a domain it is a continuous poset and as such it comes with a Scott topology\footnote{We recall that, given a poset $\langle P, \leq\rangle$ a subset $G$ is said to be Scott-open if i) $x\in G\wedge x\leq y\Rightarrow y\in G$; ii) for any directed set $S$ with supremum then $\bigsqcup^{\uparrow}S\in G\Rightarrow \exists s\in S|s\in G$. In other words all supremums in $G$ have a non-empty intersection with $G$. } whose basis opens are 
\be
\Ssearrow[\alpha,\beta]:=\{[\gamma,\delta]|\alpha<\gamma\leq\beta<\delta\}=\{\sigma\in\ic|\sigma\subseteq (\alpha, \beta)\}
\ee
where $(\alpha, \beta):=\{z\in \Cl|\alpha<z<\beta\}$ represents the general open rectangle ``spanned'' by $\alpha, \beta$\footnote{
Given this topology then it is clear that, as topological spaces $\Cl\simeq max\ic$ where $\Cl$ is equipped with the (general) open rectangles topology and $max\ic$ has the topology inherited by $\ic$. The homeomorphisms can be see by the fact that $
\Ssearrow[\alpha,\beta]\cap max\ic=\{\sigma\in\ic|\sigma\subseteq (\alpha, \beta)\}\cap max\ic=\{\{\beta\}|\beta\in (\alpha, \beta)\}=(\alpha, \beta)
$}.
Recalling that any map $f:X\rightarrow\ic$ can be decomposed into a left and right part $f_- $ and $f_+$, respectively, such that $f_-\leq f_+$, we note that $f$ is order preserving iff $f_-$ is order preserving and $f_+$ is order reversing. This, similar to the case for the real quantity value object, suggests we re-write the complex valued object as follows
\begin{Definition}
The complex value object $\ps{\Cl^{\leftrightarrow}}$ is defined on
\begin{enumerate}
\item Objects: for each $V\in\mv(\mh)$ we obtain the set
\be
\ps{\Cl^{\leftrightarrow}}_V=\{f:\downarrow V\rightarrow \ic|f \text{order-preserving}\}
\ee
\item Morphisms: given $i_{V'V}:V'\subseteq V$ the corresponding morphism is
\ba
\ps{\Cl^{\leftrightarrow}}(i_{V'V}):\ps{\Cl^{\leftrightarrow}}_V&\rightarrow& \ps{\Cl^{\leftrightarrow}}_{V'}\\
f&\mapsto&f|_{\downarrow V'}
\ea
\end{enumerate}
\end{Definition}

We now utilise the analogue of proposition 4.2 in \cite{domain} as applied for the complex number object. The proof is identical to the case of the real valued object so we will omit it.
\begin{Proposition}
The global elements of $\ps{\Cl^{\leftrightarrow}}$ are in bijective correspondence with order-preserving functions from $\mv(\mh)$ to $\ic$.
\end{Proposition}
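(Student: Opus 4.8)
The plan is to use the standard description of the global elements of a presheaf as natural transformations out of the terminal object $\underline{1}\in\Sets^{\mv(\mh)^{\op}}$. Since $\underline{1}_V=\{*\}$ for every context $V$, such a natural transformation is nothing but a \emph{compatible family} $\{\gamma_V\}_{V\in\mv(\mh)}$ with $\gamma_V\in\ps{\Cl^{\leftrightarrow}}_V$, i.e. each $\gamma_V:\downarrow V\to\ic$ is order-preserving, subject to the condition that for every inclusion $i_{V'V}:V'\subseteq V$ the naturality square forces $\gamma_V|_{\downarrow V'}=\gamma_{V'}$. I would first record this reformulation explicitly, because the whole argument turns on the interplay between order-preservation \emph{inside} each fibre and this restriction (naturality) condition \emph{across} fibres.

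Next I would define the two maps realising the claimed bijection. Given a global element $\{\gamma_V\}$, I set $\Phi(V):=\gamma_V(V)\in\ic$, evaluating the fibre datum at the top element $V$ of the down-set $\downarrow V$. To see that $\Phi:\mv(\mh)\to\ic$ is order-preserving, take $V'\subseteq V$; the restriction condition gives $\gamma_{V'}=\gamma_V|_{\downarrow V'}$, hence $\Phi(V')=\gamma_{V'}(V')=\gamma_V(V')$, and since $\gamma_V$ is order-preserving on $\downarrow V$ and $V'\subseteq V$, we obtain $\gamma_V(V')\sqsubseteq\gamma_V(V)=\Phi(V)$. Conversely, given an order-preserving $\Psi:\mv(\mh)\to\ic$, I define $\gamma_V:=\Psi|_{\downarrow V}$; each such restriction is again order-preserving (so it lies in $\ps{\Cl^{\leftrightarrow}}_V$), and $(\Psi|_{\downarrow V})|_{\downarrow V'}=\Psi|_{\downarrow V'}$ for $V'\subseteq V$, which is exactly the required compatibility, so $\{\gamma_V\}$ is a genuine global element.

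Finally I would check that the two assignments are mutually inverse. Starting from $\Psi$ and returning, $\Phi(V)=(\Psi|_{\downarrow V})(V)=\Psi(V)$, so one composite is the identity. Starting from a family $\{\gamma_V\}$ and returning, the rebuilt family at $V$ sends $V''\in\downarrow V$ to $\Phi(V'')=\gamma_{V''}(V'')=(\gamma_V|_{\downarrow V''})(V'')=\gamma_V(V'')$, where the middle equality uses the restriction condition for $V''\subseteq V$; hence the rebuilt datum agrees with $\gamma_V$ pointwise and the other composite is also the identity. I expect the only point deserving genuine care, rather than routine bookkeeping, to be the order-preservation of $\Phi$ across distinct contexts: within a single fibre $\downarrow V$ it is immediate, but comparing $\gamma_{V'}(V')$ with $\gamma_V(V)$ for $V'\subseteq V$ is precisely where the naturality identity $\gamma_{V'}=\gamma_V|_{\downarrow V'}$ must be invoked, collapsing the two fibres onto the common value $\gamma_V(V')$. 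Everything else is formal, consistent with the assertion that the argument is identical to the real-valued case treated in \cite{domain}.
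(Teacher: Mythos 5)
Your proposal is correct and is essentially the argument the paper intends: the paper itself omits the proof, stating it is identical to the real-valued case (Proposition 4.2 of \cite{domain}), and that argument is precisely your correspondence between compatible families $\{\gamma_V\}$ and order-preserving maps $\mv(\mh)\rightarrow\ic$, given by evaluation at the top element $\gamma_V\mapsto\gamma_V(V)$ in one direction and restriction $\Psi\mapsto\Psi|_{\downarrow V}$ in the other. You also correctly isolate the only non-formal step, namely that order-preservation of $\Phi$ across contexts requires invoking the naturality condition $\gamma_{V'}=\gamma_V|_{\downarrow V'}$ before appealing to order-preservation within the single fibre $\downarrow V$.
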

\noindent The new reformulation of the complex valued object together with the above proposition imply that that, for each $V\in\mv(\mh)$
\ba
\ps{\Cl^{\leftrightarrow}}_V&=&OP(\downarrow V, \ic)\\
\Gamma\ps{\Cl^{\leftrightarrow}}&=&OP(\mv(\mh), \ic)
\ea
where $OP(\downarrow V, \ic)$ indicates the set of order preserving functions from the poset $\downarrow V$ to $\ic$.\\
By equipping both $\downarrow V$ and $\mv(\mh)$ with the Alexandroff topology and utilising propositions 4.3 and 4.4 in \cite{domain} we arrive at the following results: for each $V\in\mv(\mh)$
\ba
\ps{\Cl^{\leftrightarrow}}_V&=&C(\downarrow V, \ic)\text{ and } \ps{\Cl^{\leftrightarrow}}_V \text{ is an almost complete dcpo}\\
\Gamma\ps{\Cl^{\leftrightarrow}}&=&C(\mv(\mh), \ic)\text{ and } \Gamma\ps{\Cl^{\leftrightarrow}}_V \text{ is an almost complete dcpo}
\ea
Thus, with the use of domain theory, we can understand the precise structure of the complex valued quantity object.

\section{Normal Operators in Terms of Functions of Filters}
We would now like to check whether the discussion done in \cite{Doering2008} regarding the relation between self-adjoint operators and functions on filters still holds for the case of normal operators. This should indeed be the case since there exists a spectral theorem for normal operators, and this is all that is really needed.

First of all we recall that, given a lattice $L$ it is possible to define a map from the Stone spectrum $\mathcal{Q}(L)$ (see Section \ref{sec:lattice} in the Appendix for the relevant definitions) to the reals $\Rl$ as follows:
\begin{Definition}
Given a bounded spectral family $E:\Rl\rightarrow L$ in a complete lattice $L$, then it is possible to define a function
\ba
f_E:\mathcal{Q}(L)&\rightarrow&\Rl\\
\mb&\mapsto& inf\{\lambda|E_{\lambda}\in\mb\}
\ea
Such a function is called an observable function corresponding to $E$.
\end{Definition}
Of particular relevance to us is when the lattice in question is the complemented distributive lattice of projection operators in a von Neumann algebra. In that case, for each self-adjoint operators, we obtain a corresponding observable function. 

We now would like to extend, in a meaningful way, the above definition to normal operators. A first guess would be the following definition:
\begin{Definition}
Given a normal operator $\hat{A}$ with spectral family $\{\hat{E}^{\hat{A}}_{\lambda}\}_{\lambda\in \Cl}$ the corresponding observable function is
\ba
f_{\hat{A}}:\mathcal{Q}(P(V))&\rightarrow&\Cl\\
\mb&\mapsto& inf\{\lambda|\hat{E}^{\hat{A}}_{\lambda}\in\mb\}
\ea
\end{Definition}

Where the {\it infimum} is defined according to the ordering in definition \ref{def:order}.
From now on we will call observable functions which correspond to normal operators ``{\it normal-observable functions}'', to distinguish them from observable functions as related to self-adjoint operators.

We now have to reproduce the theorems done in \cite{Groote2007a} which show that the above definition is well defined. In particular we need to prove the following:
\begin{Theorem}
Given a normal operator $\hat{A}$ in a von Neumann algebra $N$ with associated normal-observable function $f_{\hat{A}}:\mathcal{Q}(P(V))\rightarrow\Cl$, then 
\be
im f_{\hat{A}}=sp(\hat{A})
\ee
\end{Theorem}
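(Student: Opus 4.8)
The plan is to deduce the statement from the corresponding self-adjoint theorem of de Groote (\cite{Groote2007a}), which I take as known, by reducing the two-dimensional problem to two one-dimensional ones via the factorisation of the spectral family. Write $\lambda=\varepsilon+i\eta$, let $\hat{A}=\hat{C}+i\hat{B}$ with $[\hat{C},\hat{B}]=0$, and recall from the spectral theorem stated above that $\hat{E}^{\hat{A}}_{\lambda}=\hat{P}_1(\varepsilon)\hat{P}_2(\eta)=\hat{P}_1(\varepsilon)\wedge\hat{P}_2(\eta)$, where $\hat{P}_1,\hat{P}_2$ are the spectral families of $\hat{C},\hat{B}$. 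The key reduction is the identity
\be
f_{\hat{A}}(\mb)=f_{\hat{C}}(\mb)+i\,f_{\hat{B}}(\mb)
\ee
valid for every quasi-point $\mb\in\mathcal{Q}(P(V))$. Since $\mb$ is a filter, $\hat{P}_1(\varepsilon)\wedge\hat{P}_2(\eta)\in\mb$ holds iff both factors lie in $\mb$; hence $\{(\varepsilon,\eta):\hat{E}^{\hat{A}}_{\lambda}\in\mb\}$ is the product of the two up-sets determined by $\hat{P}_1$ and $\hat{P}_2$, and minimising $\varepsilon+\eta$ (the quantity controlling the ordering of Definition \ref{def:order}) over such a product is attained exactly at the corner $(f_{\hat{C}}(\mb),f_{\hat{B}}(\mb))$.

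Granting the identity, both inclusions follow. For $\mathrm{im}\,f_{\hat{A}}\subseteq sp(\hat{A})$, de Groote's theorem gives $f_{\hat{C}}(\mb)\in sp(\hat{C})$ and $f_{\hat{B}}(\mb)\in sp(\hat{B})$; moreover, because both numbers are read off the \emph{same} quasi-point---equivalently the same character of the abelian algebra $V$---the pair $(f_{\hat{C}}(\mb),f_{\hat{B}}(\mb))$ lies in the joint spectrum of the commuting pair $\hat{C},\hat{B}$, so that $f_{\hat{A}}(\mb)\in sp(\hat{A})$. For $sp(\hat{A})\subseteq\mathrm{im}\,f_{\hat{A}}$, fix $\lambda_0=\varepsilon_0+i\eta_0\in sp(\hat{A})$; then $(\varepsilon_0,\eta_0)$ is a joint spectral value, so there is a character $\chi$ of $V$ with $\chi(\hat{C})=\varepsilon_0$ and $\chi(\hat{B})=\eta_0$, and the associated quasi-point $\mb$ satisfies $f_{\hat{C}}(\mb)=\varepsilon_0$, $f_{\hat{B}}(\mb)=\eta_0$, whence $f_{\hat{A}}(\mb)=\lambda_0$.

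The main obstacle is that the ordering of Definition \ref{def:order} is only a total preorder on $\Cl$, factoring through $\varepsilon+i\eta\mapsto\varepsilon+\eta$, so that a priori the infimum defining $f_{\hat{A}}(\mb)$ pins down only the real number $\mathrm{Re}\,\lambda_0+\mathrm{Im}\,\lambda_0$ and leaves $f_{\hat{A}}$ ambiguous as a complex-valued map. The product-of-filters argument above is precisely what removes this ambiguity---the corner of a product up-set is a single point---and this is the step where property a) of the spectral theorem (the meet structure $\hat{P}(\varepsilon,\eta)=\hat{P}_1(\varepsilon)\wedge\hat{P}_2(\eta)$) is indispensable; without it one could not separate the two coordinates. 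I expect the second delicate point to be the passage from ``each coordinate is a spectral value'' to ``the pair is a \emph{joint} spectral value'', since $sp(\hat{A})$ is in general a proper subset of $sp(\hat{C})+i\,sp(\hat{B})$; this is what makes the insistence on a single quasi-point (character) essential in both inclusions, and it is the natural place to invoke the Gel'fand correspondence between $\mathcal{Q}(P(V))$ and the Gel'fand spectrum $\Sigma_V$ used throughout the topos formalism.
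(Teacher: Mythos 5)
Your proposal is correct (for quasipoints of the projection lattice of an abelian context $V$), but it takes a genuinely different route from the paper's. The paper transcribes de Groote's self-adjoint argument directly to the complex spectral family: for $\mathrm{im}\,f_{\hat{A}}\subseteq sp(\hat{A})$ it argues by contradiction that $\{\hat{E}^{\hat{A}}_{\lambda}\}$ would be constant on a metric ball around a non-spectral point $\lambda_0$, forcing the infimum strictly below $\lambda_0$; for $sp(\hat{A})\subseteq \mathrm{im}\,f_{\hat{A}}$ it constructs, in two cases, an explicit quasipoint $\mb$ (one containing $\hat{E}^{\hat{A}}_{\mu}-\hat{E}^{\hat{A}}_{\lambda_0}$ for all $\mu>\lambda_0$, the other containing $\hat{E}^{\hat{A}}_{\lambda_0}$ but no earlier member of the family) so that $f_{\hat{A}}(\mb)=\lambda_0$. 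You instead reduce everything to the self-adjoint theorem via the identity $f_{\hat{A}}=f_{\hat{C}}+if_{\hat{B}}$ and the quasipoint--character correspondence. Note that your key lemma is exactly what the paper proves only \emph{later} (Theorem \ref{the:sum}, by the same filter/meet argument you give), and your character step is the content of the later Theorem \ref{the:restriction}; since both rest on de Groote's prior self-adjoint results rather than on the present theorem, there is no circularity, and your ordering of the material is arguably more economical. What each approach buys: your reduction deals cleanly with the joint-spectrum subtlety (in general $sp(\hat{A})$ is a proper subset of $sp(\hat{C})+i\,sp(\hat{B})$, which is precisely why a single quasipoint must carry both coordinates), and it makes explicit that the ordering of Definition \ref{def:order} is only a total preorder, so that the defining infimum pins down a complex number only once one adopts the corner-of-a-product-filter convention --- a point the paper's own proof silently glosses over. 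Conversely, the paper's direct argument never invokes characters, so it retains de Groote's generality for quasipoints of an arbitrary, possibly non-abelian, von Neumann algebra $N$ --- which is how the theorem is nominally stated --- whereas your Gel'fand step is unavailable there; since in the topos formalism only the abelian contexts $V\in\mv(\mh)$ ever arise, this restriction costs little in practice, but it is the one real gap in scope between your proof and the intended statement.
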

\begin{proof}
 By contradiction, we start by assuming that there
exists some $\lambda_0 \notin sp(\hat{A})$ which also obeys
$\lambda_0 \in im(f_{\hat{A}})$. \\
We know that $\Cl$ is a metric space through the following metric
\ba
d:\Cl\times\Cl\rightarrow\Rl\\
(z_1, z_2)\mapsto |z_1-z_2|
\ea
We can then define an open ball around any point $z_0\in \Cl$ as
\be
D_{\varepsilon}(z_0)=\{z:|z-z_0|<\varepsilon\}
\ee
for some $\varepsilon>0$. 

Now, we know that the spectrum of a normal operator $\hat{A}$ consists of all $\lambda\in \Cl$ such that the spectral family $\hat{E}^{\hat{A}}_{\lambda_i}$ is non-constant on every neighbourhood of $\lambda$. Thus, since we have assumed that $\lambda_0 \in im(f_{\hat{A}})$ but $\lambda_0 \notin sp(\hat{A})$, it is reasonable to assume that there exists an open ball $D_{\varepsilon}(z_0)$ which is a neighbourhood of $\lambda$ and where 

\begin{equation}
\label{ref:alllambda}
\forall \lambda \in D_{\varepsilon}(z_0) : E_{\lambda}^A = E_{\lambda_0}^A
\end{equation}
i.e. the family of operators is constant on such a neighbourhood. Now if $\mb\subseteq f_{\hat{A}}^{-1}(\lambda_0)$, then from the definition of $f_{\hat{A}}$ and the fact that $\forall \lambda \in D_{\varepsilon}(z_0) : E_{\lambda}^A = E_{\lambda_0}^A$, then $f_{\hat{A}}A(\mb)\leq inf (D_{\varepsilon}(z_0) )$. In fact we have that if
\be
\mb\subseteq f^{-1}_{\hat{A}}(\lambda_0)
\ee
then 
\be
f_{\hat{A}}(\mb)\leq f_{\hat{A}} f^{-1}_{\hat{A}}(\lambda_0)=\lambda_0
\ee
But given equation \ref{ref:alllambda} then 
\be
f_{\hat{A}}(\mb)\leq inf (D_{\varepsilon})
\ee
However since we had assumed that $\lambda_0\notin sp(A)$ then $inf (D_{\varepsilon})\notin sp(A)$ we obtain a contradiction and 
$im f_A\subseteq sp(A)$.

We now want to show that $sp(\hat{A})\subseteq im f_{\hat{A}}$.To this end we need the notion of a limit of a sequence of complex numbers.
\begin{Definition}
Given a sequence of complex numbers $z_n$ then the limit is $z$ iff, for all $\varepsilon >0$ there exists a natural number $m$, such that $n > m$ implies $|z-z_n|<\varepsilon$.
\end{Definition}
Another way of defining a limit of a series of complex numbers is as follows:
\begin{Definition}
If we denote the real and imaginary part of a complex number as $Re(z_k) = x_k$ and $Im(z_k) = y_k$, then the sequence of complex numbers $z_1, z_2,\cdots$ has a limit, if and only if, the sequences of real numbers $x_1, x_2,\cdots $ and $y_1, y_2,\cdots$ have limits. Then we obtain
\be
lim (z_k) = lim (x_k) + i \cdot lim (y_k).
\ee
\end{Definition}
Another useful definition is:
\begin{Definition}
A sequence of complex numbers $z_n$ converges to some limit iff for all $\varepsilon >0$ there exists some natural number $m$, such that for $n, p>m$   $|z_m-z_n|<\varepsilon$.
\end{Definition}
Given these notions we now assume that $\lambda_0 \in sp(\hat{A})$. We then have two cases
\begin{enumerate}
\item [i.] We have a decreasing sequence $(\lambda_n)_{n\in\Nl}$ such that the limit of this sequence as defined above is $\lambda_0$ and 
for all $n$, $E_{\lambda_{n+1}}^A < E_{\lambda_n}^A$.
In this case, select a $\mb$ such that $ E^A_{\mu}-E_{\lambda_0}^A \in \mb$ for all $\mu>\lambda_0$. This implies that for all $\lambda >\lambda_o$ then $ E^A_{\lambda}-E_{\lambda_0}^A \in \mb$ and since $E^A_{\lambda}> E^A_{\lambda}-E_{\lambda_0}^A$ and $\mb$ is upwards closed then $ E^A_{\lambda}\in \mb$. On the other hand $E_{\lambda_0}^A \notin \mb$ which implies that $f_{\hat{A}}(\mb) = \lambda_0$.
Therefore, for any $\lambda_0 \in sp(\hat{A})$, $\lambda_0 \in f_{\hat{A}}$.
\item[ii.]
The only other case is where $E_{\lambda}^A < E_{\lambda_0}^A$ for $\lambda < \lambda_0$.
In this case we can just take a quasipoint which contains $ E_{\lambda_0}^A$ but does not contain any $E_{\lambda}^A$
for  $\lambda < \lambda_0$.
Then, by the definition of infimum, $f_{\hat{A}}(\mb) = \lambda_0$, and again,  for any $\lambda_0 \in sp(\hat{A})$, $\lambda_0 \in f_{\hat{A}}$.

\end{enumerate}

\end{proof}
\begin{Theorem}\label{the:contin}
Given a normal operator $\hat{A}$, then the observable function $f_{\hat{A}}:\mathcal{Q}(P(V))\rightarrow \Cl$ is continuous.
\end{Theorem}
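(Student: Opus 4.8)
The plan is to reduce the continuity of $f_{\hat{A}}$ to the already-established continuity of the observable functions attached to the two commuting self-adjoint components of $\hat{A}$, exploiting the fact that $\Cl$ carries the metric topology, which coincides with the product topology under the homeomorphism $\Cl\simeq\Rl\times\Rl$ used throughout the paper. Writing $\hat{A}=\hat{C}+i\hat{B}$ with spectral family $\hat{P}(\varepsilon,\eta)=\hat{P}_1(\varepsilon)\hat{P}_2(\eta)$, I would first record the relevant topology on the Stone spectrum $\mathcal{Q}(P(V))$: its basic open sets are $\mathcal{Q}_{\hat{P}}:=\{\mb\in\mathcal{Q}(P(V))\,|\,\hat{P}\in\mb\}$ for projections $\hat{P}\in P(V)$. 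The target continuity then means that the preimage under $f_{\hat{A}}$ of each open ball $D_{\varepsilon}(z_0)=\{z:|z-z_0|<\varepsilon\}$ is open.

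The key structural step is to split the normal-observable function into real and imaginary observable functions. Since $\hat{P}_1(\varepsilon)$ and $\hat{P}_2(\eta)$ commute, $\hat{P}(\varepsilon,\eta)=\hat{P}_1(\varepsilon)\wedge\hat{P}_2(\eta)$, and because every quasipoint $\mb$ is a filter, the meet $\hat{P}_1(\varepsilon)\wedge\hat{P}_2(\eta)$ lies in $\mb$ exactly when both $\hat{P}_1(\varepsilon)\in\mb$ and $\hat{P}_2(\eta)\in\mb$ (upward closure gives one direction, closure under finite meets the other). Hence the index set $\{(\varepsilon,\eta)\,|\,\hat{P}(\varepsilon,\eta)\in\mb\}$ factorises as the product $\{\varepsilon\,|\,\hat{P}_1(\varepsilon)\in\mb\}\times\{\eta\,|\,\hat{P}_2(\eta)\in\mb\}$. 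Taking the infimum of Definition \ref{def:order} componentwise then yields $f_{\hat{A}}(\mb)=f_{\hat{C}}(\mb)+i\,f_{\hat{B}}(\mb)$, where $f_{\hat{C}}$ and $f_{\hat{B}}$ are the ordinary self-adjoint observable functions of the spectral families $\{\hat{P}_1(\varepsilon)\}$ and $\{\hat{P}_2(\eta)\}$.

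With this decomposition in hand, continuity is immediate. De Groote's theorem in the self-adjoint case (\cite{Groote2007a}) guarantees that $f_{\hat{C}}$ and $f_{\hat{B}}$ are continuous as maps $\mathcal{Q}(P(V))\to\Rl$. Since $\Cl\simeq\Rl\times\Rl$ carries the product topology, a map into $\Cl$ is continuous precisely when its real and imaginary parts are, and applying this to $f_{\hat{A}}=(f_{\hat{C}},f_{\hat{B}})$ finishes the argument. Should a more direct route be preferred, one can instead show $f_{\hat{A}}^{-1}(D_{\varepsilon}(z_0))$ is open by approximating the open ball from inside by generalised rectangles $[\alpha,\beta]$ via the generalised nested rectangle Theorem proved above, and rewriting $f_{\hat{A}}^{-1}([\alpha,\beta])$ in terms of the basic opens $\mathcal{Q}_{\hat{P}_1(\varepsilon)}$ and $\mathcal{Q}_{\hat{P}_2(\eta)}$, reducing again to the one-dimensional statements.

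The main obstacle I anticipate is justifying the componentwise reading of the infimum. The ordering of Definition \ref{def:order} is only a total preorder, depending on $\varepsilon+\eta$ rather than on the pair $(\varepsilon,\eta)$ separately, so it does not single out a unique complex number as ``the'' infimum; consequently the identity $f_{\hat{A}}(\mb)=f_{\hat{C}}(\mb)+i\,f_{\hat{B}}(\mb)$ must be read as defining the value on the factorised index set rather than as a literal lattice infimum in $(\Cl,\geq)$. The delicate point is therefore to verify that this componentwise value is well defined and independent of the decomposition $\hat{A}=\hat{C}+i\hat{B}$, using $[\hat{C},\hat{B}]=0$ and the uniqueness of the normal spectral family supplied by the first Theorem. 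Once that is settled, the continuity statement follows formally from the self-adjoint case.
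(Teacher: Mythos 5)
Your proposal is correct, but it takes a genuinely different route from the paper's proof. The paper generalises de Groote's uniform-approximation argument directly to the complex-valued setting: it partitions the spectrum, forms the step operators $A_{\varepsilon}:=\sum_{k}\lambda^*_k(\hat{E}^{A}_{\lambda_k}-\hat{E}^{A}_{\lambda_{k-1}})$ whose observable functions are finite combinations of characteristic functions of clopen subsets of $\mathcal{Q}(P(V))$ (hence continuous), proves $\|f_A-f_{A_{\varepsilon}}\|<\varepsilon$, and concludes via the uniform convergence theorem. You instead derive the splitting $f_{\hat{A}}=f_{\hat{C}}+if_{\hat{B}}$ from the filter properties of quasipoints applied to $\hat{P}(\varepsilon,\eta)=\hat{P}_1(\varepsilon)\wedge\hat{P}_2(\eta)$, and then reduce to the known self-adjoint continuity together with the fact that a map into $\Cl\simeq\Rl\times\Rl$ is continuous iff its components are. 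Two remarks. First, your route is not circular even though the paper itself obtains $f_{\hat{A}}=f_{\hat{C}}+if_{\hat{B}}$ (equation (\ref{equ:aib})) only as a consequence of Theorem \ref{the:restriction}, which invokes Theorem \ref{the:contin}: the paper's later constructive proof of Theorem \ref{the:sum} uses precisely your quasipoint factorisation and is independent of continuity, so the logical order can indeed be reversed as you propose. Second, you are right to flag the infimum ambiguity: Definition \ref{def:order} is only a total preorder, so the defining infimum fixes only the sum of real and imaginary parts; your resolution---taking the componentwise value $f_{\hat{C}}(\mb)+if_{\hat{B}}(\mb)$ over the factorised index set as the canonical representative---is exactly the convention the paper adopts implicitly in Theorem \ref{the:sum}, so the two treatments are consistent. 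In terms of trade-offs, your argument is shorter, leans on the established self-adjoint theorem, and exposes a cleaner logical architecture (splitting first, continuity as a corollary), while the paper's argument is self-contained at that point in the text and yields continuity without first having to settle the representative ambiguity in the complex-valued infimum.
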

This proof will be a generalisation of the proof of theorem 2.4 in \cite{Groote2007c} as applied to complex numbers. To carry out this proof we need to recall a few facts about continuity. 
\begin{Theorem}
\textbf{Uniform convergence theorem}. Let $S$ be a set of complex numbers and $\{f_n\}$ a sequence of continuous functions on $S$. If $\{f_n\}$ converges uniformly, then the limit function $f$ is also continuous on $S$.
\end{Theorem}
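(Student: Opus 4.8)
The plan is to establish continuity of the limit function $f$ at each point of $S$ by the classical $\varepsilon/3$ argument, using uniform convergence to bound $|f_n - f|$ by a quantity that does not depend on the point at which it is evaluated. Fix an arbitrary $z_0 \in S$ and let $\varepsilon > 0$; I will exhibit a $\delta > 0$ such that every $z \in S$ with $|z - z_0| < \delta$ satisfies $|f(z) - f(z_0)| < \varepsilon$.

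First I would invoke uniform convergence to select a single index $N \in \Nl$ with $|f_n(z) - f(z)| < \varepsilon/3$ for all $n \geq N$ and \emph{all} $z \in S$ simultaneously. This is the only step that uses uniformity rather than mere pointwise convergence, and it is precisely what lets the two ``tail'' terms below be controlled before $\delta$ is even chosen. Next I would fix one index $n \geq N$ and use the continuity of that particular $f_n$ at $z_0$ to obtain a $\delta > 0$ with $|f_n(z) - f_n(z_0)| < \varepsilon/3$ whenever $z \in S$ and $|z - z_0| < \delta$.

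With $\delta$ so chosen, the triangle inequality for the modulus on $\Cl$ gives, for every such $z$,
\[
|f(z) - f(z_0)| \leq |f(z) - f_n(z)| + |f_n(z) - f_n(z_0)| + |f_n(z_0) - f(z_0)| < \frac{\varepsilon}{3} + \frac{\varepsilon}{3} + \frac{\varepsilon}{3} = \varepsilon.
\]
Since $z_0 \in S$ and $\varepsilon > 0$ were arbitrary, $f$ is continuous at every point of $S$.

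There is no substantive obstacle here: this is the standard Weierstrass theorem, and its passage from $\Rl$ to $\Cl$ is immediate because the sole analytic ingredient is the inequality $|a + b| \leq |a| + |b|$ for the complex modulus, which holds unchanged. The only point demanding care is the ordering of the quantifiers — $N$ must be fixed \emph{before} and independently of $\delta$ and of the target point, so that a single tail index $n$ dominates $|f_n - f|$ at both $z$ and $z_0$ at once; with only pointwise convergence $N$ would be allowed to depend on $z$ and the central estimate would collapse.
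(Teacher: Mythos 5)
Your proof is correct and complete: it is the standard $\varepsilon/3$ argument, with the quantifiers in the right order (a single $N$ from uniform convergence fixed before $\delta$, then continuity of one fixed $f_n$ at $z_0$ relative to $S$). The paper itself gives no proof of this theorem, deferring instead to the reference \cite{Lang1999}, where exactly this argument appears, so your proposal matches the intended proof.
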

A proof of this theorem can be found in \cite{Lang1999}. We will use this result to prove theorem \ref{the:contin}.\\The other result we need is the definition of uniform convergence for complex valued functions. 
\begin{Definition}
Given a set $S$ of complex numbers, then the sequence $\{f_n\}$ of functions on $S$ converges uniformly on $S$ if there exists a function $f$ on $S$ such that, given $\varepsilon>0$ there exists and $N$ such that for $n\geq N$
\be
||f_n-f||<\varepsilon
\ee
where $||f||=sup_{z\in S}|f(z)|$.
\end{Definition}
We are now ready to prove the theorem \ref{the:contin}.
\begin{proof}
First of all we know that $im f_A=sp(A)$. Define $\gamma:=min(sp(A))$ and $\beta:=max(sp(A))$ where again the ordering was defined above. Given a real number $\varepsilon >0$ we then define two open intervals as follows:\\
First we construct the circular annulus 
\be
S_a:=\{z:|a|-\varepsilon<|z|\leq|a|\}
\ee
This is neither open nor closed and it does not contain $a$. To make it an open set we consider the interior whose construction utilises the following definition:
\begin{Definition}
A point $z_0$ is an interior point of a set $S$ if $\exists$ $\varepsilon>0$ such that $D_{\varepsilon}(z_0)\subseteq S$.
\end{Definition}
We denote by $int(S)$ the interior of a set $S$. We then choose $\lambda_0\in int(S_a)$. Similarly we define
\be
S_b:=\{z:|b|<|z|\leq|b|-\varepsilon\}
\ee
and choose $\lambda_n\in int(S_b)$.\\
We then construct 
\be
S_{a,b}:=\{z:|b|<|z|\leq|a|\}
\ee
and consider $\lambda_1, \cdots\lambda_n\in\overline{S_{a,b}}=S_{a,b}\cup\delta S_{a, b}$ (here $\delta S_{a, b}=\{z:|z|=|b|\}\cup\{z:|z|=|a|\}$), such that $\lambda_{k-1}<\lambda_k$ and $|\lambda_k-\lambda_{k-1}|<\varepsilon$ for $k=1,\cdots ,n$. Next we define 
\be
int(S_{*}):=int\big(\{z:|\lambda_{k-1}|<|z|<|\lambda_k|\}\big)
\ee
The let $\lambda^*\in int(S_{*})$ for $k=1,\cdots ,n$. We then define an operator which is $\varepsilon$ dependent:
\be
A_{\varepsilon}:=\sum_{k=1}^n\lambda^*_k(\hat{E}^{A}_{\lambda_k}-\hat{E}^A_{\lambda_{k-1}}):=\sum_{k=1}^n\lambda^*_kP_k
\ee
where $P_k:=\hat{E}^{A}_{\lambda_k}-\hat{E}^A_{\lambda_{k-1}}$. 
These $\varepsilon$-dependent operators give rise to a sequence of continuous functions as shown in proposition 2.6 in \cite{Groote2007}. These are
\be
f_{A_{\varepsilon}}=\sum_{k=1}^n\lambda^*_k\chi_{\mathcal{Q}_{\hat{E}^A_{\lambda_k}(P(V))}/\mathcal{Q}_{\hat{E}^A_{\lambda_{k-1}}(P(V))}}
\ee
Where $\mathcal{Q}_{\hat{E}^A_{\lambda_k}}(P(V))=\{\mb|\hat{E}^A_{\lambda_k}\in \mb\}$ and $\chi$ is the usual characteristic function.
Then for any $\mb\in \mathcal{Q}(P(V))$, $\mb\in \mathcal{Q}_{\hat{E}^A_{\lambda_k}(P(V))}/\mathcal{Q}_{\hat{E}^A_{\lambda_{k-1}}(P(V))}$ for only one $k$, therefore
\be
f_{A_{\varepsilon}}(\mb)=\lambda^*_k
\ee
If we now go back to our original operator $A$ we notice that
\be
f_A(\mb)\in \overline{S_*}
\ee
thus
\be
|f_A(\mb)-f_{A_{\varepsilon}}(\mb)|<\varepsilon
\ee
Since the $\mb$ was arbitrary we obtain the desired result
\be
||f_A-f_{A_{\varepsilon}}||<\varepsilon
\ee
i.e. $f_A$ is continuous.
\end{proof}
\begin{Definition}
Given a von Neumann algebra $\mathcal{N}$, then the set of all observable functions $\mathcal{Q}(P(V))\rightarrow \Cl$ is denoted by $\mathcal{O}(\mathcal{N})$.

\end{Definition}

\begin{Theorem}\label{the:restriction}
Given an abelian von Neumann algebra $\mathcal{N}$, then the mapping $\mathcal{N}\rightarrow C(\mathcal{Q}(P(V)), \Cl)$;  $A\mapsto f_A$ is, up to the isomorphisms $C(\mathcal{Q}(P(\mathcal{N})), \Cl)\rightarrow C(\Omega(\mathcal{N}),\Cl)$, where $\Omega(\mathcal{N})$ is the Gel'fand spectrum of $\mathcal{N}$, the restriction of the Gel'fand transform to $\mathcal{N}_{normal}$.
\end{Theorem}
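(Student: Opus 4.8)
The plan is to reduce the normal case to the already-established self-adjoint case and then glue the two halves together using linearity of the Gel'fand transform. Write $\hat{A}=\hat{C}+i\hat{B}$ for the Cartesian decomposition, with $\hat{C},\hat{B}$ self-adjoint; since $\mathcal{N}$ is abelian all its elements commute, so $\hat{C},\hat{B}\in\mathcal{N}_{sa}$ and the spectral family of $\hat{A}$ factorises as $\hat{E}^{\hat{A}}_{\lambda}=\hat{P}_1(\varepsilon)\hat{P}_2(\eta)=\hat{P}_1(\varepsilon)\wedge\hat{P}_2(\eta)$ for $\lambda=\varepsilon+i\eta$, where $\hat{P}_1,\hat{P}_2$ are the spectral families of $\hat{C}$ and $\hat{B}$ respectively. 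The isomorphism $C(\mathcal{Q}(P(\mathcal{N})),\Cl)\to C(\Omega(\mathcal{N}),\Cl)$ appearing in the statement is the one induced by the homeomorphism $\mathcal{Q}(P(\mathcal{N}))\cong\Omega(\mathcal{N})$ between the Stone spectrum of the projection lattice and the Gel'fand spectrum, and I would fix this identification at the outset so that it can be applied simultaneously to $\hat{C}$ and $\hat{B}$.

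First I would unravel $f_{\hat{A}}(\mathcal{B})=\inf\{\lambda\,|\,\hat{E}^{\hat{A}}_{\lambda}\in\mathcal{B}\}$ for a quasipoint $\mathcal{B}$. Because $\mathcal{B}$ is a filter in $P(\mathcal{N})$, the meet $\hat{P}_1(\varepsilon)\wedge\hat{P}_2(\eta)$ lies in $\mathcal{B}$ if and only if both $\hat{P}_1(\varepsilon)\in\mathcal{B}$ and $\hat{P}_2(\eta)\in\mathcal{B}$ (forward direction by upward closure, backward by closure under finite meets). Hence the index set over which the infimum is taken factorises as $\{\varepsilon\,|\,\hat{P}_1(\varepsilon)\in\mathcal{B}\}\times\{\eta\,|\,\hat{P}_2(\eta)\in\mathcal{B}\}$, a Cartesian product depending on $\hat{C}$ and $\hat{B}$ independently.

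Next, using the ordering of Definition \ref{def:order}, under which $\varepsilon+i\eta$ is ranked by the real quantity $\varepsilon+\eta$, the infimum over this product decouples as $\inf(\varepsilon+\eta)=\inf\varepsilon+\inf\eta$. Reading off the two components separately identifies
\[
f_{\hat{A}}(\mathcal{B})=f_{\hat{C}}(\mathcal{B})+i\,f_{\hat{B}}(\mathcal{B}),
\]
where $f_{\hat{C}},f_{\hat{B}}:\mathcal{Q}(P(\mathcal{N}))\to\Rl$ are the ordinary self-adjoint observable functions of $\hat{C}$ and $\hat{B}$. I would then invoke the self-adjoint version of this theorem (the de Groote result of \cite{Groote2007a}): under the homeomorphism $\mathcal{Q}(P(\mathcal{N}))\cong\Omega(\mathcal{N})$ the functions $f_{\hat{C}}$ and $f_{\hat{B}}$ are carried precisely to the Gel'fand transforms $\overline{\hat{C}}$ and $\overline{\hat{B}}$. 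Since every character $\chi\in\Omega(\mathcal{N})$ is $\Cl$-linear, $\chi(\hat{A})=\chi(\hat{C})+i\chi(\hat{B})$, i.e. $\overline{\hat{A}}=\overline{\hat{C}}+i\overline{\hat{B}}$; combining this with the displayed identity shows that $f_{\hat{A}}$ corresponds to $\overline{\hat{A}}$ under the same isomorphism, which is exactly the restriction of the Gel'fand transform to $\mathcal{N}_{normal}$.

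The main obstacle I anticipate is the decoupling step: one must verify that taking the infimum with respect to the non-antisymmetric order of Definition \ref{def:order} genuinely produces $f_{\hat{C}}(\mathcal{B})+i\,f_{\hat{B}}(\mathcal{B})$ rather than merely some number lying on the anti-diagonal line $\{\,\varepsilon+\eta=\mathrm{const}\,\}$, along which that order is degenerate. This forces one to fix, once and for all, the convention that the componentwise infima $\inf\varepsilon$ and $\inf\eta$ are the chosen representatives of the real and imaginary parts, and then to check that this convention is compatible both with the product form $\hat{E}^{\hat{A}}_{\lambda}=\hat{P}_1(\varepsilon)\hat{P}_2(\eta)$ and with the $\Cl$-linearity of the Gel'fand transform, so that the self-adjoint identifications of the real and imaginary halves assemble consistently into the single statement for $\hat{A}$.
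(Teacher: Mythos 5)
Your proof is correct, but it takes a genuinely different route from the paper's. The paper treats this theorem as a direct generalisation of de Groote's Theorem 2.9 for self-adjoint operators: its ``proof'' is essentially a citation, resting on Proposition 2.17 of \cite{Groote2007} together with the continuity result of Theorem \ref{the:contin}, i.e.\ it works with the normal operator's spectral family abstractly and never decomposes $\hat{A}$. You instead reduce to the self-adjoint case by the Cartesian decomposition $\hat{A}=\hat{C}+i\hat{B}$, prove the factorisation $f_{\hat{A}}=f_{\hat{C}}+if_{\hat{B}}$ from the filter structure of a quasipoint (the meet $\hat{P}_1(\varepsilon)\wedge\hat{P}_2(\eta)$ lies in $\mb$ iff both factors do) and the decoupling of the infimum under the order of Definition \ref{def:order}, and then transport the known self-adjoint identifications across $\mathcal{Q}(P(\mathcal{N}))\cong\Omega(\mathcal{N})$ using $\Cl$-linearity of characters. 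Amusingly, this is almost exactly the argument the paper itself gives \emph{later} (culminating in Theorem \ref{the:sum} and equation \ref{equ:aib}), but with the logical direction reversed: the paper derives $f_{\hat{A}}=f_{\hat{C}}+if_{\hat{B}}$ as a \emph{consequence} of the restriction theorem, whereas you use it as the engine that \emph{proves} the restriction theorem. What your route buys is self-containedness: it needs only de Groote's self-adjoint theorem and elementary lattice/order arguments, with no appeal to the continuity machinery. What the paper's route buys is brevity and independence from the Cartesian decomposition (hence no need to worry about the degeneracy of the order). Your flagged caveat is genuine --- the order of Definition \ref{def:order} is not antisymmetric, so the infimum in $\Cl$ is only defined up to the anti-diagonal equivalence $\varepsilon+\eta=\mathrm{const}$ --- and your fix (taking the componentwise infima as the canonical representative) is precisely the convention the paper adopts implicitly in its proof of Theorem \ref{the:sum}, so the two treatments are consistent.
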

This theorem was proved in \cite{Groote2007} (Theorem 2.9) for self-adjoint operators. The generalisation to normal operators is very straightforward and rests on proposition 2.17 in \cite{Groote2007} and theorem \ref{the:contin} above.

The above theorem has an important consequence. In particular, given a normal operator $\hat{A}=\hat{C}+i\hat{B}$ then the Gel'fand transform is
\be
F_{\hat{A}}=F_{\hat{C}}+iF_{\hat{B}}
\ee
However, since $F_{\hat{A}}=f_{\hat{A}}$ it follows that
\be\label{equ:aib}
f_{\hat{A}}=f_{\hat{C}}+if_{\hat{B}}
\ee

A straightforward consequence of theorem \ref{the:restriction} is
\begin{Theorem}
Given a von Neumann algebra $\mathcal{N}$ with $\mathcal{O}(\mathcal{N})$ the set of all observable functions, then 
\be
\mathcal{O}(\mathcal{N})=C(\mathcal{Q}(P(V)), \Cl)
\ee
iff $\mathcal{N}$ is abelian. Where here $C(\mathcal{Q}(P(V)), \Cl)$ denotes the set of all bounded continuous functions $\mathcal{Q}(P(V))\rightarrow  \Cl$.
\end{Theorem}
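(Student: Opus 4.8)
The plan is to prove the two implications of the biconditional separately, leaning on Theorem \ref{the:restriction} together with classical Gel'fand--Naimark duality, and to reduce the harder direction to the self-adjoint statement already available in \cite{Groote2007}.

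For the direction ``$\mathcal{N}$ abelian $\Rightarrow$ equality'', I would first observe that in an abelian von Neumann algebra every element commutes with its adjoint, so $\mathcal{N}_{normal}=\mathcal{N}$ and hence $\mathcal{O}(\mathcal{N})=\{f_A\mid A\in\mathcal{N}\}$. By Theorem \ref{the:restriction} the assignment $A\mapsto f_A$ is, up to the homeomorphism $\mathcal{Q}(P(\mathcal{N}))\cong\Omega(\mathcal{N})$, precisely the Gel'fand transform of $\mathcal{N}$. Gel'fand--Naimark then guarantees that this transform is a surjective isometric $*$-isomorphism onto $C(\Omega(\mathcal{N}),\Cl)$, so every bounded continuous complex-valued function on $\Omega(\mathcal{N})$, equivalently on $\mathcal{Q}(P(V))$, is realised as some $f_A$. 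This yields $\mathcal{O}(\mathcal{N})=C(\mathcal{Q}(P(V)),\Cl)$.

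For the converse I would argue contrapositively and reduce to the self-adjoint case. Assume $\mathcal{O}(\mathcal{N})=C(\mathcal{Q}(P(V)),\Cl)$ and restrict attention to the real-valued members of this set, namely $C(\mathcal{Q}(P(V)),\Rl)$. If $f$ is such a real-valued observable function, then $f=f_A$ for some normal $A\in\mathcal{N}$, and by the theorem establishing $im\,f_A=sp(A)$ we get $sp(A)=im(f)\subseteq\Rl$; a normal operator with real spectrum is self-adjoint, so $A\in\mathcal{N}_{sa}$. Conversely, every self-adjoint observable function is real-valued. Hence the hypothesis forces $C(\mathcal{Q}(P(V)),\Rl)$ to coincide exactly with the set of self-adjoint observable functions of $\mathcal{N}$. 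By the self-adjoint version of this result (de Groote, \cite{Groote2007}), that equality can hold only when $\mathcal{N}$ is abelian, which completes the contrapositive.

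The main obstacle, and the only point where content beyond bookkeeping enters, is this converse: one must rule out that a non-abelian algebra could ``accidentally'' realise every continuous function through its normal elements. The reduction above isolates that difficulty into the already-established self-adjoint statement, so the remaining verification is just that the real-valued normal observable functions are exactly the self-adjoint ones, which follows cleanly from $im\,f_A=sp(A)$ and the spectral characterisation of self-adjointness. An alternative, more self-contained route would note that $C(\mathcal{Q}(P(V)),\Cl)$ is a commutative algebra under pointwise operations, whereas via the decomposition (\ref{equ:aib}), $f_{\hat{A}}=f_{\hat{C}}+if_{\hat{B}}$, the normal observable functions fail to be closed under products once $\mathcal{N}$ contains non-commuting self-adjoint operators; I would nonetheless keep the self-adjoint reduction as the primary argument, since it is shorter and the paper already flags the statement as a direct corollary of Theorem \ref{the:restriction}.
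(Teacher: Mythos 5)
Your proof is correct, and it actually does more than the paper itself, which offers no written proof at all: the theorem is simply announced as ``a straightforward consequence of theorem \ref{the:restriction}''. Your forward implication is precisely that intended consequence --- in an abelian algebra $\mathcal{N}_{normal}=\mathcal{N}$, Theorem \ref{the:restriction} identifies $A\mapsto f_A$ with the Gel'fand transform up to the homeomorphism $\mathcal{Q}(P(\mathcal{N}))\cong\Omega(\mathcal{N})$, and Gel'fand--Naimark surjectivity onto $C(\Omega(\mathcal{N}),\Cl)$ finishes it (compactness of $\Omega(\mathcal{N})$ also makes the boundedness requirement automatic). Where you genuinely add value is the converse: Theorem \ref{the:restriction} is stated only \emph{for} abelian algebras, so the ``only if'' direction cannot literally be a consequence of it, a gap the paper's one-line justification glosses over. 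Your reduction handles this cleanly: real-valued members of $\mathcal{O}(\mathcal{N})$ are exactly the observable functions of self-adjoint operators (via $im\, f_A=sp(A)$ and the fact that a normal operator with real spectrum is self-adjoint), so the hypothesised equality forces the self-adjoint observable functions to exhaust $C(\mathcal{Q}(P(\mathcal{N})),\Rl)$, and de Groote's self-adjoint iff-theorem then yields abelianness. This is exactly the external result the paper implicitly leans on, and isolating it is the right move. Two minor remarks: what you call a ``contrapositive'' is really a direct argument (equality $\Rightarrow$ self-adjoint equality $\Rightarrow$ abelian), which is harmless; and your sketched alternative via failure of closure under pointwise products would need real work to substantiate, so you are right to demote it to a remark rather than rely on it.
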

\subsection{Relation Between Observable Functions and Normal Operator Functions}
In the previous section we have already seen that it is possible to deduce that, given a normal operator $\hat{A}=\hat{C}+i\hat{B}$, the corresponding normal-observable function $f_{\hat{A}}$ is such that $f_{\hat{A}}=f_{\hat{C}}+if_{\hat{B}}$, where $f_{\hat{C}}$ and $f_{\hat{B}}$ are the observable functions of $\hat{C}$ and $\hat{B}$. In this section, however, we would like to give a more constructive proof of this result. To this end we first of all have to introduce the diagonal map between two sets as follows\footnote{We can obviously turn the set $\mathcal{Q}(\mathcal{P}(V))$ into a category by considering only the identity morphisms, then the map $\bigtriangleup$ would be the diagonal functor. We then consider $\Cl$ as a poset ordered by the ordering defined in \ref{def:order} and the product $\Rl\times\Rl$ as the product poset. In this line of reasoning then the map $+_{\Cl} \circ \langle f_{\hat{C}},f_{\hat{B}}\rangle\circ \bigtriangleup$ becomes trivially a functor.  }
\ba
\bigtriangleup: \mathcal{Q}(\mathcal{P}(V))&\rightarrow&\mathcal{Q}(\mathcal{P}(V))\times \mathcal{Q}(\mathcal{P}(V))\\
\mb&\mapsto&(\mb,\mb)
\ea
As can be seen from the definition, the diagonal map simply assigns to each object $\mb$ a pair of copies of itself $(\mb,\mb)$. Given such a map we now define the following:
\ba
\mathcal{Q}(\mathcal{P}(V))&\xrightarrow{\bigtriangleup}&\mathcal{Q}(\mathcal{P}(V))\times \mathcal{Q}(\mathcal{P}(V))\xrightarrow{\langle f_A, f_B\rangle}\Rl\times\Rl\xrightarrow{+_{\Cl}}\Cl\\
\mb&\mapsto&(\mb,\mb)\mapsto \left(inf\{\varepsilon|\hat{E}^{\hat{C}}_{\varepsilon}\in \mathcal{B}\}, inf\{\eta|\hat{E}^{\hat{B}}_{\eta}\in \mathcal{B}\} \right) \mapsto  \left( inf\{\varepsilon|\hat{E}^{\hat{C}}_{\varepsilon}\in \mathcal{B}\}+(i) \cdot inf\{\eta|\hat{E}^{\hat{B}}_{\eta}\in \mathcal{B}\}\right)\nonumber
\ea
where the map $+_{\Cl}$ is the isomorphism defined by
\ba
+_{\Cl}\Rl\times\Rl&\rightarrow&\Cl\\
(a,b)&\mapsto&a+ib
\ea
Given the above functor we attempt the following conjecture:
\begin{Conjecture}
Given the observable functions
\ba
f_C:\mathcal{Q}(\mathcal{P}(V))&\rightarrow& \Rl\\
\mathcal{B}&\mapsto&inf\{\varepsilon|\hat{E}^{\hat{C}}_{\varepsilon}\in \mathcal{B}\}
\ea
and 
\ba
f_B:\mathcal{Q}(\mathcal{P}(V))&\rightarrow& \Rl\\
\mathcal{B}&\mapsto&inf\{\eta|\hat{E}^{\hat{B}}_{\eta}\in \mathcal{B}\}
\ea
then the following diagram commutes
\[\xymatrix{
\mathcal{Q}(\mathcal{P}(V))\ar[rr]^{\Delta}\ar[dd]^{f_A}&&\mathcal{Q}(\mathcal{P}(V))\times\mathcal{Q}(\mathcal{P}(V))\ar[dd]^{\langle f_C, f_B\rangle}\\
&&\\
\Cl&&\Rl\times \Rl\ar[ll]^{+_{\Cl}}\\
}\]

\end{Conjecture}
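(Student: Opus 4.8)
The plan is to prove commutativity by a direct diagram chase, reducing it to the identity $f_{\hat A}=f_{\hat C}+if_{\hat B}$ already recorded in equation (\ref{equ:aib}), but now derived constructively from the definition of $f_{\hat A}$ as an infimum rather than from the Gel'fand transform of Theorem \ref{the:restriction}. Fix a quasipoint $\mb\in\mathcal{Q}(\mathcal{P}(V))$. Chasing along the top-right path gives $+_{\Cl}\circ\langle f_C,f_B\rangle\circ\Delta(\mb)=f_C(\mb)+if_B(\mb)$, so it suffices to show that $f_{\hat A}(\mb)=f_C(\mb)+if_B(\mb)$, where by definition $f_{\hat A}(\mb)=\inf\{\lambda\in\Cl\mid \hat E^{\hat A}_{\lambda}\in\mb\}$ with the infimum taken in the ordering of Definition \ref{def:order}.

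As a first step I would invoke the spectral decomposition of the normal operator established in the opening Theorem, namely $\hat E^{\hat A}_{\lambda}=\hat P_1(\varepsilon)\hat P_2(\eta)$ for $\lambda=\varepsilon+i\eta$, where $\hat P_1$ and $\hat P_2$ are the spectral families of $\hat C$ and $\hat B$. Because $\hat C$ and $\hat B$ commute, the product $\hat P_1(\varepsilon)\hat P_2(\eta)$ is the projection onto the intersection of the two ranges, i.e. the lattice meet $\hat P_1(\varepsilon)\wedge\hat P_2(\eta)$. The essential structural input is then the filter property of the quasipoint $\mb$: since $\mb$ is upward closed and closed under finite meets, one has $\hat P_1(\varepsilon)\wedge\hat P_2(\eta)\in\mb$ if and only if $\hat P_1(\varepsilon)\in\mb$ and $\hat P_2(\eta)\in\mb$. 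Consequently the index set factorizes,
\[
\{(\varepsilon,\eta)\mid \hat E^{\hat A}_{\varepsilon+i\eta}\in\mb\}=S_1\times S_2,\qquad S_1:=\{\varepsilon\mid \hat P_1(\varepsilon)\in\mb\},\ \ S_2:=\{\eta\mid \hat P_2(\eta)\in\mb\},
\]
and, since each spectral family is increasing and $\mb$ is upward closed, $S_1$ and $S_2$ are upper sets in $\Rl$ with $\inf S_1=f_C(\mb)$ and $\inf S_2=f_B(\mb)$.

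The remaining step is to evaluate the infimum over the product set in the ordering of Definition \ref{def:order}. Writing $g(\varepsilon+i\eta)=\varepsilon+\eta$, that ordering is exactly $\lambda\le\lambda'$ iff $g(\lambda)\le g(\lambda')$, so
\[
\inf_{\lambda}\,g(\lambda)=\inf_{\varepsilon\in S_1,\,\eta\in S_2}(\varepsilon+\eta)=\inf S_1+\inf S_2=f_C(\mb)+f_B(\mb).
\]
The candidate value $f_C(\mb)+if_B(\mb)$ has precisely this $g$-value and is therefore a greatest lower bound, giving $f_{\hat A}(\mb)=f_C(\mb)+if_B(\mb)$ and closing the chase. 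I expect the main obstacle to lie in this last step: the ordering of Definition \ref{def:order} is only a total \emph{preorder} on $\Cl$, so infima are a priori determined only up to the equivalence $g(\lambda)=g(\lambda')$. The point to argue carefully is that the product structure $S_1\times S_2$ singles out the corner point $f_C(\mb)+if_B(\mb)$ canonically, as the unique limit of the region minimizing $\varepsilon+\eta$ coordinatewise, so that the coordinatewise infima combine additively and the apparent ambiguity of the preorder does not affect the value of $f_{\hat A}(\mb)$. Verifying that this choice is the one consistent with the factorization, and hence with equation (\ref{equ:aib}), completes the proof.
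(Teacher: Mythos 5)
Your proof is correct and rests on the same two pillars as the paper's own argument: the filter properties of the quasipoint $\mb$ (closure under $\wedge$ and upward closure), which give the factorisation $\{(\varepsilon,\eta)\mid \hat{E}^{\hat{A}}_{\varepsilon+i\eta}\in\mb\}=S_1\times S_2$, and Definition \ref{def:order}, which reduces the complex infimum to a real infimum of sums. The route differs in its middle portion. The paper interposes the auxiliary self-adjoint operator $\hat{C}+\hat{B}$, proving first that $f_{\hat{A}}(\mb)=a+ib$ iff $f_{\hat{C}+\hat{B}}(\mb)=a+b$, and then the additivity $f_{\hat{C}+\hat{B}}=f_{\hat{C}}+f_{\hat{B}}$; you instead compute $\inf_{S_1\times S_2}(\varepsilon+\eta)=\inf S_1+\inf S_2$ directly, never mentioning $\hat{C}+\hat{B}$. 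Your route buys two things. First, it avoids the paper's identification $\hat{E}^{\hat{C}+\hat{B}}_{\sigma+\gamma}=\hat{E}^{\hat{C}}_{\sigma}\hat{E}^{\hat{B}}_{\gamma}$, which is delicate (the spectral family of a sum of commuting operators is not simply a reindexed product of the two spectral families), so working only with the product set $S_1\times S_2$ is safer; note that the equality of infima you use holds precisely because the index set is a genuine product, since for general sets one only has the inequality recorded in the paper's footnote. Second, you make explicit a point the paper passes over silently: since Definition \ref{def:order} is only a total preorder, ``the'' infimum is defined only up to the equivalence $\varepsilon+\eta=\varepsilon'+\eta'$, and one must argue that the corner value $f_C(\mb)+if_B(\mb)$ is the canonical representative singled out by the product structure --- the paper's final step, passing from $f_{\hat{C}}(\mb)+f_{\hat{B}}(\mb)=a+b$ to $f_{\hat{C}}(\mb)+if_{\hat{B}}(\mb)=a+ib$, tacitly makes this same choice. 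What the paper's longer route buys in exchange is the intermediate statements themselves, which it reuses afterwards to set up the isomorphism between normal-observable functions and pairs of observable functions.
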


Thus we are interested in showing, for each normal operator $\hat{A}$, that the associated normal-observable function $f_{\hat{A}}$ is equivalent to the composite map $+_{\Cl} \circ \langle f_{\hat{C}},f_{\hat{B}}\rangle\circ \bigtriangleup$  iff $\hat{A}=\hat{C}+i\hat{B}$.

To this end we recall that, a given $\mathcal{B}\in \mathcal{Q}(\mathcal{P}(V))$ we have that for $\hat{A}=\hat{C}+i\hat{B}$
\ba
f_A(\mathcal{B})&:=&inf\{\gamma|\hat{E}^{\hat{A}}_{\gamma}\in \mathcal{B}\}\\
&=&inf\{\varepsilon+i\eta|\hat{E}^{\hat{C}}_{\varepsilon}\hat{E}^{\hat{B}}_{\eta}\in \mb\}\\
&=&inf\{\varepsilon+i\eta|\hat{E}^{\hat{C}}_{\varepsilon}\wedge \hat{E}^{\hat{B}}_{\eta}\in \mb\}\\
&=&inf\{\varepsilon+i\eta|\hat{E}^{\hat{C}}_{\varepsilon}\in \mb\text{ and }\hat{E}^{\hat{B}}_{\eta}\in \mb\}
\ea
The third equation follows from the fact that a maximal dual ideal is closed under taking the $\wedge$ operation. 

However, utilising the ordering of $\Cl$ defined in \ref{def:order}\footnote{$(\lambda_1=\varepsilon_1+i\eta_1)\leq (\lambda_2=\varepsilon_2+i\eta_2)$ if $\varepsilon_1+\eta_1\leq \varepsilon_2+\eta_2$.} we obtain the following theorem: 
\begin{Theorem}
Given a normal operator $\hat{A}=\hat{C}+i\hat{B}$ and the associated self-adjoint operator $\hat{C}+\hat{B}$ then
\be
f_{\hat{A}}(\mb)=inf\{\gamma|\hat{E}^{\hat{A}}_{\gamma}\in \mathcal{B}\}\text{ iff } f_{\hat{C}+\hat{B}}(\mb)=inf\{\varepsilon+\eta|\hat{E}^{\hat{C}}_{\varepsilon}\in \mb\text{ and }\hat{E}^{\hat{B}}_{\eta}\in \mb\}
\ee

\end{Theorem}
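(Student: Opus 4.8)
The plan is to begin from the chain of equalities displayed immediately before the statement, which, using that $\mb$ is a maximal dual ideal closed under $\wedge$ and hence a prime filter in $P(V)$, already establishes
\be
f_{\hat{A}}(\mb)=\inf\{\varepsilon+i\eta\,|\,\hat{E}^{\hat{C}}_{\varepsilon}\in\mb\text{ and }\hat{E}^{\hat{B}}_{\eta}\in\mb\},
\ee
where the infimum is taken with respect to the order of Definition \ref{def:order}. The whole content of the theorem is then to show that this complex infimum is carried, under that order, onto the real infimum defining $f_{\hat{C}+\hat{B}}(\mb)$, and that the correspondence runs in both directions.

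First I would make precise the reduction of the complex infimum to a real one. By Definition \ref{def:order} one has $\varepsilon+i\eta\leq\varepsilon'+i\eta'$ exactly when $\varepsilon+\eta\leq\varepsilon'+\eta'$, so the order on $\Cl$ factors through the surjection
\be
q:\Cl\rightarrow\Rl,\qquad \varepsilon+i\eta\mapsto\varepsilon+\eta .
\ee
This $q$ is monotone and, on order-equivalence classes, an order isomorphism onto $\Rl$; consequently it intertwines the formation of infima, a lower bound of a set $S\subseteq\Cl$ mapping under $q$ to a lower bound of $q(S)$ and conversely. Applying $q$ to $\{\varepsilon+i\eta\,|\,\hat{E}^{\hat{C}}_{\varepsilon}\in\mb\text{ and }\hat{E}^{\hat{B}}_{\eta}\in\mb\}$ therefore sends the complex infimum above to $\inf\{\varepsilon+\eta\,|\,\hat{E}^{\hat{C}}_{\varepsilon}\in\mb\text{ and }\hat{E}^{\hat{B}}_{\eta}\in\mb\}$ in $\Rl$.

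Next I would identify this real infimum with $f_{\hat{C}+\hat{B}}(\mb)$. Since $\hat{C}$ and $\hat{B}$ are the commuting real and imaginary parts of the normal operator $\hat{A}$, the Second Analysis gives $\hat{E}^{\hat{C}+\hat{B}}_{\lambda}=\hat{E}^{\hat{C}}_{\varepsilon}\wedge\hat{E}^{\hat{B}}_{\eta}$ for $\lambda=\varepsilon+\eta$. Using once more that $\mb$ is a prime filter, $\hat{E}^{\hat{C}}_{\varepsilon}\wedge\hat{E}^{\hat{B}}_{\eta}\in\mb$ iff $\hat{E}^{\hat{C}}_{\varepsilon}\in\mb$ and $\hat{E}^{\hat{B}}_{\eta}\in\mb$, so by definition of the observable function
\be
f_{\hat{C}+\hat{B}}(\mb)=\inf\{\lambda\,|\,\hat{E}^{\hat{C}+\hat{B}}_{\lambda}\in\mb\}=\inf\{\varepsilon+\eta\,|\,\hat{E}^{\hat{C}}_{\varepsilon}\in\mb\text{ and }\hat{E}^{\hat{B}}_{\eta}\in\mb\},
\ee
which is exactly the right-hand equality. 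Chaining this with the previous step yields the forward implication; the converse is obtained by reading the same equalities backwards, which is legitimate because $q$ induces an order isomorphism between $\Cl/\!\sim$ and $\Rl$, so no information is lost in passing between the complex and real pictures.

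The delicate point, and the one I would treat most carefully, is precisely that Definition \ref{def:order} is only a total pre-order on $\Cl$: distinct complex numbers sharing the same value of $\varepsilon+\eta$ are order-equivalent, so the complex infimum is not a single complex number but an equivalence class. The argument must therefore be phrased in the quotient $\Cl/\!\sim\,\cong\Rl$, and the real work is to verify that the infimum descends well to this quotient and that $q$ genuinely intertwines the two infima. Once this is granted, the two stated equalities are the two coordinates of one and the same assertion read through the isomorphism $q$, which is what the ``iff'' expresses.
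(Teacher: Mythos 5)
Your proposal is correct and follows essentially the same route as the paper: both arguments rest on the single observation that, under the ordering of Definition \ref{def:order}, $a+ib\leq\sigma+i\gamma$ holds precisely when $a+b\leq\sigma+\gamma$, so that lower bounds (and hence the infimum) of the complex set $\{\varepsilon+i\eta\,|\,\hat{E}^{\hat{C}}_{\varepsilon}\in\mb\text{ and }\hat{E}^{\hat{B}}_{\eta}\in\mb\}$ correspond exactly to lower bounds of its image $\{\varepsilon+\eta\,|\,\hat{E}^{\hat{C}}_{\varepsilon}\in\mb\text{ and }\hat{E}^{\hat{B}}_{\eta}\in\mb\}$ in $\Rl$. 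Your explicit quotient map $q$ and your remark that the ordering is only a pre-order (so the complex infimum is determined only up to the equivalence identifying numbers with equal real-plus-imaginary sum) make precise a point the paper's proof passes over silently, but the underlying argument is the same.
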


\begin{proof}
Let us denote $inf\{\gamma|\hat{E}^{\hat{A}}_{\gamma}\in \mathcal{B}\}=\inf\{B\}$ and $inf\{\varepsilon+\eta|\hat{E}^{\hat{C}}_{\varepsilon}\in \mb\text{ and }\hat{E}^{\hat{B}}_{\eta}\in \mb\}=inf\{C\}$. We know that $\inf\{B\}=a+ib$ is such that for all $\sigma+i\gamma\in B$, $\inf\{B\}\leq \sigma+i\gamma$. \\
However, by the definition of ordering we know that  $\inf\{B\}=a+ib\leq \sigma+i\gamma$ if $a+b\leq \sigma+\gamma$. Since $\sigma+i\gamma\in B$ iff $ \sigma+\gamma\in C$, it follows that $\inf\{B\}=a+ib\leq \sigma+i\gamma$ if $inf\{C\}=a+b\leq \sigma+\gamma$. \\
On the other hand if $a+b=inf\{C\}$ then by definition $a+b\leq \sigma+\gamma$ for all $\sigma+\gamma\in C$. Again from the definition of ordering it then follows that $a+ib=inf\{B\}$.
Therefore
\be
a+ib=inf\{\varepsilon+i\eta|\hat{E}^{\hat{C}}_{\varepsilon}\in \mb\text{ and }\hat{E}^{\hat{B}}_{\eta}\in \mb\}\text{ iff }a+b=inf\{\varepsilon+\eta|\hat{E}^{\hat{C}}_{\varepsilon}\in \mb\text{ and }\hat{E}^{\hat{B}}_{\eta}\in \mb\}
\ee
\end{proof}
Let us now consider the following theorem:
\begin{Theorem}
Given a self-adjoint operator $\hat{A}=\hat{C}+\hat{D}$ then\footnote{Recall that generally for two functions we have $inf(f+g)\geq inf(f)+inf(g)$.} 
\be
f_{\hat{A}}=f_{\hat{C}}+f_{\hat{D}}
\ee

\end{Theorem}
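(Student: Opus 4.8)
The plan is to reduce the claimed identity to a pointwise statement about infima of Minkowski sums. Since both sides are functions on $\mathcal{Q}(P(V))$, it suffices to show $f_{\hat{A}}(\mb)=f_{\hat{C}}(\mb)+f_{\hat{D}}(\mb)$ for every quasipoint $\mb$. The starting point is the formula established in the previous theorem (applied with $\hat{B}$ replaced by $\hat{D}$), namely $f_{\hat{A}}(\mb)=inf\{\varepsilon+\eta\,|\,\hat{E}^{\hat{C}}_{\varepsilon}\in\mb\text{ and }\hat{E}^{\hat{D}}_{\eta}\in\mb\}$, which is valid because within the abelian context $V$ the operators $\hat{C}$ and $\hat{D}$ commute and a quasipoint is closed under the meet operation. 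Writing $A_C:=\{\varepsilon\,|\,\hat{E}^{\hat{C}}_{\varepsilon}\in\mb\}$ and $A_D:=\{\eta\,|\,\hat{E}^{\hat{D}}_{\eta}\in\mb\}$, both of which are upward-closed subsets of $\Rl$ (since the spectral families are increasing and $\mb$ is upward closed), the right-hand side above is precisely $inf(A_C+A_D)$, the infimum of the Minkowski sum, while $f_{\hat{C}}(\mb)=inf A_C$ and $f_{\hat{D}}(\mb)=inf A_D$.

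I would then prove $inf(A_C+A_D)=inf A_C+inf A_D$ in the two standard directions. For $\geq$: any admissible pair $(\varepsilon,\eta)$ satisfies $\varepsilon\geq inf A_C$ and $\eta\geq inf A_D$, so $\varepsilon+\eta\geq inf A_C+inf A_D$, and taking the infimum over all such pairs gives $f_{\hat{A}}(\mb)\geq f_{\hat{C}}(\mb)+f_{\hat{D}}(\mb)$. For $\leq$: given any $\delta>0$, choose $\varepsilon\in A_C$ with $\varepsilon<inf A_C+\delta/2$ and $\eta\in A_D$ with $\eta<inf A_D+\delta/2$; the pair $(\varepsilon,\eta)$ is then admissible, whence $f_{\hat{A}}(\mb)\leq\varepsilon+\eta<inf A_C+inf A_D+\delta$, and letting $\delta\to 0$ yields $f_{\hat{A}}(\mb)\leq f_{\hat{C}}(\mb)+f_{\hat{D}}(\mb)$. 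Combining the two inequalities gives equality at every $\mb$, hence $f_{\hat{A}}=f_{\hat{C}}+f_{\hat{D}}$ as functions.

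The main obstacle is the $\leq$ direction, which is exactly where the general inequality $inf(f+g)\geq inf f+inf g$ quoted in the footnote can fail to be an equality. What rescues us is that the defining conditions $\hat{E}^{\hat{C}}_{\varepsilon}\in\mb$ and $\hat{E}^{\hat{D}}_{\eta}\in\mb$ impose no joint constraint on $\varepsilon$ and $\eta$: the index $\varepsilon$ ranges over $A_C$ and $\eta$ over $A_D$ independently, so each may be driven to its own infimum simultaneously. This decoupling is the structural reason the inequality collapses to an equality here, and it is precisely the feature absent in the naive $inf(f+g)$ situation where both summands are evaluated at a common argument. One could alternatively deduce the result abstractly from Theorem \ref{the:restriction} together with the linearity of the Gel'fand transform, but the direct infimum argument is preferable because it makes transparent why the footnote's generic strict inequality becomes an equality in this case.
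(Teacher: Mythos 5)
Your proof is correct and takes essentially the same route as the paper: both rewrite $f_{\hat{A}}(\mathcal{B})$ as $\inf\{\varepsilon+\eta \mid \hat{E}^{\hat{C}}_{\varepsilon}\in\mathcal{B}\text{ and }\hat{E}^{\hat{D}}_{\eta}\in\mathcal{B}\}$ (using that a quasipoint is closed under $\wedge$ and upward closed) and then split the infimum over the decoupled constraints into the sum of the two infima. The only difference is one of rigor: where the paper justifies this final splitting with a one-line remark about spectra, you prove it with an explicit $\delta/2$-argument, which tightens exactly the step at which the footnote's generic inequality $\inf(f+g)\geq \inf(f)+\inf(g)$ could otherwise fail to be an equality.
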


\begin{proof}
\ba
f_{\hat{A}}(\mb)=f_{\hat{C}+\hat{B}}(\mb)&=&inf\{\sigma+\gamma|\hat{E}^{\hat{C}+\hat{B}}_{\sigma+\gamma}\in \mb\}\\
&=&inf\{\sigma+\gamma|\hat{E}^{\hat{C}}_{\sigma}\hat{E}^{\hat{B}}_{\gamma}\in \mb\}\\
&=&inf\{\sigma+\gamma|\hat{E}^{\hat{C}}_{\sigma}\in \mb\text{ and }\hat{E}^{\hat{B}}_{\gamma}\in \mb\}\\
&=&inf\{\sigma|\hat{E}^{\hat{C}}_{\sigma}\in \mb\}+inf\{\gamma|\hat{E}^{\hat{B}}_{\eta}\in \mb\}
\ea

\end{proof}
This follows since for any element $\varepsilon$ in the spectrum of $\hat{C}$ and any element $\eta$ in the spectrum of $\hat{B}$, such that $\hat{E}^{\hat{C}}_{\varepsilon}\in \mb$ and $\hat{E}^{\hat{B}}_{\eta}\in \mb$, the combination $\varepsilon+\eta$ will belong to the spectrum of $\hat{C}+\hat{B}$ and, consequently to the spectrum of $\hat{A}$. Recall also that we are now in $\Rl$.

Putting the results of the two theorems together we obtain
\be
f_{\hat{A}}(\mb)=inf\{\gamma|\hat{E}^{\hat{A}}_{\gamma}\in \mathcal{B}\}\text{ iff }(f_{\hat{C}}+f_{\hat{B}})(\mb)=inf\{\varepsilon|\hat{E}^{\hat{C}}_{\varepsilon}\in \mb\}+inf\{\eta|\hat{E}^{\hat{B}}_{\eta}\in \mb\}
\ee 
This converges to the following theorem:
\begin{Theorem}\label{the:sum}
Given a normal operator $\hat{A}=\hat{C}+i\hat{B}$ then
\be
f_A= f_C+if_B
\ee

\end{Theorem}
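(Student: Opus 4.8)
The plan is to obtain Theorem \ref{the:sum} by assembling the two theorems just established and by upgrading the equality of the real quantity $a+b$ to an equality of the full complex number $a+ib$. Working pointwise, I would fix a quasipoint $\mb\in\mathcal{Q}(\mathcal{P}(V))$ and write $S_C:=\{\varepsilon\in\Rl\mid\hat{E}^{\hat{C}}_\varepsilon\in\mb\}$ and $S_B:=\{\eta\in\Rl\mid\hat{E}^{\hat{B}}_\eta\in\mb\}$, so that by definition of the observable functions $f_C(\mb)=\inf S_C$ and $f_B(\mb)=\inf S_B$.

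First I would recall, exactly as in the computation preceding the first theorem of this subsection, that since $\hat{E}^{\hat{A}}_\gamma=\hat{E}^{\hat{C}}_\varepsilon\hat{E}^{\hat{B}}_\eta=\hat{E}^{\hat{C}}_\varepsilon\wedge\hat{E}^{\hat{B}}_\eta$ for $\gamma=\varepsilon+i\eta$, and since $\mb$ is a maximal dual ideal and hence closed under $\wedge$, the defining set of $f_A(\mb)$ factorises:
\[
f_A(\mb)=\inf\{\gamma\mid\hat{E}^{\hat{A}}_\gamma\in\mb\}=\inf\{\varepsilon+i\eta\mid\varepsilon\in S_C,\ \eta\in S_B\},
\]
where the infimum is taken with respect to the ordering of Definition \ref{def:order}.

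The key step is then to evaluate this infimum. Because Definition \ref{def:order} ranks $\varepsilon+i\eta$ solely by the real quantity $\varepsilon+\eta$, minimising over the set $\{\varepsilon+i\eta\mid\varepsilon\in S_C,\ \eta\in S_B\}$ amounts to minimising the additive, separable objective $\varepsilon+\eta$ with $\varepsilon$ and $\eta$ ranging independently; the minimisation therefore decouples into $\inf_{\varepsilon\in S_C}\varepsilon+\inf_{\eta\in S_B}\eta$, and the minimising element is $\varepsilon_{\min}+i\eta_{\min}$ with $\varepsilon_{\min}=\inf S_C=f_C(\mb)$ and $\eta_{\min}=\inf S_B=f_B(\mb)$. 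This is precisely where the present theorem goes beyond its predecessors: the earlier results only pinned down the real combination $a+b=\inf\{\varepsilon+\eta\mid\ldots\}=f_C(\mb)+f_B(\mb)$ (the first via the $\Cl$-ordering correspondence, the second via $f_{\hat{C}+\hat{B}}=f_{\hat{C}}+f_{\hat{B}}$), whereas the independence of the two constraints lets us read off the real and imaginary parts separately. Hence $f_A(\mb)=f_C(\mb)+i\,f_B(\mb)$, and since $\mb$ was arbitrary we conclude $f_A=f_C+if_B$.

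The main obstacle is exactly this separation of real and imaginary parts: one must argue that the constrained minimisation over $S_C$ and $S_B$ genuinely decouples, i.e. that no trade-off between $\varepsilon$ and $\eta$ is forced. This rests on the constraints $\hat{E}^{\hat{C}}_\varepsilon\in\mb$ and $\hat{E}^{\hat{B}}_\eta\in\mb$ being imposed independently, which is a consequence of the product structure $\hat{P}(\varepsilon,\eta)=\hat{P}_1(\varepsilon)\hat{P}_2(\eta)$ of the joint spectral family from the spectral theorem for normal operators. A minor technical point, should the infima fail to be attained, is to pass to decreasing sequences in $S_C$ and $S_B$ and invoke the characterisation of complex limits through their real and imaginary parts (as recorded in the definitions preceding Theorem \ref{the:contin}), ensuring the limiting element is indeed $f_C(\mb)+i\,f_B(\mb)$.
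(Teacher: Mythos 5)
Your proof is correct and rests on the same ingredients as the paper's: the factorisation $\hat{E}^{\hat{A}}_{\varepsilon+i\eta}=\hat{E}^{\hat{C}}_{\varepsilon}\wedge\hat{E}^{\hat{B}}_{\eta}$, the closure of the quasipoint $\mb$ under $\wedge$, and the fact that Definition \ref{def:order} ranks a complex number solely by the sum of its real and imaginary parts. Where you differ is in the final step, and this difference is a genuine improvement. The paper's proof chains its two preceding theorems to obtain $f_{\hat{C}}(\mb)+f_{\hat{B}}(\mb)=a+b$ and then simply asserts ``thus $f_{\hat{C}}(\mb)+if_{\hat{B}}(\mb)=a+ib$''; taken literally that inference is incomplete, because the ordering of Definition \ref{def:order} is only a pre-order (distinct complex numbers with equal coordinate sums are equivalent), so knowing the sum $a+b$ does not by itself identify $a$ and $b$ separately. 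Your decoupling argument supplies exactly the missing justification: since the objective $\varepsilon+\eta$ is separable and the constraints $\varepsilon\in S_C$, $\eta\in S_B$ are independent (a consequence of the product structure $\hat{P}(\varepsilon,\eta)=\hat{P}_1(\varepsilon)\hat{P}_2(\eta)$ of the joint spectral family), the infimum over the product set is approached only at $(\inf S_C,\inf S_B)$, which pins down the distinguished representative $f_{\hat{C}}(\mb)+if_{\hat{B}}(\mb)$ componentwise rather than merely up to equivalence. Your remark on non-attained infima (passing to decreasing sequences and using the componentwise characterisation of complex limits) covers the remaining technicality. In short, you prove the same statement by the same route, but you make rigorous the one step the paper glosses over.
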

\begin{proof}
For any $\mb$ we get 
\be
a+ib=f_{\hat{A}}(\mb)=inf\{\gamma|\hat{E}^{\hat{A}}_{\gamma}\in \mathcal{B}\}
\ee 
but from above we know that this is the case iff 
\be
(f_{\hat{C}}+f_{\hat{B}})(\mb)=inf\{\varepsilon|\hat{E}^{\hat{C}}_{\varepsilon}\in \mb\}+inf\{\eta|\hat{E}^{\hat{B}}_{\eta}\in \mb\}=a+b
\ee
Thus 
\be
inf\{\varepsilon|\hat{E}^{\hat{C}}_{\varepsilon}\in \mb\}+(i) inf\{\eta|\hat{E}^{\hat{B}}_{\eta}\in \mb\}=a+ib
\ee
\end{proof}
Thus it would seem that $f_{\hat{A}}\equiv +_{\Cl} \circ \langle f_C,f_B\rangle\circ \bigtriangleup$. So for each $\mb\in \mathcal{O}(p(V))$ we have that
\be
f_{\hat{A}}\equiv +_{\Cl} \circ \langle f_C,f_B\rangle\circ \bigtriangleup(\mb)
\ee 
Can this then be generalised to the entire set of normal-observable functions and observable functions?

Denoting the set of all observable functions as $Ob$ and the set of all normal-observable functions as $On$ we define the map
\ba
h:Ob\times Ob&\rightarrow & On\\
(f_{\hat{C}},f_{\hat{B}})&\mapsto&h(f_{\hat{C}},f_{\hat{B}}):=f_{\hat{C}}+if_{\hat{B}}
\ea
where $\hat{A}=\hat{C}+i\hat{B}$ and $f_{\hat{A}}=f_{\hat{C}}+if_{\hat{B}}\equiv +_{\Cl} \circ \langle f_C,f_B\rangle\circ \bigtriangleup$.\\

The map $h$ is an isomorphism:
\begin{enumerate}
\item [i)] {\it 1:1}. Given $\hat{A}=\hat{C}+i\hat{B}$ and $\hat{D}=\hat{C}^{'}+\hat{D}^{'}$, if we assume that $h(f_{\hat{A}})=h(f_{\hat{D}})$ then for any $\mb$, $h(f_{\hat{A}})(\mb)=h(f_{\hat{D}})(\mb)$. Therefore $f_{\hat{C}}(\mb)+if_{\hat{B}}(\mb)=f_{\hat{C}^{'}}(\mb)+if_{\hat{B}^{'}}(\mb)=a+ib$. Thus $f_{\hat{C}}(\mb)=f_{\hat{C}^{'}}(\mb) $ and $f_{\hat{B}}(\mb)=f_{\hat{B}^{'}}(\mb)$. Since we are considering all $\mb$ then $f_{\hat{C}}=f_{\hat{C}^{'}}$ and $f_{\hat{B}}=f_{\hat{B}^{'}}$.
\item [ii)] {\it Onto}. This follows from theorem \ref{the:sum}.
\item [iii)] {\it Inverse}. The inverse would be
\ba
j:On&\rightarrow &Ob\times Ob\\
f_{\hat{A}}&\mapsto&j(f_{\hat{A}}):=(f_{\hat{C}},f_{\hat{B}})
\ea
where $\hat{A}=\hat{C}+i\hat{B}$.

\end{enumerate}
So, to each pair of normal-observable function one can uniquely associate a pair of observable functions. This is the lattice theoretical analogue of the fact that the spectrum of normal operators is defined in terms of the spectrum of the self-adjoint operators comprising it.

\section{Interpreting the Observable Functions for Normal Operators}
What does the above analysis tell us about the daseinisation of normal operators?
In particular, can we reproduce the analysis done in \cite{Doering2008} for the normal-observable functions and give a physical interpretation to the normal functions?

Since for each normal operator $\hat{A}$ we have that
\be
f_{\hat{A}}:\mathcal{Q}(\mathcal{P}(V))\rightarrow Sp(\hat{A})
\ee
and since the Stone spectrum $\mathcal{Q}(\mathcal{P}(V))$ is isomorphic to the Gel'fand spectrum when the algebra $V$ is abelian, it follows that each map $f_{\hat{A}}$ can be seen as generalisation of the Gel'fand transform of $\hat{A}$.
Given this, we would also like to interpret the map $f_{\delta^o(\hat{A})_V}$ as the generalised Gel'fand transform of $\delta^o(\hat{A})$.
In order to do this we need to reproduce all the calculations done in \cite{Doering2008}, but as applied to normal operators.

We first of all need to introduce the notion of a cone.
\begin{Definition}
Given a filter $F$ of a lattice $L$, a cone over $F$ in $L$ is the smallest filter in $L$ that contains $F$:
\be
C_{L}(F):=\{b\in L|\exists a\in F: a\leq b\}
\ee
This is basically an upper set of $F$ in $L$. 
\end{Definition}

We then want to show the validity of the following theorem:
\begin{Theorem}
Given a normal operator $\hat{A}$ in von Neumann algebra $N$ (not necessarily abelian) then for all von Neumann sub-algebras $S\subseteq N$ and filters $\mb\in\mathcal{O}(P(S))$ we have 
\be
f_{\delta^o(\hat{A})_S}(\mb)=f_{\hat{A}}(C_N(\mb))
\ee

\end{Theorem}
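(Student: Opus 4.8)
The plan is to follow the template of the self-adjoint proof in \cite{Doering2008}, reducing the identity of two observable-function values to an identity of the two index sets over which their defining infima are taken. First I would unfold both sides using the definition of the normal-observable function. For the left-hand side I would invoke the result established in the Daseinisation section, namely that the spectral family of $\delta^o(\hat{A})_S$ is the inner-daseinised family $\lambda\mapsto\delta^i(\hat{E}^{\hat{A}}_{\lambda})_S$ (the de Groote-type statement, transported to the $\Cl$-indexed setting with the ordering of Definition \ref{def:order}). Hence
\be
f_{\delta^o(\hat{A})_S}(\mb)=\inf\{\lambda\in\Cl\,|\,\delta^i(\hat{E}^{\hat{A}}_{\lambda})_S\in\mb\},
\ee
while by the definition of $f_{\hat{A}}$ evaluated at the cone,
\be
f_{\hat{A}}(C_N(\mb))=\inf\{\lambda\in\Cl\,|\,\hat{E}^{\hat{A}}_{\lambda}\in C_N(\mb)\},
\ee
both infima being taken with respect to the ordering of Definition \ref{def:order}.

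The core of the argument is then a single set-equality lemma: for every $\lambda\in\Cl$,
\be
\delta^i(\hat{E}^{\hat{A}}_{\lambda})_S\in\mb\iff\hat{E}^{\hat{A}}_{\lambda}\in C_N(\mb).
\ee
I would establish this by two inclusions, using only (i) the extremal characterisation of inner daseinisation, that $\delta^i(\hat{E}^{\hat{A}}_{\lambda})_S$ is the largest projection of $P(S)$ dominated by $\hat{E}^{\hat{A}}_{\lambda}$, and (ii) the fact that $\mb$ is a filter, i.e.\ upward closed in $P(S)$. For the forward direction, if $\delta^i(\hat{E}^{\hat{A}}_{\lambda})_S\in\mb$ then, since $\delta^i(\hat{E}^{\hat{A}}_{\lambda})_S\leq\hat{E}^{\hat{A}}_{\lambda}$, the witness $a:=\delta^i(\hat{E}^{\hat{A}}_{\lambda})_S\in\mb$ places $\hat{E}^{\hat{A}}_{\lambda}$ into the cone $C_N(\mb)$ by definition. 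For the converse, if $\hat{E}^{\hat{A}}_{\lambda}\in C_N(\mb)$ I would pick $\hat{R}\in\mb$ with $\hat{R}\leq\hat{E}^{\hat{A}}_{\lambda}$; since $\hat{R}\in P(S)$ the extremal property forces $\hat{R}\leq\delta^i(\hat{E}^{\hat{A}}_{\lambda})_S$, and upward closure of $\mb$ then yields $\delta^i(\hat{E}^{\hat{A}}_{\lambda})_S\in\mb$.

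Once the two index sets are shown to coincide, their infima agree automatically, giving $f_{\delta^o(\hat{A})_S}(\mb)=f_{\hat{A}}(C_N(\mb))$; because the sets are literally equal, the conclusion is insensitive to the particular details of the complex ordering, so no separate compatibility check with Definition \ref{def:order} is needed at that last step. The step I expect to be the main obstacle is not the filter/cone manipulation, which is purely lattice-theoretic and identical in spirit to the self-adjoint case, but rather justifying the first displayed identity: one must be sure that $\lambda\mapsto\delta^i(\hat{E}^{\hat{A}}_{\lambda})_S$ genuinely constitutes a spectral family indexed over $\Cl$ by the ordering of Definition \ref{def:order}, and that the infimum representation of $f_{\delta^o(\hat{A})_S}$ remains valid in this complex-indexed, non-totally-ordered setting. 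This is precisely where the earlier transport of de Groote's argument to normal operators is needed, and I would cite it explicitly rather than reprove it.
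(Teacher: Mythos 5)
Your proposal is correct and takes essentially the same approach as the paper: the paper does not spell the argument out but states that the proof is identical to the self-adjoint case of \cite{Doering2008} with the infimum taken in the complex ordering of Definition \ref{def:order}, and your reconstruction --- reading off the spectral family of $\delta^o(\hat{A})_S$ as $\lambda\mapsto\delta^i(\hat{E}^{\hat{A}}_{\lambda})_S$ and then proving the cone/filter set equality $\delta^i(\hat{E}^{\hat{A}}_{\lambda})_S\in\mb\iff\hat{E}^{\hat{A}}_{\lambda}\in C_N(\mb)$ via the extremal property of inner daseinisation and upward closure --- is precisely that argument. Your closing remark, that equality of the two index sets renders the final step insensitive to the non-total complex ordering, is a correct observation that the paper leaves implicit.
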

The proof is identical to the one in \cite{Doering2008} but now with the difference that the infimum is taken in the complex numbers ordered by ordering defined in \ref{def:order}. 
Given the normal operator $\hat{A}=\hat{C}+i\hat{B}$, since $\delta^o(\hat{A})\leq\delta^o(\hat{C})+i\delta^o(\hat{B})$ then 
\be
f_{\delta^o(\hat{A})_S}(\mb)\leq f_{\delta^o(\hat{C})_S}(\mb)+if_{\delta^o(\hat{B})_S}(\mb)
\ee
where $f_{\delta^o(\hat{C})_S}$ and $f_{\delta^o(\hat{B})_S}$ are the observable functions for the self-adjoint operators $\hat{C}$ and $\hat{B}$.
By considering $N=\mb(\mh)$ we obtain 
\be
f_{\delta^o(\hat{A})_V}(\mb)=f_{\hat{A}}(C_{\mb(\mh)}(\mb))
\ee
for all stages $V\in \mv(\mh)$ and all filters $\mb\in \mathcal{O}(P(V))$.

Combining all the results obtained so far we can write, as done in \cite{Doering2008}, the Gel'fand transform of the daseinsed normal operators in terms of observable functions of the non-daseinised normal operator:
\be
\overline{\delta^o(\hat{A})_V}(\lambda)=\langle\lambda, \delta^o(\hat{A})_V\rangle=f_{\delta^o(\hat{A})_V}(\mb_{\lambda})=f_{\hat{A}}(C_{\mb(\mh)}(\mb_{\lambda}))
\ee
where $\mb_{\lambda}$ is the ultrafilter associated with $\lambda$ as defined in equation \ref{ali:stone}.

We can define the antonymous functions for normal operators analogously to the definitions for self-adjoint operators:
\begin{Definition}
An antonymous function for the normal operator $\hat{A}$ is the function
\ba
g_{\hat{A}}:\mathcal{O}(P(V))&\rightarrow& \Cl\\
\mb&\mapsto&sup\{\lambda\in\Cl\hat{1}-\hat{E}^{\hat{A}}_{\lambda}\in \mb\}
\ea
\end{Definition}
\noindent It is easy to show that $im \left(g_{\hat{A}}\right)=sp(\hat{A})$.

It is worth pointing out that in a recent paper \cite{dewitt}, the authors show how self-adjoint operators in standard quantum theory can be represented by certain real valued functions called $q$-observables. In particular, given a von Neumann algebra $\mn$, a $q$-observable is a join-preserving function $o:P(\mn)\rightarrow\overline{\Rl}$\footnote{Here $\overline{\Rl}$ represents the extended reals: $\overline{\Rl}=\{-\infty\}\cup\Rl\cup\{\infty\}$ and $P(\mn)$ the lattice of projection operators in the algebra $\mn$.} which satisfy certain properties. The then show that there exists a bijective correspondence between the set $QO(P(\mn), \overline{\Rl})$ of $q$-observables and the set $SA(\mn)$ of self-adjoint operators affiliated\footnote{A self adjoint operator is said to be affiliated with $\mn$ if all the projection operators of its spectral decomposition lie in $\mn$. If $\hat{A}$ is bounded then $\hat{A}\in \mn_{sa}$, if $\hat{A}$ is unbounded then it is affiliated with $\mn$} with $\mn$. Such a correspondence is given in terms of an adjunction relation: each $o\in QO(P(\mn), \overline{\Rl})$ has as a right adjoint an extended\footnote{Here an extended spectral family is simply a spectral family whose definition was extended to $\overline{\Rl}$.}, right continuous, spectral family $\hat{E}^{o}\in SF(\overline{\Rl}, P(\mn))$ and conversely any $\hat{E}\in SF(\overline{\Rl}, P(\mn))$ has a left adjoint $o^{\hat{E}}\in QO(P(\mn), \overline{\Rl})$. \\
The authors then proceeded in showing that for $\mathcal{M}\subseteq\mn$ then $o^{\hat{A}}|_{P(\mathcal{M})}=o^{\delta^o(\hat{A})_{\mathcal{M}}}$, such that $o^{\delta^o(\hat{A})_{\mathcal{M}}}\dashv(\delta^i(\hat{E}^{\hat{A}}_r)_{\mathcal{M}})_{r\in \overline{R}}$.\\
A similar relation holds for the newly defined $q$-antonymous functions and outer inner daseinisation.

With this analysis, a much deeper mathematical understanding of daseinisation of self adjoint operators and their representation via maps from the spectral presheaf and the (real) quantity value object is obtained (see discussion in \cite{dewitt}).\\
Since the tools utilised in that paper can all be extended to the situation of normal operators we assume that a similar analysis can be done for normal operators. This would be a very interesting endeavour since it will make the mathematical significance of deaseinisation of normal operators much more clear. However, this analysis is left for future work.

\subsection{Physical Interpretation of the Arrow $\breve{\delta}(\hat{A}):\us\rightarrow\Cl^{\leftrightarrow}$ }
Given what has been said in the above sections it is clear that the arrow $\breve{\delta}(\hat{A}):\us\rightarrow\Cl^{\leftrightarrow}$ has the same exact physical interpretation as did $\breve{\delta}(\hat{A}):\us\rightarrow\Rl^{\leftrightarrow}$ for a self adjoint operator $\hat{A}$. Namely, given a state $|\psi\rangle$ then expectation value of the normal operator $\hat{A}$ is
\be
\langle\psi|\hat{A}|\psi\rangle=\int^{||\hat{A}||}_{-||\hat{A}||}\lambda d\langle\psi\hat{E}^{\hat{A}}_{\lambda}|\psi\rangle
\ee
Now, given the observable and antonymous functions defined above, we can re-write those expressions as
\be
\langle\psi|\hat{A}|\psi\rangle=\int^{f_{\hat{A}}(T^{|\psi\rangle})}_{f_{\hat{A}}(T^{|\psi\rangle})}\lambda d\langle\psi\hat{E}^{\hat{A}}_{\lambda}|\psi\rangle
\ee
where 
\be
T^{|\psi\rangle}:=\{\hat{\alpha}\in P(\mh)|\hat{\alpha}\geq|\psi\rangle\langle\psi|\}
\ee
is a maximal filter in $P(\mh)$. Thus we can see how $f_{\hat{A}}(T^{|\psi\rangle})$ represents the maximal value $\hat{A}$ can have, while $g_{\hat{A}}(T^{|\psi\rangle})$ would represent the minimum, i.e.
\be
g_{\hat{A}}(T^{|\psi\rangle})<\langle\psi|\hat{A}|\psi\rangle<f_{\hat{A}}(T^{|\psi\rangle})
\ee
Clearly if $|\psi\rangle$ is an eigenstate of $\hat{A}$, then the above inequalities all become equalities
\be
g_{\hat{A}}(T^{|\psi\rangle})=\langle\psi|\hat{A}|\psi\rangle=f_{\hat{A}}(T^{|\psi\rangle})
\ee

\section{Topos Notion of a one Parameter Group}
Since we now have the topos definitions of both the real and complex quantity value objects,
we can define the topos notion of a one parameter group with the parameter taking values either in the real value object or in the complex value object.
We will start with the former. 
Let us consider a one parameter group $\{\alpha(t)|t\in\Rl\}$ which defines an automorphisms of $\mh$.
We would now like to internalise such an object, i.e. to define the topos analogue of the automorphisms group $H=\{\alpha(t)|t\in \Rl\}$.
We know that for each element in this group we obtain the induced geometric morphisms
\ba
\alpha(t)^*:Sh(\mv(\mathcal{N}))&\rightarrow& Sh(\mv(\mathcal{N}))\\
\ps{S}&\mapsto&\alpha(t)^*\ps{S}
\ea
such that $\alpha_{\rho}(t)^*\ps{S}(V):=\ps{S}(\alpha_{\rho}(t)V)$.

Such an action, however, gives rise to twisted presheaves. To solve this problem we need to apply the methods defined in \cite{Flori2011} and use as the new base category the category $\mv_f(\mh)$ which is fixed, i.e. we do not allow any group to act on it. In Section \ref{sec:stone} we describe in more details how sheaves on the new category $\mv_f(\mh)$ are defined. 

We now define the internal group $\ps{H}$ over the new base category $\mv_f(\mh)$ as follows:
\begin{Definition}
The internal group $\ps{H}$ is the presheaf defined on
\begin{enumerate}
\item Objects: for each $V\in \mv_f(\mathcal{N})$ we obtain $\ps{H}_V=H$.
\item Morphisms: These are simply the identity maps. 
\end{enumerate}
\end{Definition}
\noindent It is straightforward to see that $\Gamma(\ps{H})=H$. 

We now would like to define the group $\ps{H}$ as a one parameter group of transformations, with parameter taking values in the quantity value object $\ps{\Rl}^{\leftrightarrow }$. \\
Generally, a one parameter group of transformations $\{\alpha(t)|\forall t\in \Rl\}$ is a representation of the additive abelian group $(\Rl,+)$. However, 
as shown in \cite{Doering2008}, $\ps{\Rl}^{\leftrightarrow }$ is only a commutative monoid, not an abelian group since, although addition ($+:\ps{\Rl}\times\ps{\Rl}\rightarrow \ps{\Rl}$) is well defined, subtraction is not. Fortunately, this difficulty is not insurmountable.

In order to extent a semigroup with unit to a full group, one strategy to use is the well known Grothendieck k-Construction already mentioned in \cite{Doering2008}. Such a construction is defined as follows:
\begin{Definition}
A group completion of a monoind $M$ is an abelian group $k(M)$ together
with a monoid map $\theta : M\rightarrow k(M)$ which is universal. Therefore, given
any monoid morphism $\phi: M\rightarrow G$, where $G$ is an abelian group, there exists
a unique group morphism $\phi^{'}: k(M)\rightarrow G$ such that $\phi$ factors through $\phi^{'}$,
i.e., the following diagram commutes
\[\xymatrix{
M\ar[rr]^{\phi}\ar[ddr]^{\theta}&&G\ar[ddl]^{\phi^{'}}\\
&&\\
&K(M)&\\
}\]

\end{Definition}
It is easy to show that any $k(M)$ is unique up to isomorphisms.
As showed in \cite{Doering2008} the construction of $k(M)$ is via an equivalence class.
This is because what is missing is the inverse (subtraction) operation, however, given two elements $(a, b)\in M\times M$, if we think of them as meaning $a-b$, then we notice that $a-b=c-d$ iff $a+d=c+b$. Thus one defines an equivalence relation on $M\times M$ as follows:
\be\label{equ:equi}
(a,b)\simeq(c,d)\text{ iff }\exists e\in M\text{ such that }a + d + e = b + c + e
\ee
Using this definition of equivalence we then equate $k(M)$ to precisely such a collection of equivalence classes where, again, each of them should be thought of as representing the subtraction of the two terms involved. This leads to the following definition:
\begin{Definition}
 The Grothendieck completion of an abelian monoid $M$ is the pair $(k(M),\theta)$ defined as follows:
 \begin{enumerate}
\item  $k(M)$ is the set of equivalence classes $[a, b]$, where the equivalence relation
is defined in \ref{equ:equi}. A group law on $k(M) $ is defined by
 \begin{enumerate}
\item [i)] $[a, b] + [c, d] := [a + c, b + d]$
\item [ii)] $0_{k(M)} := [0_M, 0_M]$
 \item  [iii)] $-[a, b] := [b, a]$
 \end{enumerate}
where $0_M$ is the unit in the abelian monoid $M$.
\item  The map $\theta : M\rightarrow  k(M) $ is defined by $\theta(a) := [a, 0]$ for all $a \in M$.

 \end{enumerate}

\end{Definition}

For the case at hand we then define the equivalence relation on $\ps{\Rl^{\leftrightarrow}}\times\ps{\Rl^{\leftrightarrow}}$ as follows:\\
for each context $V$ we have
\ba
& & \big((\mu_1, \nu_1),(\mu_2, \nu_2)\big)\equiv \big((\mu^{'}_1, \nu^{'}_1),(\mu^{'}_2, \nu^{'}_2)\big) \text{ iff } \exists (\mu, \nu)\in \ps{\Rl}_V \text{ such that }\\
& &(\mu_1, \nu_1) + (\mu^{'}_2, \nu^{'}_2) + (\mu, \nu) = (\mu_2, \nu_2) + (\mu^{'}_1, \nu^{'}_1) + (\mu, \nu)
\ea
Given such an equivalence class we can now define the object $k(\ps{\Rl^{\leftrightarrow}})$ as follows:
\begin{Definition}
The presheaf $k(\ps{\Rl^{\leftrightarrow}})\in \Sets^{\mv(\mh)}$ is defined on:
\begin{enumerate}
\item  Objects: for each $V\in \mv(\mh)$ we obtain 
\be
k(\ps{\Rl^{\leftrightarrow}})_V := \{[(\mu,\nu), (\mu^{'},\nu^{'})] | \mu, \mu^{'}  \in OP(\downarrow V, \Rl), \nu, \nu^{'}\in OR(\downarrow V, \Rl)\}
\ee
where $[(\mu,\nu), (\mu^{'},\nu^{'})]$ denotes the k-equivalence class of $(\mu,\nu), (\mu^{'},\nu^{'})$.
\item Morphisms: for each $i_{V^{'}V}:V^{'}\subseteq V$ we obtain the arrow $k(\ps{\Rl^{\leftrightarrow}})(i_{V^{'}V}): k(\ps{\Rl^{\leftrightarrow}})_V\rightarrow k(\ps{\Rl^{\leftrightarrow}})_{V^{'}}$ defined by
\be
k(\ps{\Rl^{\leftrightarrow}})(i_{V^{'}V})[(\mu,\nu), (\mu^{'},\nu^{'})]:=[(\mu_{|V^{'}},\nu_{|V^{'}}), (\mu^{'}_{|V^{'}},\nu^{'})_{|V^{'}}]\text{ for all }[(\mu,\nu), (\mu^{'},\nu^{'})]\in k(\ps{\Rl^{\leftrightarrow}})_V
\ee
\end{enumerate}

\end{Definition}
In this way we have obtained an abelian group object $k(\ps{\Rl^{\leftrightarrow}})$.
Is it now possible to define a one parameter group of automorphisms in terms of such an abelian group?

Due to the cumbersome notation we will use $k(\ps{\Rl}^{\geq})$ instead of the full $k(\ps{\Rl^{\leftrightarrow}})$. Here $k(\ps{\Rl}^{\geq})$ is the $k$-extention of the presheaf $k(\ps{\Rl}^{\geq})$ which is defined as follows:
\begin{Definition}
The presheaf $k(\ps{\Rl}^{\geq})\in \Sets^{\mv(\mh)}$ is defined on:
\begin{enumerate}
\item  Objects: for each $V\in \mv(\mh)$ we obtain 
\be
k(\ps{\Rl}^{\geq})_V := \{[\mu,\nu] | \mu, \nu \in OR(\downarrow V, \Rl)\}
\ee
where $[\mu,\nu]$ denotes the k-equivalence class of $(\mu,\nu)$.
\item Morphisms: for each $i_{V^{'}V}:V^{'}\subseteq V$ we obtain the arrow $k(\ps{\Rl}^{\geq})(i_{V^{'}V}): k(\ps{\Rl}^{\geq})_V\rightarrow k(\ps{\Rl}^{\geq})_{V^{'}}$ defined by
\be
k(\ps{\Rl}^{\geq})(i_{V^{'}V})[\mu,\nu]:=[\mu_{V^{'}},\nu_{V^{'}}]\text{ for all }[\mu,\nu]\in k(\ps{\Rl}^{\geq})_V
\ee
\end{enumerate}
\end{Definition}
This restriction causes no trouble since $k(\ps{\Rl}^{\geq})\subset k(\ps{\Rl^{\leftrightarrow}})$ and the results have an easy generalisation to $k(\ps{\Rl^{\leftrightarrow}})$. The advantage of using $k(\ps{\Rl}^{\geq})$ is that the notation is much more clear to understand.

\subsection{One Parameter Group Taking Values in $k(\ps{\Rl}^{\geq})$}
With the above discussion in mind we attempt the following definition: 
\begin{Definition}
The presheaf $\ps{K}\in Sets^{\mv_f(\mh)}$ is defined on 
\begin{enumerate}
\item Objects: for each context $V$ we define $\ps{K}_V:=\{\alpha_{[\mu, \nu]}|[\mu, \nu]\in k(\ps{\Rl}^{\geq})_V\}$.
\item Morphisms: given the inclusion $i_{V^{'}V}:V^{'}\subseteq V$ we define $\ps{K}(i_{V^{'}V}):\ps{K}_V\rightarrow \ps{K}_{V^{'}}$ as $\alpha_{[\mu, \nu]}\mapsto \alpha_{[\mu_{V^{'}}, \nu_{V^{'}}]}$.
\end{enumerate}

\end{Definition}
This is clearly a presheaf since given $V^{''}\xrightarrow{i}V^{'}\xrightarrow{j}V$, 
\ba
\ps{K}(i_{V^{'}V}\circ j_{V^{''}V}):\ps{K}_V&\rightarrow& \ps{K}_{V^{''}}\\
\alpha_{[\mu, \nu]}&\mapsto& \alpha_{[\mu_{V^{''}}, \nu_{V^{''}}]}
\ea
while
\ba
\ps{K}( j_{V^{''}V})\circ \ps{K}(i_{V^{'}V}):\ps{K}_V&\rightarrow& \ps{K}_{V^{'}}\rightarrow \ps{K}_{V^{''}}\\
\alpha_{[\mu, \nu]}&\mapsto& \alpha_{[\mu_{V^{'}}, \nu_{V^{'}}]}\rightarrow  \alpha_{[(\mu_{V^{'}})|_{V^{''}}, (\nu_{V^{'}})|_{V^{''}}]}=\alpha_{[\mu_{V^{''}}, \nu_{V^{''}}]}
\ea
The presheaf $\ps{K}$ can be turned into a group by defining the additive operation, for all $V\in \mv(\mh)$,  as follows: 
\ba
+_V:\ps{K}_V\times \ps{K}_V&\rightarrow &\ps{K}_V\\
(\alpha_{[\mu_1, \nu_1]}, \alpha_{[\mu_2, \nu_2]})&\mapsto&+_V(\alpha_{[\mu_1, \nu_1]}, \alpha_{[\mu_2, \nu_2]}):=\alpha_{[\mu_1+\mu_2, \nu_1+\nu_2]}
\ea
From now on we will denote $+_V(\alpha_{[\mu_1, \nu_1]}, \alpha_{[\mu_2, \nu_2]})$ as $\alpha_{[\mu_1, \nu_1]}\circ  \alpha_{[\mu_2, \nu_2]}$. The presheaf $\ps{K}$ is clearly closed under such additive structure. The inverse is defined as follows:

\begin{Definition}
For each $V\in\mv(\mathcal{N})$ we have
\ba
-_V:\ps{K}_V&\rightarrow &\ps{K}_V\\
\alpha_{[\mu_1, \nu_1]}&\mapsto&-_V(\alpha_{[\mu_1, \nu_1]}):=\alpha_{-[\mu_1, \nu_1]}=\alpha_{[\nu_1, \mu_1]}
\ea 
\end{Definition}
The unit element at each $V$ is defined as $\alpha_{[0, 0]}$ where each $(0)$ is the constant map with value $0$, hence it is both order reversing and order preserving. 
We now want to show that the group axioms hold. \\
{\it Associativity}\\
\be
(\alpha_{[\mu_1, \nu_1]}\circ  \alpha_{[\mu_2, \nu_2]})\circ \alpha_{[\mu_3, \nu_3]}=\alpha_{[\mu_1+\mu_2, \nu_1+\nu_2]}
\circ  \alpha_{[\mu_3, \nu_3]}=\alpha_{[\mu_1+\mu_2+\mu_3, \nu_1+\nu_2+\nu_3]}
\ee
On the other hand 
\be
\alpha_{[\mu_1, \nu_1]}\circ  (\alpha_{[\mu_2, \nu_2]}\circ \alpha_{[\mu_3, \nu_3]})=\alpha_{[\mu_1, \nu_1]}\circ \alpha_{[\mu_2+\mu_3, \nu_2+\nu_3]} =\alpha_{[\mu_1+\mu_2+\mu_3, \nu_1+\nu_2+\nu_3]}
\ee
{\it Identity Axiom}\\
\be
\alpha_{[\mu_1, \nu_1]}\circ  \alpha_{[0,0]}=\alpha_{[\mu_1+0, \nu_1+0]}= \alpha_{[0,0]}\circ \alpha_{[\mu_1, \nu_1]}
\ee
{\it Inverse Axiom}\\
\be
\alpha_{[\mu_1, \nu_1]}\circ \alpha_{-[\mu_1, \nu_1]}=\alpha_{[\mu_1-\mu_1, \nu_1-\nu_1]}=\alpha_{[0,0]}
\ee
From the above it follows that:
\begin{Proposition}
The group abelian $\ps{K}$ is a one parameter group with the parameter taking values in $k(\ps{\Rl}^{\geq})$.
\end{Proposition}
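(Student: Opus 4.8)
The plan is to establish two things: that $\ps{K}$ is genuinely an abelian group object in $Sets^{\mv_f(\mh)}$, and that the assignment $[\mu,\nu]\mapsto\alpha_{[\mu,\nu]}$ exhibits it as a faithful internal representation of the additive group $k(\ps{\Rl}^{\geq})$. The latter is precisely what is meant by calling $\ps{K}$ a one parameter group whose parameter takes values in $k(\ps{\Rl}^{\geq})$, in direct analogy with the classical statement that a one parameter group is a representation of $(\Rl,+)$.

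First I would complete the verification of the group axioms. Associativity, the identity axiom and the inverse axiom have already been checked above, so the only property still needed to call $\ps{K}$ \emph{abelian} is commutativity; this is immediate from the commutativity of pointwise addition of (order reversing) real valued functions,
\[
\alpha_{[\mu_1,\nu_1]}\circ\alpha_{[\mu_2,\nu_2]}=\alpha_{[\mu_1+\mu_2,\nu_1+\nu_2]}=\alpha_{[\mu_2+\mu_1,\nu_2+\nu_1]}=\alpha_{[\mu_2,\nu_2]}\circ\alpha_{[\mu_1,\nu_1]}.
\]
At the same time I would record that $+_V$ is well defined on $k$-equivalence classes, so that the operation descends consistently to each fibre $\ps{K}_V$.

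Next I would introduce the candidate isomorphism $\Phi:k(\ps{\Rl}^{\geq})\rightarrow\ps{K}$ defined stagewise by $\Phi_V([\mu,\nu]):=\alpha_{[\mu,\nu]}$, and check that it is a morphism of presheaves, i.e.\ a natural transformation. Naturality is the commutativity of the restriction squares, and it holds by construction because $\ps{K}(i_{V'V})$ was defined to send $\alpha_{[\mu,\nu]}$ to $\alpha_{[\mu_{V'},\nu_{V'}]}$, mirroring exactly the action of $k(\ps{\Rl}^{\geq})(i_{V'V})$ on $[\mu,\nu]$. That $\Phi$ is a homomorphism of internal groups is the representation property
\[
\Phi_V\big([\mu_1,\nu_1]+[\mu_2,\nu_2]\big)=\alpha_{[\mu_1+\mu_2,\nu_1+\nu_2]}=\alpha_{[\mu_1,\nu_1]}\circ\alpha_{[\mu_2,\nu_2]}=\Phi_V([\mu_1,\nu_1])\circ\Phi_V([\mu_2,\nu_2]),
\]
which is simply the definition of $+_V$ on $\ps{K}$; this is the topos analogue of $\alpha(s)\alpha(t)=\alpha(s+t)$. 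Surjectivity of $\Phi$ at each stage is immediate from the definition of $\ps{K}_V$, and injectivity follows from the fact that the symbols $\alpha_{[\mu,\nu]}$ are faithfully indexed by their classes, so $\Phi$ is a monic and epic natural transformation, hence an isomorphism of internal abelian groups, giving $\ps{K}\cong k(\ps{\Rl}^{\geq})$ and establishing the claimed one parameter group.

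I expect the main obstacle to be conceptual rather than computational: making precise, inside $Sets^{\mv_f(\mh)}$, exactly what a one parameter group with parameter in $k(\ps{\Rl}^{\geq})$ should mean, and verifying that the faithful indexing $[\mu,\nu]\mapsto\alpha_{[\mu,\nu]}$ is compatible with restriction so that naturality and the representation property hold simultaneously at every context. In particular one must confirm that the $k$-equivalence relation is preserved under restriction to subcontexts $V'\subseteq V$, so that $\Phi$ is well defined as a genuine natural transformation and not merely as a stagewise family of maps.
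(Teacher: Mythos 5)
Your proposal is correct and takes essentially the same route as the paper: the paper's proof likewise consists of defining the stagewise map $p_V:k(\ps{\Rl}^{\geq})_V\rightarrow\ps{K}_V$, $[\mu,\nu]\mapsto\alpha_{[\mu,\nu]}$, and observing that it is a group homomorphism with respect to the additive structures on both sides. The only ingredient you omit is continuity --- the paper states the requirement as a \emph{continuous} group homomorphism and discharges it by equipping both presheaves with the discrete topology --- but since that choice makes continuity automatic, your argument (which additionally records commutativity, naturality under restriction, and stagewise bijectivity) covers everything essential.
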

\begin{proof}
To prove the above theorem we need to define a continuous group homomorphism between $k(\ps{\Rl}^{\geq})$ and $\ps{K}$.  This is easy to do and, for each $V\in\mv(\mn)$, it is defined as follows
\ba
p_V:k(\ps{\Rl}^{\geq})_V&\rightarrow& \ps{K}_V\\
\;[\mu,\nu]&\mapsto&\alpha_{[\mu,\nu]}
\ea
Recalling that $k(\ps{\Rl}^{\geq})$ is equipped with the additive operation $+:k(\ps{\Rl}^{\geq})\times k(\ps{\Rl}^{\geq})\rightarrow k(\ps{\Rl}^{\geq})$, for each context $V\in \mv(\mathcal{N})$ we have
\ba
+_V:k(\ps{\Rl}^{\geq})_V\times k(\ps{\Rl}^{\geq})_V&\rightarrow& k(\ps{\Rl}^{\geq})_V\\
([\mu_1, \nu_1], [\mu_2, \nu_2])&\mapsto&+_V([\mu_1, \nu_1], [\mu_2, \nu_2]):=[\mu_1+\mu_2, \nu_1+\nu_2]
\ea
In order to prove continuity we need to equip both sheaves with a topology. We simply choose the discrete topology. This makes the above map $p_V$ continuous.
\end{proof}
We now consider the presheaf $\underline\Rl$ which is defined as follows:
\begin{Definition}
The presheaf $\ps{\Rl}\in Sets^{\mv(\mathcal{N})}$ is defined on 
\begin{enumerate}
\item Objects: for each $V\in \mv(\mathcal{N})$ we obtain $\ps{\Rl}_V=\Rl$.
\item Morphisms: given the inclusion $i:V^{'}\subseteq V$ the corresponding prehseaf map is simply the identity map.
\end{enumerate}
\end{Definition}
It is possible to embed $\underline\Rl$ in $\ps{\Rl}^{\geq}$ since each real number $r\in \ps{\Rl}_V$ can be identified with the constant function $c_{r, V}:\downarrow V\rightarrow\Rl$, which has constant value $r$ for all $V\in\mv(\mn)$. 
Such a function is trivially order-reversing, hence it is an element of $\ps{\Rl}^{\geq}$. 
Moreover, the global sections of $\ps{\Rl}$ are given by constant functions $r : \mv(\mathcal{N})\rightarrow \Rl$ which are also global sections of $\ps{\Rl}^{\geq}$, thus $\underline\Rl\subset \ps{\Rl}^{\geq}$. 
However $\ps{\Rl}^{\geq}$ can be seen as a sub-object of $k \left( \ps{\Rl}^{\geq} \right)$ by sending each $\mu\in \ps{\Rl}^{\geq}_V$ to $[\mu, 0]\in k \left( \ps{\Rl}^{\geq}_V \right)$, thus $\ps{\Rl}\subseteq k \left( \ps{\Rl}^{\geq} \right)$. 
Our claim is that $\ps{H}$ is isomorphic to the subgroup of $\ps{K}$ generated by $\ps{\Rl}\subset \ps{\Rl}^{\geq}$.
\begin{Theorem}
The group $\ps{H}$ is isomorphism to the one-parameter subgroup of $\ps{K}$ generated by $\ps{\Rl}$.
\end{Theorem}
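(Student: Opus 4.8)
The plan is to exhibit an explicit isomorphism of presheaf-groups by sending the one-parameter family $\alpha(t)$ to the class built from the constant function of value $t$, and then to recognise this image as precisely the subgroup of $\ps{K}$ generated by $\ps{\Rl}$. The observation that makes everything fit together is that the chain of embeddings $\ps{\Rl}\subset\ps{\Rl}^{\geq}\subset k(\ps{\Rl}^{\geq})$ sends a real number $r$ to the class $[c_{r},0]$ of the constant function $c_{r}$, and that constant functions are insensitive to the restriction maps of $\ps{K}$ --- exactly as the (identity) restriction maps of $\ps{H}$ are insensitive to anything.

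First I would make the target subgroup explicit. Recall from the proof of the preceding Proposition that $p_{V}:[\mu,\nu]\mapsto\alpha_{[\mu,\nu]}$ defines a group isomorphism $p:k(\ps{\Rl}^{\geq})\xrightarrow{\sim}\ps{K}$. Under the embeddings above, the image of $\ps{\Rl}$ inside $k(\ps{\Rl}^{\geq})$ is the set of classes $[c_{r},0]$ with $r\in\Rl$. I would check that this image is already closed under the group operations, namely $[c_{r},0]+[c_{s},0]=[c_{r+s},0]$ and $-[c_{r},0]=[0,c_{r}]=[c_{-r},0]$, the last equality holding because $c_{r}+c_{-r}=0$, so that it is a subgroup isomorphic to $(\Rl,+)$. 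Applying $p$, the subgroup of $\ps{K}$ generated by $\ps{\Rl}$ is therefore exactly $\{\alpha_{[c_{r},0]}\mid r\in\Rl\}$.

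Next I would define the candidate isomorphism $\Phi:\ps{H}\rightarrow\ps{K}$ contextwise by
\[
\Phi_{V}:\ps{H}_{V}\rightarrow\ps{K}_{V},\qquad \alpha(t)\mapsto\alpha_{[c_{t},0]}.
\]
The homomorphism property at each stage is immediate from the additive structure of $\ps{K}$, since $\Phi_{V}(\alpha(t))\circ\Phi_{V}(\alpha(s))=\alpha_{[c_{t}+c_{s},0]}=\alpha_{[c_{t+s},0]}=\Phi_{V}(\alpha(t+s))$. For naturality I would chase the square associated with $i_{V'V}:V'\subseteq V$: the left edge is the identity, because the restriction maps of $\ps{H}$ are identities, while the right edge sends $\alpha_{[c_{t},0]}$ to $\alpha_{[(c_{t})_{|V'},0_{|V'}]}$; since the restriction of a constant function is again the constant function of the same value, both routes return $\alpha_{[c_{t},0]}$, so the square commutes and $\Phi$ is a natural transformation.

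Finally I would establish bijectivity onto the subgroup. Surjectivity is clear, as every generator $\alpha_{[c_{r},0]}$ equals $\Phi_{V}(\alpha(r))$. For injectivity, $\Phi_{V}(\alpha(t))=\Phi_{V}(\alpha(s))$ forces $[c_{t},0]=[c_{s},0]$ in $k(\ps{\Rl}^{\geq})_{V}$, and the Grothendieck equivalence relation then yields an $e$ with $c_{t}+e=c_{s}+e$, hence $c_{t}=c_{s}$ and $t=s$. The one genuinely delicate point, and the step I would treat most carefully, is the identification of the abstract one-parameter group $H=\{\alpha(t)\mid t\in\Rl\}$ with $(\Rl,+)$ underlying this last line: injectivity of $\Phi$ requires that $t\mapsto\alpha(t)$ be faithful, so I would either assume this faithfulness explicitly (the standard hypothesis for a one-parameter group, which excludes degenerate periodic reparametrisations) or replace $H$ by the quotient through which the parametrisation factors, after which the same computation delivers the isomorphism $\ps{H}\cong\langle\ps{\Rl}\rangle\subseteq\ps{K}$.
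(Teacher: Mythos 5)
Your proposal is correct and takes essentially the same route as the paper: both arguments rest on the constant functions $c_{r,V}$, the classes $[c_{r,V},0]\in k(\ps{\Rl}^{\geq})_V$, the homomorphism property via $c_{r,V}+c_{s,V}=c_{r+s,V}$, and naturality coming from the fact that constants restrict to constants of the same value. The only differences are minor: you run the map directly from $\ps{H}$ into $\ps{K}$ whereas the paper defines $f:\ps{\Rl}\rightarrow\ps{K}$ and then identifies $im(f)\simeq\ps{H}$ via $\alpha_{[c_{r,V},0]}\mapsto\alpha_{\rho}(c_{r,V}(V))$, and your explicit flagging of the faithfulness of $t\mapsto\alpha(t)$ makes visible an assumption the paper uses silently when it declares that identification to be ``clearly an isomorphism.''
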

\begin{proof}
We want to show that there exists a map
\be
f:\ps{\Rl}\rightarrow\ps{K}
\ee
such that for each context  $V$, $f_V$ is a continuous injective group homomorphism and $Im(f)\simeq \ps{H}\subset \ps{K}$.

We thus define, for each $V$ 
\ba
f_V:\ps{\Rl}_V&\rightarrow&\ps{K}_V\\
r&\mapsto&\alpha_{[c_{r,V},0]}
\ea
This is a well defined functor since for $V^{'}\subseteq V$ the following diagram commutes
\[\xymatrix{
\ps{\Rl}_V\ar[rr]^{f_V}\ar[dd]&&\ps{K}_V\ar[dd]\\
&&\\
\ps{\Rl}_{V^{'}}\ar[rr]_{f_{V^{'}}}&&\ps{K}_{V^{'}}\\
}\]
In fact we have
\be
\ps{K}(i_{V^{'}V})\circ f_V(r)=\ps{K}(i_{V^{'}V})\alpha_{[c_{r,V}, 0]}=\alpha_{[c_{r,V^{'}}, 0]}
\ee
while
\be
f_{V^{'}}\circ \ps{\Rl}(i_{V^{'}V})(r)=f_{V^{'}}(r)=\alpha_{[c_{r,V^{'}}, 0]}
\ee
Clearly $f$ is injective and continuous on the image.

We now need to check whether $f$ is a group homomorphism, i.e. we need to show that $f_V(r_1+r_2)=f_V(r_1)+f_V(r_2)$. We know that the left hand side is 
\be
f_V(r_1+r_2)=\alpha_{[c_{(r_1+r_2),V}, 0]}
\ee
However 
\ba
c_{(r_1+r_2),V}:\downarrow V&\rightarrow& \Rl\\
V^{'}&\mapsto&c_{(r_1+r_2),V}(V^{'})=r_1+r_2
\ea
while
\ba
+_V(c_{r_1, V}, c_{r_2, V}):\downarrow V&\rightarrow& \Rl\\
V^{'}&\mapsto&c_{r_1, V}(V^{'})+ c_{r_2, V}(V^{'})=r_1+r_2
\ea
Hence
\be
c_{(r_1+r_2),V}=c_{r_1, V}+ c_{r_2, V}
\ee
which implies that
\be
\alpha_{[c_{(r_1+r_2),V}, 0]}=\alpha_{[c_{r_1,V}+c_{r_2, V}, 0]}=\alpha_{[c_{r_1,V},0]}\circ \alpha_{[c_{r_2, V}, 0]}=f_V(r_1)+f_V(r_2)
\ee
We now want to show that $im(f)\simeq \ps{H}$. We therefore construct the map $i:im(f)\rightarrow\ps{H}$, such that for each $V$ we obtain
\ba
i_V:im(f)_V&\rightarrow&\ps{H}_V\\
\alpha_{[c_{r,V},0]}&\mapsto&i_V(\alpha_{[c_{r,V},0]}):=\alpha_{\rho}(c_{r,V}(V))
\ea
This is clearly an isomorphism.
We could have defined the map 
\ba
h_V:\ps{K}_V&\rightarrow&\ps{H}_V\\
\alpha_{[\mu, \nu]}&\mapsto&\alpha_{\rho}(\mu(V)+\nu(V))
\ea
but this would not have been 1:1.
\end{proof}

A real number $r\in\ps{\Rl}_V$ defines the pair $(c_{r,V} , c_{r,V} )$ given by of two copies of the constant function $c_{r, V}:\downarrow V\rightarrow\Rl$. Clearly such a function is both order-preserving and order-reversing, hence $(c_{r,V} , c_{r,V} )\in\ps{\Rl^{\leftrightarrow}}$. However $\ps{\Rl^{\leftrightarrow}}\subset k({\Rl}^{\leftrightarrow})$, therefore $\ps{\Rl}\subset k({\Rl}^{\leftrightarrow})$. Since all the results proved for $\ps{\Rl}^{\geq}$ hold for $\ps{\Rl^{\leftrightarrow}}$ but the constructions for $\ps{\Rl^{\leftrightarrow}}$ are more cumbersome, we will avoid reporting them here.
\subsection{One Parameter Group Taking Values in $k(\ps{\Cl}^{\geq})$}
We would now like to apply the same analysis but for the complex number object $\ps{\Cl^{\leftrightarrow}}$.
As before, we will consider the object $\ps{\Cl}^{\geq}$ (defined below) since it is more practical for notations.
All results will then translate in a simple way to $\ps{\Cl^{\leftrightarrow}}$. 
\begin{Definition}
The presheaf $\ps{\Cl}^{\geq}\in Sets^{\mv(\mathcal{N})}$ is defined on 
\begin{enumerate}
\item Objects: for each $V\in \mv(\mathcal{N})$ we obtain the set $\ps{\Cl}^{\geq}_V:=\{\mu|\text{ s.t. }\mu\in OR(\downarrow V,\Cl)\}$.
\item Morphisms: given $i_{V^{'}V}:V^{'}\subseteq V$ the presheaf morphism is defined by the restriction: $\ps{\Cl}^{\geq}_{V}\rightarrow \ps{\Cl}^{\geq}_{V^{'}}$; $\mu\mapsto\mu_{|V^{'}}$.
\end{enumerate}

\end{Definition}
We then define the $k$-extension $k(\ps{\Cl}^{\geq})$ as follows:
\begin{Definition}
The presheaf $k(\ps{\Cl}^{\geq})\in \Sets^{\mv(\mh)}$ is defined on:
\begin{enumerate}
\item  Objects: for each $V\in \mv(\mh)$ we obtain 
\be
k(\ps{\Cl}^{\geq})_V := \{[\mu,\nu] | \mu, \nu \in OR(\downarrow V, \Cl)\}
\ee
where $[\mu,\nu]$ denotes the k-equivalence class of $(\mu,\nu)$.
\item Morphisms: for each $i_{V^{'}V}:V^{'}\subseteq V$, we obtain the arrow $k(\ps{\Cl}^{\geq})(i_{V^{'}V}): k(\ps{\Cl}^{\geq})_V\rightarrow k(\ps{\Cl}^{\geq})_{V^{'}}$ defined by
\be
k(\ps{\Cl}^{\geq})(i_{V^{'}V})[\mu,\nu]:=[\mu_{V^{'}},\nu_{V^{'}}]\text{ for all }[\mu,\nu]\in k(\ps{\Cl}^{\geq})_V
\ee
\end{enumerate}

\end{Definition}
It is interesting to note how in $k(\ps{\Cl}^{\geq})$ it is possible to define complex conjugation. In particular we define, for each context $V$
\ba
*_V:k(\ps{\Cl}^{\geq})_V&\rightarrow& k(\ps{\Cl}^{\geq})_V\\
\;[\mu, \nu]&\mapsto&[\mu^*, \nu^*]
\ea
where $\mu^*(V):=(\mu(V))^*$.
Now if $(\mu, \nu)\simeq (\eta, \beta)$ then there exists an element $\gamma\in k(\ps{\Cl}^{\geq})$ such that $\mu+\beta+\gamma=\nu+\eta+\gamma$, which is defined for each $V^{'}\in\downarrow V$ as $\mu(V^{'})+\beta(V^{'})+\gamma(V^{'})=\nu(V^{'})+\eta(V^{'})+\gamma(V^{'})$, thus obtaining the equality of complex numbers $(a+b+c)+i(d+e+f)=(a_1+b_1+c)+i(d_1+e_1+f)$.
Applying the complex conjugation map we obtain $(\mu^*, \nu^*)\simeq (\eta^*, \beta^*)$ iff $\mu^*+\beta^*+\gamma^*=\nu^*+\eta^*+\gamma^*$ which, by applying the same reasoning, translates to $(a+b+c)-i(d+e+f)=(a_1+b_1+c)-i(d_1+e_1+f)$.
It follows that if $(\mu, \nu)\in [\mu, \nu]$ then $(\mu^*, \nu^*)\in [\mu^*, \nu^*]$. Thus $*_V$ is well defined.

We now want to show that $k(\Cl^{\geq})$ is a vector space over $\ps{R}$.
To this end we need to define multiplication with respect to an element in $\ps{\Rl}$.
We recall that each element $r\in \Rl$ is represented in $\ps{\Rl}$ as the global element $c_r\in\Gamma(\ps{\Rl})$ which, at each context $V\in\mv(\mh)$, defines a constant function $c_{r,V} :\downarrow V\rightarrow \Rl$.
We can then define multiplication with respect to such constant functions.
Thus, given a context $V\in \mv(\mh)$ we consider an element $[\mu,\nu]\in k(\ps{\Cl}^{\geq})_V$, and we define multiplication by $c_{r,V}$ as
\be\label{equ:real}
(c_{r,V} [\mu,\nu]) :=\begin{cases}[c_{r,V} \mu, c_{r,V}\nu] = [r\mu, r\nu] \text{ if } r \geq 0\\
-[c_{-r,V}\mu, c_{-r,V} \nu] = [-r\nu,-r\mu] \text{ if }  r < 0
\end{cases}
\ee
where for each $V^{'}\in \downarrow V$ the above is defined as
\be
(c_{r,V} [\mu,\nu])(V^{'})=c_{r,V}(V^{'})[\mu(V^{'}),\nu(V^{'})]=[r\mu(V^{'}),r\nu(V^{'})]=[r\mu,r\nu](V^{'})
\ee
for $r\geq0$, while for $r<0$ we have
\ba
(c_{r,V} [\mu,\nu])(V^{'})&=&c_{r,V}(V^{'})[\mu(V^{'}),\nu(V^{'})]=-|r|[\mu(V^{'}),\nu(V^{'})]=|r|[\nu(V^{'}),\mu(V^{'})]\\
&=&-[c_{-r,V}\mu, c_{-r,V} \nu] (V^{'})=[-r\nu,-r\mu](V^{'})\;.
\ea

Similarly we can also define multiplication with respect to a constant complex number. In fact, given a complex number $z=x+iy\in \Cl$ this represents a global element in $\Gamma(\ps{\Cl})$ such that, for each context $V\in \mv(\mh)$, we obtain the constant function $c_{z, V}:\downarrow V\rightarrow\Cl$. Thus, given an element $[\mu,\nu]\in k(\ps{\Cl}^{\geq})_V$ we define for each $V^{'}\in \downarrow V$, when  $x + iy \geq 0$
\be
(c_{z,V}[\mu,\nu])(V^{'})=c_{z,V}(V^{'})[\mu(V^{'}),\nu(V^{'})]=[z\mu(V^{'}),z\nu(V^{'})]=[z\mu,z\nu](V^{'})
\ee
On the other hand for $y=0$ and $x<0$, such that $x+iy<0$ we have
\be
(c_{z,V}[\mu,\nu])(V^{'})=c_{z,V}(V^{'})[\mu(V^{'}),\nu(V^{'})]=-[|z|\mu(V^{'}),|z|\nu(V^{'})]=[-z\nu,-z\mu](V^{'})
\ee
It is straight forward to see how this definition reduces to definition \eqref{equ:real}.\\
We now define the presheaf $\ps{Q}$ as follows:
\begin{Definition}
The presheaf $\ps{Q}\in\Sets^{\mv(\mathcal{N})}$ is defined on
\begin{enumerate}
\item Objects: for each $V\in\mv(\mathcal{N})$ we obtain the set $\ps{Q}_V:=\{\alpha([\mu, \nu])|[\mu, \nu]\in k(\ps{\Cl}^{\geq})_V\}$.
\item Morphisms: Given the inclusion map $V^{'}\subseteq V$ the corresponding presheaf map is $\ps{Q}(i_{V^{'}V}):\ps{Q}_V\rightarrow \ps{Q}_{V^{'}}$; $\alpha([\mu, \nu])\mapsto\alpha([\mu_{V^{'}}, \nu_{V^{'}}])$.
\end{enumerate}
\end{Definition}
 $\ps{Q}$ can be given a group structure in exactly the same way as was done for $\ps{K}$. It then follows that $\ps{Q}$ is the one parameter group defined via the group homomorphisms $h:k(\ps{\Cl}^{\geq})\rightarrow \ps{Q}$, which have components for each context
\ba
h_V:k(\ps{\Cl}^{\geq})_V&\rightarrow &\ps{Q}_V\\
\;[\mu, \nu]&\mapsto&\alpha([\mu, \nu])
\ea
\begin{Proposition}
The group $\ps{K}$ is a subgroup of $\ps{Q}$.
\end{Proposition}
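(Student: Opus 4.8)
The plan is to exhibit an explicit natural transformation $\ps{K}\hookrightarrow\ps{Q}$ whose components are injective group homomorphisms, and then identify $\ps{K}$ with its image. Since the fibres $\ps{K}_V$ and $\ps{Q}_V$ are respectively labelled by $k(\ps{\Rl}^{\geq})_V$ and $k(\ps{\Cl}^{\geq})_V$, through the symbols $\alpha_{[\mu,\nu]}$ and $\alpha([\mu,\nu])$, the whole problem reduces to showing that $k(\ps{\Rl}^{\geq})$ sits inside $k(\ps{\Cl}^{\geq})$ as a sub-presheaf, compatibly with the group structure; the map on $\ps{K}$ is then forced to be $\alpha_{[\mu,\nu]}\mapsto\alpha([\mu,\nu])$.

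First I would establish the object-level inclusion $k(\ps{\Rl}^{\geq})_V\subseteq k(\ps{\Cl}^{\geq})_V$ for each $V$. The key observation is that the inclusion $\Rl\hookrightarrow\Cl$ is order-preserving for the ordering of Definition \ref{def:order}: a real number $r=r+i0$ satisfies $r\geq s$ iff $(r+0)\geq(s+0)$, which is just the usual ordering on $\Rl$. Consequently every order-reversing function $\mu\in OR(\downarrow V,\Rl)$ becomes, after composition with this inclusion, an order-reversing function into $\Cl$, so that $OR(\downarrow V,\Rl)\subseteq OR(\downarrow V,\Cl)$. This produces a candidate map $[\mu,\nu]\mapsto[\mu,\nu]$ on $k$-classes.

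Next I would check that this map is well defined and injective, i.e. that the real Grothendieck equivalence is exactly the restriction of the complex one. Well-definedness is immediate, since a real witness $\gamma$ establishing $(\mu,\nu)\simeq(\mu',\nu')$ in $k(\ps{\Rl}^{\geq})_V$ is automatically a complex witness. For injectivity, suppose two real pairs become equivalent in $k(\ps{\Cl}^{\geq})_V$; then $\mu+\nu'+\gamma=\nu+\mu'+\gamma$ for some $\gamma\in OR(\downarrow V,\Cl)$, and because complex addition is componentwise and cancellative this forces $\mu+\nu'=\nu+\mu'$ as $\Cl$-valued functions. As all four functions are real-valued this is an equality in $\Rl$, so the pairs were already equivalent in $k(\ps{\Rl}^{\geq})_V$. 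Hence $k(\ps{\Rl}^{\geq})_V\hookrightarrow k(\ps{\Cl}^{\geq})_V$ is injective.

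Finally I would verify the structural compatibilities. Naturality is clear because both presheaf restriction maps act identically, by restricting $\mu$ and $\nu$ to $\downarrow V'$, while the inclusion leaves the pair untouched, so the relevant square commutes. The map is a group homomorphism since the operation on both $\ps{K}$ and $\ps{Q}$ is $\alpha_{[\mu_1,\nu_1]}\circ\alpha_{[\mu_2,\nu_2]}=\alpha_{[\mu_1+\mu_2,\nu_1+\nu_2]}$, and real-function addition agrees with its image under the inclusion; the unit $\alpha_{[0,0]}$ and the inverse $-[\mu,\nu]=[\nu,\mu]$ are preserved for the same reason. Being the image of an injective, natural group homomorphism, $\ps{K}$ is a subgroup of $\ps{Q}$. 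I expect the only genuine subtlety to be the injectivity step: one must use that the ordering on $\Cl$, although merely a preorder (distinct complex numbers may be order-equivalent), does not interfere with the cancellation argument, because the Grothendieck relation is phrased via honest equality of $\Cl$-valued functions rather than via the order itself.
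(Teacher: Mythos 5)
Your proposal is correct and takes essentially the same route as the paper: both realise $\ps{K}$ inside $\ps{Q}$ by factoring through the inclusion $k(\ps{\Rl}^{\geq})_V\rightarrow k(\ps{\Cl}^{\geq})_V\rightarrow\ps{Q}_V$, $[\mu,\nu]\mapsto[\mu+i0,\nu+i0]\mapsto\alpha([\mu,\nu])$, with the discrete topology making everything continuous. The only difference is one of detail: the paper simply asserts this chain of group homomorphisms, whereas you verify the steps it leaves implicit (compatibility of the real and complex orderings, well-definedness and injectivity on $k$-classes via cancellation in $\Cl$, naturality, and preservation of the group structure), which fills in rather than changes the argument.
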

\begin{proof}
$\ps{K}$ is the one-parameter subgroup generated by $k(\ps{\Rl}^{\geq})\subset k(\ps{\Cl}^{\geq})$. In fact we have the following continuous group homomorphisms for each $V$:
\ba
k(\ps{\Rl}^{\geq})_V&\rightarrow & k(\ps{\Cl}^{\geq})_V\rightarrow\ps{Q}_V\\
\;[\mu, \nu]&\mapsto&[\mu+i0, \nu+i0]\mapsto\alpha([\mu+i0, \nu+i0])=\alpha([\mu, \nu])
\ea
where again we are assuming the discrete topology.
\end{proof}
We now analyse the relation between $\ps{\Cl}$ and $k(\ps{\Cl}^{\geq})$. In particular, as for the real number object, we have that $\ps{\Cl}\subset k(\ps{\Cl}^{\geq})$. This inclusion is given by the following chain of inclusions for each $V$:
\ba
\gamma_V:\ps{\Cl}_V&\rightarrow&\ps{\Cl}^{\geq}_V\rightarrow k(\ps{\Cl}^{\geq})_V\\
t&\mapsto&c_{t, V}\rightarrow [c_{t, V}, 0]
\ea
The proof is straightforward.
As was done for the real valued number case we would like to define the topos analogue of the group $R_V:=\{\alpha_{\rho}(a+ib)|a+ib\in\Cl\}$, which takes values in $\Cl$.
We first construct the following presheaf:
\begin{Definition}
The presheaf $\ps{R}\in \Sets^{\mv_f(\mh)}$ is defined on:
\begin{enumerate} 
\item Objects: for each $V\in\mv(\mn)$ we obtain the set $\ps{R}_V:=\{\alpha_{\rho}(a+ib)|a+ib\in\Cl\}$.
\item Morphisms: for any map $i_{V^{'}V}:V^{'}\subseteq V$, $\ps{R}(i_{V^{'}V})$ is simply the identity.
\end{enumerate}
\end{Definition}
The fact that this presheaf is a group come from the fact that for each $V$, $\ps{R}_V$ is a group.
We would like to show that such a group object is a one parameter subgroup of $\ps{Q}$ taking its values in $\ps{\Cl}$.

To this end we construct the map $\phi:\ps{\Cl}\rightarrow \ps{Q}$, whose definition requires the factorisation via the map $\gamma$ above.
Thus, for each $V$, we have
\ba
\phi_V:\ps{\Cl}_V&\rightarrow& \ps{Q}_V\\
t&\mapsto&\phi_V(\gamma_V(t)):=\alpha([c_{t, V}, 0])
\ea
Clearly such a map is injective. We need to show that it is also an homomorphism. In particular we need to show that
\be
\phi_V(t_1+t_2)=\phi_V(t_1)\circ \phi_V(t_2)
\ee
By applying the definition we have
\be
\phi_V(t_1+t_2)=\alpha([c_{t_1+t_2, V}, 0])=\alpha([c_{t_1, V}, 0])\circ \alpha([c_{t_2, V}, 0])
\ee
where the last equation follows from the group laws in $\ps{Q}_V$.

On the other hand
\be
\phi_V(t_1)\circ \phi_V(t_2)=\alpha([c_{t_1, V}, 0])\circ \alpha([c_{t_2, V}, 0])
\ee
We thus obtain the one parameter subgroup of $\ps{Q}$ as the image of $\phi$, i.e. $im(\phi)\subset \ps{Q}$. 

We can then define the map $m:im(\phi)\rightarrow\ps{R}$ such that for each context $V$ we have
\ba
m_V:im(\phi)_V&\rightarrow& \ps{R}_V\\
\alpha([c_{t_2, V}, 0])&\mapsto&m_V(\alpha([c_{t_2, V}, 0])):=\alpha_{\rho}(c_{t_2, V}(V))
\ea
This is clearly an isomorphism.

\section{Stone's Theorem in the Language of Topos Theory}\label{sec:stone}
In the previous section we managed to define the topos analogue of the one parameter group of transformations. Since we are also able to define the topos analogue of self-adjoint operators, it is natural to ask whether it is possible to formulate Stone's theorem in the language of topos theory.
The ``standard'' definition of Stone's theorem is the following:
\begin{Theorem}
Every strongly continuous\footnote{Here strongly continuous means that, for any $\psi\in\mh$ and $t\rightarrow t_0$, then $ U_t(\psi)\rightarrow U(t_0)(\psi)$.}
 one-parameter group $\{U_t\}$, $(-\infty<t<\infty)$ of unitary transformations admits a spectral representation 
 \be
 U_t=\int_{\-\infty}^{\infty}e^{i\lambda t}d\hat{E}_{\lambda}
 \ee
 where $\{\hat{E}_{\lambda}\}$ is the spectral family such that\footnote{We will now introduce the following notations i) $\hat{A}_{\smile}\hat{B}$ indicates that $\hat{A}$ and $\hat{B}$ commute; ii) $\hat{A}_{\smile\smile}\hat{B}$ means that $\hat{A}$ commutes with $\hat{B}$ and any other operator which commutes with $\hat{B}$.} $(\hat{E}_{\lambda})_{\smile\smile}\{U_t\}$.
 \end{Theorem}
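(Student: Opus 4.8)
The plan is to reduce the statement to the spectral theorem for self-adjoint operators by extracting from the group $\{U_t\}$ its infinitesimal generator. I would define the operator $\hat{A}$ by
\be
\hat{A}\psi:=\frac{1}{i}\lim_{t\to 0}\frac{U_t\psi-\psi}{t}
\ee
on the domain $D(\hat{A})$ of those $\psi\in\mh$ for which this strong limit exists, the aim being to show that $\hat{A}$ is densely defined and self-adjoint. The spectral theorem (already used in the previous sections) then furnishes a unique spectral family $\{\hat{E}_{\lambda}\}$ with $\hat{A}=\int\lambda\,d\hat{E}_{\lambda}$, and it remains only to check that the unitaries $\int e^{i\lambda t}\,d\hat{E}_{\lambda}$ rebuilt from this family coincide with the original $U_t$, and that $\{\hat{E}_\lambda\}$ enjoys the stated commutation property.

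The first step is to establish that $D(\hat{A})$ is dense. For this I would use the G\aa rding smoothing trick: for $\psi\in\mh$ and $f\in C_c^\infty(\Rl)$ set
\be
\psi_f:=\int_{-\infty}^{\infty}f(t)\,U_t\psi\,dt,
\ee
a well-defined Bochner integral since $t\mapsto U_t\psi$ is continuous and bounded. Strong continuity gives $\psi_f\in D(\hat{A})$, and choosing $f$ to approximate a Dirac mass shows $\psi_f\to\psi$, so that $D(\hat{A})$ is dense. Symmetry of $\hat{A}$ then follows by differentiating the unitarity identity $\langle U_t\phi,\psi\rangle=\langle\phi,U_{-t}\psi\rangle$ at $t=0$.

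The main obstacle is upgrading symmetry to self-adjointness, since a symmetric operator need not be self-adjoint. I would show the deficiency spaces vanish by exhibiting the resolvent directly as a one-sided Laplace transform of the group; for instance, for $\mathrm{Im}\,z<0$,
\be
(\hat{A}-z)^{-1}=-i\int_0^{\infty}e^{-izt}U_t\,dt,
\ee
with the analogous formula over the negative half-line for $\mathrm{Im}\,z>0$, and then verifying that these bounded operators are genuine two-sided inverses of $\hat{A}-z$. Equivalently, one checks that $U_t$ maps $D(\hat{A})$ into itself, commutes with $\hat{A}$, and that any $\eta$ with $\hat{A}^*\eta=\pm i\eta$ must vanish: such an $\eta$ would force $t\mapsto\langle U_t\psi,\eta\rangle$ to satisfy $g'=\pm g$ and hence grow exponentially in $t$, contradicting the bound $|\langle U_t\psi,\eta\rangle|\le\|\psi\|\,\|\eta\|$. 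Hence $\hat{A}$ is self-adjoint.

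Granting self-adjointness, the spectral theorem yields $\hat{A}=\int\lambda\,d\hat{E}_{\lambda}$ and the candidate group $V_t:=\int e^{i\lambda t}\,d\hat{E}_{\lambda}$. To identify $V_t=U_t$ I would note that both families are strongly continuous and satisfy the Cauchy problem $\frac{d}{dt}W_t\psi=i\hat{A}W_t\psi$ with $W_0=I$ for $\psi\in D(\hat{A})$, and then invoke uniqueness of solutions together with the density of $D(\hat{A})$. Finally, the property $(\hat{E}_{\lambda})_{\smile\smile}\{U_t\}$ follows from Stone's formula, which expresses the spectral projections as strong limits of contour integrals of $(\hat{A}-z)^{-1}$, and hence of integrals of the $U_t$ themselves: any bounded operator commuting with all $U_t$ then commutes with every resolvent and so with each $\hat{E}_{\lambda}$, while $\hat{E}_{\lambda}$ commutes with $U_t$ because both are functions of the single self-adjoint operator $\hat{A}$.
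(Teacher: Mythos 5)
Your proposal is the standard functional-analytic proof of Stone's theorem and is correct in outline: the generator $\hat{A}=\frac{1}{i}\frac{d}{dt}U_t\big|_{t=0}$, density of its domain via the G\aa rding regularization $\psi_f=\int f(t)U_t\psi\,dt$, symmetry from unitarity, self-adjointness by showing that a deficiency vector $\hat{A}^*\eta=\pm i\eta$ would force the bounded function $t\mapsto\langle U_t\psi,\eta\rangle$ to grow exponentially (equivalently, by realizing the resolvents as one-sided Laplace transforms of the group, and your formula $(\hat{A}-z)^{-1}=-i\int_0^{\infty}e^{-izt}U_t\,dt$ for $\mathrm{Im}\,z<0$ does check out), identification of $U_t$ with $\int e^{i\lambda t}d\hat{E}_{\lambda}$ by uniqueness for the Cauchy problem on the dense invariant domain, and the commutant property $(\hat{E}_{\lambda})_{\smile\smile}\{U_t\}$ via Stone's formula expressing the spectral projections through the resolvents. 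This is essentially the argument found in Reed--Simon (Theorem VIII.8) or Riesz--Sz.-Nagy. The relevant point of comparison, however, is that the paper gives no proof of this statement at all: it is quoted as the ``standard'' classical Stone theorem, and its only role is to serve as the template which the subsequent topos-theoretic construction imitates --- the paper's own mathematical content in that section consists of the maps $f:\ps{Ob}\rightarrow Sub_u(\ps{K})$ and $g:Sub_u(\ps{K})\rightarrow\ps{Ob}$ between observables and strongly continuous unitary subgroups, and the injectivity statements proved for them. So your argument is not an alternative route to anything proved in the paper; it correctly supplies, at the level of a fillable sketch, the classical background result that the paper treats as known and imports from the functional-analysis literature.
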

\noindent Equivalently one can write $U_t=e^{i\lambda t\hat{A}}$ for the self adjoint operator 
\be
\hat{A}=\int_{\-\infty}^{\infty}\lambda\hat{E}_{\lambda}
\ee
We are now interested in translating the above theorem into the topos language, that is, we are interested in finding a correspondence between self-adjoint operators $\breve{\delta}(\hat{A})$ and unitary one parameter groups.

First of all we need to specify what a unitary one parameter group is in a topos. We already have the definition of a one parameter subgroup, thus all we need to do is to add the property of unitarity. We thus consider the one parameter group $Q:=\{\alpha(t)|t\in \Rl \;\&\; \alpha(t)\alpha(-t)=1\}$ of transformations on $\mh$. These transformations can be extended to functors:
\ba
\alpha(t):\mv(\mh)&\rightarrow&\mv(\mh)\\
V&\mapsto&l_{\alpha(t)}V:=\{\alpha(t)\hat{A}\alpha(-t)|\hat{A}\in V\}
\ea

We then define the associated presheaf $\ps{Q}$ which has, as objects, for each $V\in\mv(\mh)$ the entire group $\ps{Q}_V=Q$.
The maps are simply the identity maps. The group $\ps{Q}$ represents a one parameter sub-group of $\ps{K}$ of unitary transformations. 
The proof is similar as the proof given above for the sub-group $\ps{H}$ while the unitarity is derived directly from $Q$. 
We should also add the property of strong continuity which, in terms of operators, can be stated as follows: for any $\hat{A}$ and $t\rightarrow t_0$ then $\alpha(t)\hat{A}\alpha(-t)\rightarrow \alpha(t_0)\hat{A}\alpha(-t_0)$. 

Given such a strongly continuous one-parameter sub-group of transformations we want to somehow define a unique self-adjoint operator associated to it and, vice versa, given a self adjoint operator we want to associate to it a unique strongly continuous one-parameter sub-group of transformations. We will start from the latter. Since we will be employing group transformations we need to work with the sheaves $\breve{\us}$ and $\breve{\Rl}^{\leftrightarrow}$ which are defined using the method introduced in \cite{Flori2011}. In particular, given the presheaf $\ps{K}$ we define the presheaf $\ps{K/K_F}$ as follows
\begin{Definition}
The presheaf $\ps{K/K_F}\in Sets^{\mv_f(\mh)}$\footnote{Recall that $\mv_f(\mh)$ is the poset $\mv(\mh)$ but were the group is not allowed to act.} is defined on
\begin{enumerate}
\item Objects: for each $V\in \mv_f(\mh)$ we obtain the set $K/K_{FV}=\{[g]_V|g\sim g_1 \text{ iff } hg_1=g \text{ for } h\in K_{FV}\}$ where $K_{FV}$ is the fixed point group of $V$.
\item Morphisms: given a morphism $i_{V^{'}V}:V^{'}\subseteq V$ the corresponding presheaf morphism is the map $K/K_{FV}\rightarrow K/K_{FV^{'}}$, defined as the bundle map of the bundle $K_{FV^{'}}/K_{FV}\rightarrow K/K_{FV}\rightarrow K/K_{FV^{'}}$.
\end{enumerate} 
\end{Definition}
From the above presheaf we obtain the associated \'etale bundle $p:\Lambda(\ps{K/K_{F}})\rightarrow \mv_f(\mathcal{N})$, whose bundle space $\Lambda(\ps{K/K_{F}})$ can be given a poset structure as follows:
\begin{Definition}
Given two elements $[g]_{V^{'}}\in K/K_{FV^{'}} $, $[g]_V\in K/K_{FV}$ we define the partial ordering by
\be
[g]_{V^{'}}\leq [g]_{V} \text{ iff } p([g]_{V^{'}})\subseteq  p([g]_V)\text{ and } [g]_V\subseteq  [g]_{V^{'}}
\ee
\end{Definition}
Next one defines the functor $I:Sh(\mv(\mathcal{H}))\rightarrow Sh(\Lambda(\ps{K/K_F})$ as follows:
\begin{Theorem}
The map $I:Sh(\mv(\mh))\rightarrow Sh(\Lambda( \ps{K/K_F}))$ is a
functor defined on
\begin{enumerate}
\item [(i)] Objects: $\big(I(\underline{A})\big)_{[g]_V}:=\underline{A}_{l_g(V)}=\Big((l_g)^*(\underline{A})\Big)(V)$. If $[g]_{V^{'}}\leq [g]_{V}$, then
$$(I\underline{A}(i_{[g]_{V^{'}},[g]_V})):=\underline{A}_{l_g(V),l_g(V^{'})}:\underline{A}_{l_g(V)}\rightarrow \underline{A}_{l_g(V^{'})}$$
where $V=p([g]_V)$ and $V^{'}=p([g]_{V^{'}})$.
\item [(ii)] Morphisms: given a morphism $f:\underline{A}\rightarrow\underline{B}$ in $Sh(\mv(\mh))$ we then define the corresponding morphism in $Sh(\Lambda (\ps{K/K_F}))$ as
\ba
I(f)_{[g]_V}:I(\underline{A})_{[g]_V}&\rightarrow& I(\underline{B})_{[g]_V}\\
f_{[g]_V}:\underline{A}_{l_g(p([g]_V))}&\rightarrow
&\underline{B}_{l_g(p([g]_V))} \ea
\end{enumerate}
\end{Theorem}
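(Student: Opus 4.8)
The plan is to verify the four conditions that turn the assignment $I$ into a functor: that $I$ is well-defined on objects, that $I(\underline{A})$ is genuinely a sheaf (and not merely a presheaf) on $\Lambda(\ps{K/K_F})$, that $I(f)$ is a well-defined morphism of sheaves, and finally that $I$ respects identities and composition. The whole construction rests on the fact that each group element $g\in K$ acts on the base poset $\mv(\mh)$ through the order-isomorphism $l_g$, so that $I$ is essentially a relabelling of $\underline{A}$ along these isomorphisms, reorganised over the \'etale bundle space $\Lambda(\ps{K/K_F})$.

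First I would establish well-definedness on objects. Since $[g]_V$ is an equivalence class under $g\sim g_1 \Leftrightarrow g=hg_1$ with $h\in K_{FV}$, I must check that $\underline{A}_{l_g(V)}$ depends only on the class and not on the chosen representative. This is where the fixed-point subgroup enters: by definition an element of $K_{FV}$ stabilises the relevant context, so two representatives of a single class determine the same context $l_g(V)\in\mv(\mh)$ and hence the same set $\underline{A}_{l_g(V)}$. I would then verify the presheaf structure: for $[g]_{V'}\le[g]_V$ the poset order on $\Lambda(\ps{K/K_F})$ forces $p([g]_{V'})\subseteq p([g]_V)$, i.e.\ $l_g(V')\subseteq l_g(V)$ in $\mv(\mh)$, so that the restriction map $\underline{A}_{l_g(V),l_g(V')}$ of $\underline{A}$ is defined and lands in the correct fibre; functoriality of these restrictions (composition and identities) is inherited directly from that of $\underline{A}$.

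Next I would show $I(\underline{A})$ is a sheaf. Here I would use that each $l_g$ is an order-isomorphism of $\mv(\mh)$ (being induced by an automorphism of $\mh$), hence a homeomorphism for the Alexandroff topology, so that covering families and their matching families transport bijectively; the gluing and separation conditions for $I(\underline{A})$ then reduce to those already satisfied by $\underline{A}$. For the morphism part I would check that the components $I(f)_{[g]_V}=f_{l_g(V)}$ are independent of the representative (the same stabiliser argument as above) and assemble into a natural transformation: the squares relating $I(f)$ to the restriction maps of $I(\underline{A})$ and $I(\underline{B})$ are exactly the naturality squares of $f$ re-indexed along $l_g$, which commute because $f$ is a morphism in $Sh(\mv(\mh))$. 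Functoriality of $I$ is then immediate componentwise, since $I(\mathrm{id}_{\underline{A}})_{[g]_V}=\mathrm{id}_{\underline{A}_{l_g(V)}}$ and $I(f'\circ f)_{[g]_V}=(f'\circ f)_{l_g(V)}=f'_{l_g(V)}\circ f_{l_g(V)}=\big(I(f')\circ I(f)\big)_{[g]_V}$.

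I expect the main obstacle to be the representative-independence bookkeeping in the first step: in particular, pinning down the group-action convention (left versus right cosets) so that the fixed-point subgroup $K_{FV}$ really does stabilise the context appearing in $\underline{A}_{l_g(V)}$, using that $l_g\circ l_{g'}=l_{gg'}$ for the conjugation action $l_g(V)=\{\alpha(t)\hat A\alpha(-t)\mid \hat A\in V\}$. Equally delicate is checking that the poset order on $\Lambda(\ps{K/K_F})$, which combines the bundle projection $p$ with reverse inclusion of cosets, is genuinely compatible with the inclusion order on $\mv(\mh)$, so that the restriction maps land in the correct fibres. Once these compatibility checks are secured, the sheaf condition and full functoriality follow mechanically by transporting the corresponding properties of $\underline{A}$ and $f$ along the order-isomorphisms $l_g$.
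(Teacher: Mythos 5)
Your outline is sound, but note first that the paper does not actually prove this theorem: immediately after the statement it says only that such a functor ``was already defined in \cite{Flori2011} where it was shown to be a functor'', so the entire verification is delegated to that earlier reference. Your direct check therefore supplies what the paper omits, and its skeleton --- representative-independence of $\underline{A}_{l_g(V)}$, compatibility of the partial order on $\Lambda(\ps{K/K_F})$ with the restriction maps of $\underline{A}$, the sheaf condition, naturality of $I(f)$, and componentwise preservation of identities and composition --- is exactly the verification required. Two remarks on substance. First, the coset bookkeeping you single out as the main obstacle really is the crux, and with the conventions as literally written in this paper it does not close: the equivalence is $g\sim g_1$ iff $g=hg_1$ with $h\in K_{FV}$ (left cosets), while $l$ is the left conjugation action, so $l_g(V)=l_h\big(l_{g_1}(V)\big)$, and $h$ stabilises $V$, not $l_{g_1}(V)$ (the stabiliser of $l_{g_1}(V)$ is the conjugate $g_1K_{FV}g_1^{-1}$); one must either read the classes as right cosets ($g\sim g_1$ iff $g=g_1h$, $h\in K_{FV}$) or let the class act through $l_{g^{-1}}$, in order for $\underline{A}_{l_g(V)}$ to depend only on $[g]_V$. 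So your instinct to treat this as the decisive compatibility check rather than a routine detail is correct, and any complete write-up must fix the convention --- this is handled in \cite{Flori2011}, not here. Second, your sheaf step can be simplified: $\Lambda(\ps{K/K_F})$ is a poset and the relevant topology is Alexandroff, so every point has a smallest open neighbourhood and presheaf data on the poset automatically yields a sheaf; no transport of covering families along the homeomorphisms $l_g$ is needed, which is also why the paper passes between presheaves and sheaves silently. With these two points settled, your naturality and identity/composition computations are exactly the ones that finish the proof.
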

Such a functor was already defined in \cite{Flori2011} where it was shown to be a functor. We are now able to map all the sheaves in $Sh(\mv(\mathcal{H}))$ to sheaves in $Sh(\Lambda(\ps{K/K_F})$. By then applying the functor $p!:Sh(\Lambda(\ps{K/K_F})\rightarrow Sh(\mv_f(\mathcal{H}))$ we finally obtain sheaves on our fixed category $\mv_f(\mathcal{H})$. The advantage of this construction is that now the actions of both groups $\ps{K}$ and $\ps{H}$ do not induce twisted presheaves.
In this context, self adjoint operators are defined as
\be
\overline{\delta}(\hat{A}):\coprod_{g\in K/K_{FV}}\us_{l_gV}\rightarrow\coprod_{g\in K/K_{FV}}\ps{\Rl}_{l_gV}
\ee
i.e. as co-products of the originally defined self-adjoint operators $\breve{\delta}(\hat{A})$. For details see \cite{Flori2011}. 
By denoting all the maps $\breve{\us}\rightarrow\breve{\ps{\Rl}}^{\leftrightarrow}$ by $(\breve{\ps{\Rl}}^{\leftrightarrow})^{\breve{\us}}$ we can then define the sub-object $\ps{Ob}\subseteq (\breve{\ps{\Rl}}^{\leftrightarrow})^{\breve{\us}}$ of observables, i.e. all maps $\overline{\delta}(\hat{A}):\breve{\us}\rightarrow\breve{\ps{\Rl}}^{\leftrightarrow}$ associated to self adjoint operators $\hat{A}$. Next we define the collection of strongly continuous unitary subgroups of $\ps{K}$ which we denote $Sub_u(\ps{K})$. Given these ingredients we attempt the partial definition of Stone's theorem
\begin{Theorem}
The map $f:\ps{Ob}\rightarrow Sub_u(\ps{K})$ defined for each $V\in\mv(\mh)$ as 
\ba
f_V:\ps{Ob}_V&\rightarrow& Sub_u(\ps{K})_V\\
\overline{\delta}(\hat{A})_{|\downarrow V}&\mapsto& \ps{Q}_V^{\hat{A}}
\ea
where $\ps{Q}_V^{\hat{A}}:=\{e^{it\hat{A}}|t\in \Rl\}$ is injective.
\end{Theorem}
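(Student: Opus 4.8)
The plan is to reduce the topos-theoretic injectivity claim to the classical uniqueness half of Stone's theorem. Fix a context $V\in\mv(\mh)$. An element of $\ps{Ob}_V$ is, by construction, the restriction $\overline{\delta}(\hat{A})_{|\downarrow V}$ of the arrow associated with a self-adjoint operator $\hat{A}$, and $f_V$ sends it to the strongly continuous one-parameter subgroup $\ps{Q}_V^{\hat{A}}=\{e^{it\hat{A}}|t\in\Rl\}$, which I will treat not merely as a subset of unitaries but as the homomorphic image of $(\Rl,+)$ carrying its parametrisation $t\mapsto e^{it\hat{A}}$.

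First I would suppose $f_V(\overline{\delta}(\hat{A})_{|\downarrow V})=f_V(\overline{\delta}(\hat{B})_{|\downarrow V})$, that is $\ps{Q}_V^{\hat{A}}=\ps{Q}_V^{\hat{B}}$ as parametrised one-parameter groups, so that $e^{it\hat{A}}=e^{it\hat{B}}$ for every $t\in\Rl$. The next step invokes the uniqueness part of the standard Stone's theorem stated above: the infinitesimal generator of a strongly continuous one-parameter unitary group is unique, being recovered by differentiation at the identity. Concretely, differentiating $e^{it\hat{A}}=e^{it\hat{B}}$ at $t=0$ on the common dense domain yields $i\hat{A}=i\hat{B}$, hence $\hat{A}=\hat{B}$.

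Once the two generators coincide, I would close the argument by observing that the inner and outer daseinisations $\delta^i(\hat{A})_{V'}$ and $\delta^o(\hat{A})_{V'}$, and therefore the entire arrow $\overline{\delta}(\hat{A})$, depend only on the operator $\hat{A}$ through its spectral family. Thus $\hat{A}=\hat{B}$ forces $\overline{\delta}(\hat{A})=\overline{\delta}(\hat{B})$, and a fortiori their restrictions to $\downarrow V$ agree; this is exactly $\overline{\delta}(\hat{A})_{|\downarrow V}=\overline{\delta}(\hat{B})_{|\downarrow V}$, so $f_V$ is injective. Since $V$ was arbitrary, component-wise injectivity delivers injectivity of the presheaf map $f$, provided one also checks that the $f_V$ commute with the restriction morphisms of $\ps{Ob}$ and $Sub_u(\ps{K})$, which follows routinely because both restriction and exponentiation act on the operator $\hat{A}$ in the same way.

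The main obstacle I anticipate is the uniqueness step, namely making precise in what sense $\ps{Q}_V^{\hat{A}}$ determines $\hat{A}$. If one reads $\ps{Q}_V^{\hat{A}}$ as a bare set of unitaries, injectivity would genuinely fail, since $\hat{A}$ and any nonzero real rescaling $c\hat{A}$ generate the same point-set $\{e^{is\hat{A}}|s\in\Rl\}$; the argument therefore hinges on retaining the parametrisation by $t$, so that $\ps{Q}_V^{\hat{A}}$ is the actual representation of $(\Rl,+)$ rather than its image. With that convention fixed, the only remaining care is the standard domain bookkeeping for the generator, which is harmless here because the operators in play are bounded.
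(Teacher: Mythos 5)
Your injectivity argument is correct, but it occupies a curious position relative to the paper: the paper's entire proof of this theorem consists of the remark that ``the proof of injectivity is trivial,'' after which it states that what actually remains to be shown is that $\ps{Q}_V^{\hat{A}}$ is a strongly continuous unitary subgroup of $\ps{K}$, handled by analogy with the earlier construction for $\ps{H}$. So you have carefully supplied exactly the half that the paper waves off. Your route is the natural one: read $\ps{Q}_V^{\hat{A}}$ as the parametrised homomorphism $t\mapsto e^{it\hat{A}}$, recover the generator by differentiating at $t=0$ (legitimate here since $\hat{A}$ is bounded, so the exponential series is norm-differentiable), conclude $\hat{A}=\hat{B}$, and then use the fact that daseinisation is a function of the operator to get $\overline{\delta}(\hat{A})_{|\downarrow V}=\overline{\delta}(\hat{B})_{|\downarrow V}$. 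Your caveat about parametrisation is a genuine catch that the paper never confronts: as bare sets $\{e^{it\hat{A}}\mid t\in\Rl\}=\{e^{it(c\hat{A})}\mid t\in\Rl\}$ for every real $c\neq 0$, while for a nonzero self-adjoint $\hat{A}\in V$ one has $\overline{\delta}(\hat{A})_{|\downarrow V}\neq\overline{\delta}(c\hat{A})_{|\downarrow V}$ (daseinisation at $V$ itself is the identity on elements of $V$), so on the set-theoretic reading the theorem would simply be false.

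What your proposal does not do is the part the paper treats as the substantive content: verifying that $\ps{Q}_V^{\hat{A}}$ actually lies in the asserted codomain, i.e.\ that $\{e^{it\hat{A}}\mid t\in\Rl\}$ is a strongly continuous \emph{unitary} one-parameter subgroup of $\ps{K}$, so that $f$ is well defined as a map $\ps{Ob}\rightarrow Sub_u(\ps{K})$ at all; the paper disposes of this by repeating the argument given for the subgroup $\ps{H}$. (A related point that both you and the paper leave implicit is well-definedness on the domain side: $f_V$ is specified via a choice of operator $\hat{A}$ representing the arrow, so strictly one should also check that the arrow determines the subgroup.) None of this undermines your injectivity argument, which is sound and more rigorous than the paper's treatment of that step, but a self-contained proof of the theorem as stated should include the codomain verification.
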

The proof of injectivity is trivial, thus what remains to show is that indeed $\ps{Q}_V^{\hat{A}}$, as defined above, is a strongly continuous unitary subgroup of $\ps{K}$. The proof is again similar to the one done for the sub-group $\ps{H}$, so we will not report it here. 
We now come to the more interesting part of Stone's theorem, namely showing that any strongly continuous unitary subgroups of $\ps{K}$ uniquely determines a self-adjoint operator. To this end we first construct the map $g:Sub_u(\ps{K})\rightarrow \ps{Ob}$ such that, for each $V\in\mv(\mh)$ 
\ba
g_V:Sub_u(\ps{K})_V&\rightarrow& \ps{Ob}_V\\
\ps{Q}_V&\mapsto& g_V(\ps{Q}_V):=\overline{\delta}(\hat{A}^Q)_{|\downarrow V}
\ea
The self-adjoint operator $ \overline{\delta}(\hat{A}^Q)_{|\downarrow V}$ is defined by the following properties:
\begin{enumerate}
\item [a)] For all $\alpha(t)\in\ps{Q}_V$, the diagram
\[\xymatrix{ 
\breve{\us}_V\ar[rr]^{\overline{\delta}(\hat{A}^Q)_V}\ar[dd]_{\alpha(t)^*}&&\ps{\Rl^{\leftrightarrow}}_V\\
&&\\
\breve{\us}_{V}\ar[rruu]_{\overline{\delta}(\hat{A}^Q)_{V}}&&
}
\]
commutes. What this means is that, given an element $\lambda\in \us_{V}\in\coprod_{g\in K/K_{FV}}\us_{l_gV}$ we require
\be
(\overline{\delta}(\hat{A}^Q)_V\circ \alpha(t)^*)\lambda=(\overline{\delta}(\hat{A}^Q)_V)l_{\alpha(t)}\lambda=\breve{\delta}(\hat{A}^Q)_{l_{\alpha(t)}V}(l_{\alpha(t)}\lambda)
\ee
to be equal to 
\be
\overline{\delta}(\hat{A}^Q)_{V}(\lambda)
\ee
We can generalise such a condition for all elements of $Q$ at once by requiring that the following diagram commutes:
\[\xymatrix{ 
Q\times \breve{\us}_V\ar[rr]^{\overline{\delta}(\hat{A}^Q)_V}\ar[dd]_{pr_2}&&\ps{\Rl^{\leftrightarrow}}_V\\
&&\\
\breve{\us}_{V}\ar[rruu]_{\overline{\delta}(\hat{A}^Q)_{V}}&&
}
\]

\item [b)] For any $\hat{B}$ such that $\hat{B}_{\smile}\hat{A}^Q$ then 
\[\xymatrix{ 
Q\times \breve{\us}_V\ar[rr]^{\overline{\delta}(\hat{B})_V}\ar[dd]_{pr_2}&&\ps{\Rl^{\leftrightarrow}}_V\\
&&\\
\breve{\us}_{V}\ar[rruu]_{\overline{\delta}(\hat{B})_{V}}&&
}
\]
\end{enumerate}
The correspondence between strongly continuous unitary groups and self adjoint operators is given by they following theorem:
\begin{Theorem}\label{the:stone}
The map $g$ is injective.
\end{Theorem}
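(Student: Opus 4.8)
\emph{Proof strategy.} The plan is to identify $g$ as a two-sided inverse of the injective map $f$ constructed in the previous theorem; since any map admitting a left inverse is automatically injective, it suffices to establish $f\circ g=\mathrm{id}_{Sub_u(\ps{K})}$. Concretely, I would show that the self-adjoint operator $\hat{A}^Q$ produced by $g$ is exactly the infinitesimal generator of the one-parameter group $\ps{Q}$, so that $f(g(\ps{Q}))=\{e^{it\hat{A}^Q}|t\in\Rl\}$ recovers $\ps{Q}$ at every stage. The whole argument thereby reduces the topos statement, context by context, to the uniqueness half of the classical Stone theorem.

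\emph{Step 1: the arrow determines the operator.} First I would verify that the natural transformation $\overline{\delta}(\hat{A}^Q)$ determines $\hat{A}^Q\in\mb(\mh)$ uniquely. At the context $V_{\hat{A}^Q}$ generated by $\hat{A}^Q$ one has $\delta^o(\hat{A}^Q)_{V_{\hat{A}^Q}}=\delta^i(\hat{A}^Q)_{V_{\hat{A}^Q}}=\hat{A}^Q$, so evaluating the arrow at this stage returns the operator itself. Consequently $g_V(\ps{Q}_V)=g_V(\ps{Q}'_V)$ for all $V$ forces $\hat{A}^Q=\hat{A}^{Q'}$ as operators on $\mh$, and the problem is reduced to showing that the generator determines the group.

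\emph{Step 2: properties (a) and (b) pin down the generator.} Next I would argue that conditions (a) and (b) in the definition of $g$ are precisely the topos rendering of the classical requirement $(\hat{E}_\lambda)_{\smile\smile}\{U_t\}$. Property (a), the invariance diagram under $\alpha(t)^*$, encodes that $\overline{\delta}(\hat{A}^Q)$ is fixed by every $\alpha(t)\in\ps{Q}_V$, i.e. that $\hat{A}^Q$ commutes with the group; property (b) encodes the maximality condition that $\hat{A}^Q$ commutes with every $\hat{B}$ that commutes with $\ps{Q}$. Together these single out $\hat{A}^Q$ as the unique self-adjoint operator with $\alpha(t)=e^{it\hat{A}^Q}$, which is exactly where the strong-continuity hypothesis on $\ps{Q}$ is consumed, just as in the classical theorem. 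Classical Stone uniqueness then yields $f(g(\ps{Q}))_V=\{e^{it\hat{A}^Q}|t\in\Rl\}=\ps{Q}_V$ for each $V$, giving $f\circ g=\mathrm{id}$ and hence the injectivity of $g$.

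\emph{Main obstacle.} The delicate point is Step 2: showing that the two commuting diagrams determine $\hat{A}^Q$ uniquely within the \'etale-bundle reformulation over the fixed base $\mv_f(\mh)$, where the domain of $\overline{\delta}(\hat{A}^Q)$ is the co-product $\coprod_{g\in K/K_{FV}}\us_{l_gV}$. One must check that the invariance encoded by (a) and (b) holds compatibly at every stage $V$ and is strong enough to reproduce the full spectral family of the generator, rather than only its coarse-grained daseinisations; since daseinisation is not injective on a single small context, the argument genuinely requires assembling the data across all contexts and invoking the classical generator construction at the level of $\mb(\mh)$ before descending again to the contextual picture.
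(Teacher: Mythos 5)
Your proposal is correct and runs on the same mathematical engine as the paper's proof: both rest on the corollary that conditions a) and b) are equivalent to the double commutation $(\hat{E}_{\lambda})_{\smile\smile}\{\alpha(t)\}$, together with the classical Stone uniqueness fact that a spectral family determines its strongly continuous unitary group. The only difference is organizational — you obtain injectivity from $f\circ g=\mathrm{id}_{Sub_u(\ps{K})}$, whereas the paper proves injectivity directly (two groups $Q$, $R$ with $g(Q)=g(R)$ share the spectral family of the common generator and hence coincide) and derives $f\circ g=\mathrm{id}$ only afterwards as a corollary; your ordering is not circular, since you establish $f(g(\ps{Q}))=\ps{Q}$ straight from the classical theorem, and your Step 1 usefully makes explicit the fact (left implicit in the paper) that the arrow $\overline{\delta}(\hat{A}^Q)$ determines the operator $\hat{A}^Q$ itself.
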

\noindent Before proving the theorem let us first analyse in more details what the two conditions a) and b) imply. To this end we introduce the following corollary
\begin{Corollary}
Given a self adjoint operator $\hat{A}$ with spectral projection $\{\hat{E}_{\lambda}\}$ then 
\be
 \{\hat{E}_{\lambda}\}_{\smile}\{\alpha(t)\}\Leftrightarrow \hat{A}\text{  satisfies condition } a)
\ee
On the other hand
\be
 \{\hat{E}_{\lambda}\}_{\smile\smile}\{\alpha(t)\}\Leftrightarrow \hat{A}\text{  satisfies condition } a)\;\&\; b)
\ee
\end{Corollary}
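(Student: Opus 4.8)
The plan is to reduce both biconditionals to a single translation principle, which I would isolate as a lemma first: for any bounded operator $\hat{X}$, the arrow $\overline{\delta}(\hat{X})$ is invariant under the action of $\{\alpha(t)\}$ (that is, the square of condition a) commutes with $\hat{X}$ in place of $\hat{A}^Q$) if and only if $[\hat{X},\alpha(t)]=0$ for all $t$. The action $\alpha(t)^*$ on $\breve{\us}_V$ is induced by the functor $V\mapsto l_{\alpha(t)}V=\{\alpha(t)\hat{B}\alpha(-t)\,|\,\hat{B}\in V\}$, and since $\overline{\delta}(\hat{X})$ is assembled from the covariant arrows $\breve{\delta}(\hat{X})$, one has $\breve{\delta}(\hat{X})_{l_{\alpha(t)}V}(l_{\alpha(t)}\lambda)=\breve{\delta}(\alpha(-t)\hat{X}\alpha(t))_V(\lambda)$ for every $\lambda\in\us_V$. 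Hence invariance for all $t$ is equivalent to $\alpha(-t)\hat{X}\alpha(t)=\hat{X}$ for all $t$, i.e. to $[\hat{X},\alpha(t)]=0$. The only non-formal ingredient is faithfulness of daseinisation on each fixed context, namely that $\breve{\delta}(\hat{X})_V=\breve{\delta}(\hat{Y})_V$ for all $V$ forces $\hat{X}=\hat{Y}$; this is exactly the injectivity already used in Theorem \ref{the:stone} and follows from the Gel'fand representation developed in the earlier sections.

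Granting the lemma, the first biconditional is immediate. Taking $\hat{X}=\hat{A}$, condition a) holds iff $[\hat{A},\alpha(t)]=0$ for all $t$. By the spectral theorem a bounded self-adjoint operator commutes with a unitary precisely when each of its spectral projections does, so this is equivalent to $[\hat{E}_{\lambda},\alpha(t)]=0$ for all $\lambda$ and all $t$, which is the relation $\{\hat{E}_{\lambda}\}_{\smile}\{\alpha(t)\}$.

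For the second biconditional I would apply the lemma uniformly to the operators $\hat{B}$ occurring in condition b). Condition a) already supplies $[\hat{E}_{\lambda},\alpha(t)]=0$, the first clause of $\smile\smile$. The content of condition b) is that the square commutes for every $\hat{B}$ with $\hat{B}_{\smile}\hat{A}$; by the lemma this says that each such $\hat{B}$ also commutes with the whole group, so it becomes a containment relating the commutant $\{\hat{A}\}'$ and the commutant $\{\alpha(t)\}'$. The remaining task is to show that this containment, together with condition a), is equivalent to the second clause of $\{\hat{E}_{\lambda}\}_{\smile\smile}\{\alpha(t)\}$, namely that every spectral projection $\hat{E}_{\lambda}$ commute with the entire commutant $\{\alpha(t)\}'$, i.e. that the spectral family lie in the von Neumann algebra $\{\alpha(t)\}''$ generated by the group. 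This is where von Neumann's double commutant theorem enters, applied to the bounded projections $\hat{E}_{\lambda}$ (for which $\{\hat{A}\}'=\bigcap_{\lambda}\{\hat{E}_{\lambda}\}'$), and the converse is obtained by reversing the computation and re-applying the lemma to recover conditions a) and b).

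The step I expect to be the real obstacle is the lemma, and within it the identity $\breve{\delta}(\hat{X})_{l_{\alpha(t)}V}\circ l_{\alpha(t)}=\breve{\delta}(\alpha(-t)\hat{X}\alpha(t))_V$ together with faithfulness; once these are secured, the rest is operator algebra. A second delicate point, which I regard as part of the same obstacle, is the commutant bookkeeping in the second biconditional: the passage from condition b) to a containment of commutants and then, via the double commutant theorem, to membership of the $\hat{E}_{\lambda}$ in $\{\alpha(t)\}''$ must be carried out at the level of the bounded spectral projections rather than $\hat{A}$ itself, and one must check that the direction of the commutant inclusion matches exactly the defining clause of $\smile\smile$. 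This is why I keep the final statements phrased in terms of the family $\{\hat{E}_{\lambda}\}$ and read the equivalence through von Neumann's theorem.
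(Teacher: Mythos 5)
Your lemma and the first biconditional are essentially the paper's own proof: the paper's argument rests on exactly the identity you isolate, $\breve{\delta}^i(\hat{A})_{l_{\alpha(t)}V}(l_{\alpha(t)}\lambda)(V')=\lambda\big(\delta(\alpha(-t)\hat{A}\alpha(t))_{V'}\big)$, so that condition a) forces $\alpha(-t)\hat{A}\alpha(t)=\hat{A}$ (with faithfulness of daseinisation used tacitly there, as you anticipate), and the passage to $\{\hat{E}_{\lambda}\}_{\smile}\{\alpha(t)\}$ is the same spectral-theorem observation that the $\hat{E}_{\lambda}$ commute with whatever commutes with $\hat{A}$.

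The second biconditional is where your proposal has a genuine gap. Through the lemma, condition b) yields the containment $\{\hat{A}\}'\subseteq\{\alpha(t)\}'$, and taking commutants (which is all the double commutant theorem can do for you here) gives $\{\alpha(t)\}''\subseteq\{\hat{A}\}''$ --- the \emph{opposite} of what you need, since $\hat{E}_{\lambda}\in\{\alpha(t)\}''$ for all $\lambda$ is equivalent to the reverse containment $\{\alpha(t)\}'\subseteq\{\hat{A}\}'$. This is not repairable by more careful bookkeeping, because under your reading the implication is simply false: on $\mh=\Cl^{2}$ take $\hat{A}=\mathrm{diag}(1,2)$ and $\alpha(t)=e^{it}\hat{1}$; then conditions a) and b) both hold (here $\{\hat{A}\}'$ is the diagonal algebra, contained in $\{\alpha(t)\}'=M_{2}(\Cl)$), yet the spectral projection $\mathrm{diag}(1,0)$ of $\hat{A}$ does not lie in $\{\alpha(t)\}''=\Cl\hat{1}$. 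The paper never attempts your bridge: its proof reads the relation $\{\hat{E}_{\lambda}\}_{\smile\smile}\{\alpha(t)\}$ operationally as the conjunction ``$\hat{A}$ commutes with $\{\alpha(t)\}$, and so does every $\hat{B}$ with $\hat{B}_{\smile}\hat{A}$'', i.e.\ precisely conditions a) and b) transported through the lemma, so for the paper the second equivalence is a direct unwinding of definitions requiring no von Neumann algebra theory at all. If you insist on the standard reading $\hat{E}_{\lambda}\in\{\alpha(t)\}'\cap\{\alpha(t)\}''$, then what condition b) would have to supply is the reverse inclusion $\{\alpha(t)\}'\subseteq\{\hat{A}\}'$, and your double-commutant argument would have to be run from that side; as stated, the step fails.
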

We now prove the above corollary.
\begin{proof}
We assume that $\hat{A}$ satisfies condition a), i.e. $(\overline{\delta}(\hat{A})_V\circ \alpha(t)^*)\lambda=\overline{\delta}(\hat{A})_{V}(\lambda)
$. We recall that 
\be
(\overline{\delta}(\hat{A})_V\circ \alpha(t)^*)\lambda(\cdot)=\breve{\delta}(\hat{A})_{l_{\alpha(t)}V}(l_{\alpha(t)}\lambda)(\cdot)=(\breve{\delta}^i(\hat{A})_{l_{\alpha(t)}V}(l_{\alpha(t)}\lambda),\breve{\delta}^o(\hat{A})_{l_{\alpha(t)}V}(l_{\alpha(t)}\lambda))(\cdot)
\ee
such that, for any $V'\in \downarrow V$ we have
\be
\breve{\delta}^i(\hat{A})_{l_{\alpha(t)}V}(l_{\alpha(t)}\lambda)(V')=\breve{\delta}^i(\hat{A})_{l_{\alpha(t)}V'}(l_{\alpha(t)}\lambda)=l_{\alpha(t)}\lambda(\delta^i(\hat{A})_{l_{\alpha(t)}V'})=\lambda \delta(\alpha(-t)(\hat{A})\alpha(t))_{V'}
\ee
Therefore for condition a) to be satisfied it implies that $\{\hat{A}_{\lambda}\}_{\smile}\{\alpha(t)\}$. Since $\{\hat{E}_{\lambda}\}_{\smile\smile}\{\hat{A}\}$ it follows that $\{\hat{E}_{\lambda}\}_{\smile}\{\alpha(t)\}$. The converse is trivial to prove.
If we now assume that $\hat{A}$ also satisfies condition b) we then have that, given any other operator $\hat{B}_{\smile}\hat{A}$ then
\be
\lambda \delta(\alpha(-t)(\hat{B})\alpha(t))_{V'}=\lambda \delta(\hat{B})_{V'}
\ee
which implies that $\hat{B}_{\smile}\{\alpha(t)\}$. But since $\hat{A}_{\smile} \hat{B}$ and $\hat{A}_{\smile}\{\alpha(t)\}$ it follows that $\hat{A}_{\smile\smile}\{\alpha(t)\}$ hence $\{\hat{E}_{\lambda}\}_{\smile\smile}\{\alpha(t)\}$. Again the converse is easy to prove. 
\end{proof}
\noindent We are now ready to prove theorem \ref{the:stone}. 
\begin{proof}
We want to show that the map $g$ is injective. In particular, given two strongly continuous unitary one parameter groups $Q$ and $R$, we want to show that if $g_V(Q)=g_V(R)$ for all $V\in\mv(\mh)$ then $R=Q$. Now if $g_V(Q)=g_V(R)$ it follows that $\overline{\delta}(\hat{A}^Q)=\overline{\delta}(\hat{A}^R)=\overline{\delta}(\hat{A})$ are such that they satisfy conditions a) and b). 
Therefore, given the spectral family $\{\hat{E}_{\lambda}\}$ of $\hat{A}$, for each $\alpha(t)\in Q$, $(\hat{E}_{\lambda})_{\smile\smile}\{\alpha(t)\}$ and similarly, for each $\beta(t)\in R$, $(\hat{E}_{\lambda})_{\smile\smile}\{\beta(t)\}$. However this is precisely the condition for $\{\hat{E}_{\lambda}\}$ to be the spectral family of each $\alpha(t)$ and of each $\beta(t)$. It follows\footnote{Recall that the spectral family is uniquely specified by the operator it decomposes.} that $\alpha(t)=\beta(t)$.
\end{proof}
\begin{Corollary}
$f\circ g=id_{Sub_u(\ps{K})}$ and $g\circ f=id_{\ps{Ob}}$
\end{Corollary}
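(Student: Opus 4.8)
The plan is to recognise the two maps $f$ and $g$ as the topos-theoretic incarnation of the classical Stone correspondence between self-adjoint operators and strongly continuous one-parameter unitary groups, and to show that the two composites are identities by verifying, contextwise, that forming the generator and forming the exponential are mutually inverse operations. Because each presheaf involved ($\ps{Ob}$, $\ps{K}$, $Sub_u(\ps{K})$) carries at every stage $V$ the full operator data, with restriction maps compatible with the relevant constructions, it suffices to argue at a fixed $V\in\mv(\mh)$ and then observe naturality; moreover, since both $f$ and $g$ have already been shown to be injective, establishing a single composite identity will force the other.

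First I would treat $g\circ f=id_{\ps{Ob}}$. Starting with an observable $\overline{\delta}(\hat{A})_{|\downarrow V}\in\ps{Ob}_V$, applying $f_V$ produces the subgroup $\ps{Q}_V^{\hat{A}}=\{e^{it\hat{A}}\mid t\in\Rl\}$, and applying $g_V$ returns the self-adjoint operator $\hat{A}^{Q^{\hat{A}}}$ singled out by conditions a) and b). By the corollary relating conditions a) and b) to the $\smile\smile$ relation, these are equivalent to $\{\hat{E}_{\lambda}\}_{\smile\smile}\{e^{it\hat{A}}\}$, where $\{\hat{E}_{\lambda}\}$ is the spectral family of $\hat{A}$. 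This holds because $e^{it\hat{A}}$ lies in the functional calculus of $\hat{A}$, so $\{\hat{E}_{\lambda}\}$ commutes with each $e^{it\hat{A}}$ and shares its commutant; hence $\hat{A}$ itself satisfies a) and b). By the uniqueness half of Stone's theorem, namely the injectivity of $g$ established in Theorem \ref{the:stone}, the operator selected must be $\hat{A}$, so $g_V\circ f_V(\overline{\delta}(\hat{A})_{|\downarrow V})=\overline{\delta}(\hat{A})_{|\downarrow V}$.

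Next I would treat $f\circ g=id_{Sub_u(\ps{K})}$. Given a strongly continuous unitary subgroup $\ps{Q}_V=Q=\{\alpha(t)\}$, the map $g_V$ returns its generator $\overline{\delta}(\hat{A}^Q)_{|\downarrow V}$, and $f_V$ then forms $\{e^{it\hat{A}^Q}\mid t\in\Rl\}$. By the classical Stone's theorem recalled at the start of this section, the generator satisfies $\alpha(t)=e^{it\hat{A}^Q}$ for all $t$, so this subgroup is exactly $Q$ again. Equivalently, having shown $g$ injective together with $g\circ f=id$, one concludes that $g$ is a bijection with two-sided inverse $f$, which yields $f\circ g=id$ directly.

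The hard part will not be the formal bookkeeping but the bridge between the purely diagrammatic specification of $\hat{A}^Q$ through conditions a) and b) and the analytic content of the classical theorem: the existence of a (possibly unbounded) densely defined generator, its strong continuity, and the exponential formula $\alpha(t)=e^{it\hat{A}^Q}$. The corollary performs the matching at the level of the commutation relation $\smile\smile$, but the existence and uniqueness of the generator itself must be imported from the classical Stone's theorem, and this is precisely what guarantees that $g$ is defined at all and that the two operations genuinely invert one another. A secondary point to check is that these contextwise identities are natural in $V$, which follows because the generator of a restricted one-parameter group is the restriction of the generator and the exponential commutes with the presheaf restriction maps, so the two composites agree with the identity natural transformations on $\ps{Ob}$ and $Sub_u(\ps{K})$.
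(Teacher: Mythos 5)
Your proposal is correct, and it turns on exactly the same pivot as the paper's own proof: the fact that applying $g$ to the exponential group $\{e^{it\hat{A}}\,|\,t\in\Rl\}$ returns $\overline{\delta}(\hat{A})$, which immediately gives $g_V\circ f_V(\overline{\delta}(\hat{A})_{|\downarrow V})=\overline{\delta}(\hat{A})_{|\downarrow V}$. Where you differ is in how the other composite is obtained. The paper first observes that $f\circ g$ is injective (a composite of injective maps), computes $f_V\circ g_V(\ps{Q}_V)=\{e^{it\hat{A}^Q}\,|\,t\in\Rl\}$ and then $f_V\circ g_V(\{e^{it\hat{A}^Q}\,|\,t\in\Rl\})=\{e^{it\hat{A}^Q}\,|\,t\in\Rl\}$, and cancels the injective composite to force $\ps{Q}_V=\{e^{it\hat{A}^Q}\,|\,t\in\Rl\}$; you instead get $f\circ g=id_{Sub_u(\ps{K})}$ either by importing the classical exponential formula $\alpha(t)=e^{it\hat{A}^Q}$ or by the abstract remark that $g\circ f=id_{\ps{Ob}}$ plus injectivity of $g$ makes $g$ a bijection with two-sided inverse $f$. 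Both mechanisms are valid and logically equivalent; your abstract version avoids repeating the computation, while the paper's cancellation trick has the small charm of formally extracting the exponential representation from injectivity alone. Your write-up also supplies two things the paper leaves implicit and which are worth keeping: the functional-calculus argument that the spectral family of $\hat{A}$ double-commutes with $\{e^{it\hat{A}}\}$, so that $\hat{A}$ genuinely satisfies conditions a) and b) for that group, and the explicit acknowledgement that the well-definedness of $g$ (existence and uniqueness of the operator singled out by a) and b)) must be imported from the classical Stone theorem. One small correction on attribution: when you argue that ``the operator selected must be $\hat{A}$,'' the fact you need is precisely this well-definedness (uniqueness of the generator for a \emph{fixed} group), not the injectivity of $g$ proved in Theorem \ref{the:stone}, which only says that \emph{distinct} groups receive distinct generators.
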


\begin{proof}
We want to show that $f$ and $g$ are inverse of each other. First of all we recall that the composition of injective maps is itself an injective map, thus both $f\circ g$ and $g\circ f$ are injective. We then consider the group $\ps{Q}_V$ and apply the composite map $f_V\circ g_V$ for any $V\in\mv(\mh)$, obtaining
\be
f_V\circ g_V(\ps{Q}_V)=f_V(\overline{\delta}(\hat{A}^Q)=\{e^{it\hat{A}^Q}|t\in\Rl\}
\ee
However, for the group $\{e^{it\hat{A}^Q}|t\in\Rl\}$ we obtain
\be
f_V\circ g_V(\{e^{it\hat{A}^Q}|t\in\Rl\})=f_V(\overline{\delta}(\hat{A}^Q)=\{e^{it\hat{A}^Q}|t\in\Rl\}
\ee
and since $f_V\circ g_V$ is injective it follows that $\{e^{it\hat{A}^Q}|t\in\Rl\}=\ps{Q}_V$.\\
On the other hand 
\be
g_V\circ f_V(\overline{\delta}(\hat{A})=g_V(\{e^{it\hat{A}}|t\in\Rl\})=\overline{\delta}(\hat{A})
\ee
\end{proof}

\section{Conclusion}

In this paper we have given a definition of the complex number quantity value object $\ps{\Cl^{\leftrightarrow}}$ in a topos.
The choice in the construction of $\ps{\Cl^{\leftrightarrow}}$ was motivated by the relations between the spectra of normal operators and the spectra of the self-adjoint operators comprising them. In particular, this newly defined object allowed us to define normal operators in the same way as self-adjoint operators were defined, namely as maps from the state space to the complex quantity value object.

In order to interpret these normal operators we defined them in terms of functions on filters, which we have called observable and antonymous functions. These then are related to the maximum and minumun value an individual normal operator can have.\\
We then analysed the way in which observable functions for normal operators are related to observable functions of the self-adjoint operators comprising them. 

Subsequently we have analysed the properties of the complex number value object and have found out that, similar to the real quantity value object, the complex quantity value object is only a monoid.
However, it is possible to turn both these objects into abelian groups via the process of k-extension. 
We utilised these abelian group objects to define the internal notion of one parameter groups in a topos. This enabled us to define the topos analogue of the Stone's theorem. This is very important when eventually analysing time evolution in the topos frame work. In fact, given the topos analogue of the Hamiltonian operator, via the Stone's theorem we can define a unique one parameter group of transformations which represent time evolution. The detailed analysis and consequences of this is left for future publication.

Moreover, when analysing $\ps{\Cl^{\leftrightarrow}}$ we showed that the results obtained in \cite{domain} for the real valued quantity value object can be easily generalised for the complex value object, this obtaining an interpretation of the $\ps{\Cl^{\leftrightarrow}}$ in terms of domain theory.

To apply this new topos framework to scenarios in quantum mechanics, the KMS state is a natural next step,
as the KMS condition requires complex quantities in order to be specified.
In particular, our definitions of one parameter group transformations should lead directly towards a specification of the KMS condition in topos quantum physics. This was done in \cite{kms}.

\bigskip
\bigskip

\textbf{Acknowledgements. }
One of the authors (C.F.) would like to thank her parents Luciano Flori and Elena Romani for their support and encouragement.
This work was supported by the Perimeter Institute of Theoretical Physics, Waterloo, Ontario and National Sciences and Engineering Research Council of Canada.

\newpage
\section{Appendix}
\subsection{Lattices}\label{sec:lattice}
In the previous section we have expressed the spectral theorem as referred to normal operators. We now will give a general definition of a spectral family as referred to a lattice $L$.
\begin{Definition}
Given a complete lattice $L$, a mapping 
\ba
E:\Rl&\rightarrow& L\\
\lambda&\mapsto&E_{\lambda}
\ea
is a spectral family in $L$ if the following hold
\begin{enumerate}
\item $E_{\lambda}\leq E_{\mu}$ for $\lambda\leq \mu$.
\item $E_{\lambda}=\bigwedge_{\mu>\lambda}E_{\mu}$ for all $\lambda\in \Rl$
\item $\bigwedge_{\lambda\in \Rl}E_{\lambda}=0$, $\bigvee_{\lambda\in\Rl}E_{\lambda}=1$
\end{enumerate}
If there exists $\alpha, \beta\in \Rl$ such that $E_{\alpha}=0$ for all $\lambda<\alpha$ and $E_{\lambda}=1$ for $\lambda\geq \beta$, then the spectral family $E$ is called bounded.
\end{Definition}
Given the definition of ordering of complex numbers given in (\ref{def:order}), we can trivially extend the above definition to a complex spectral measure as follows:
\begin{Definition}
Given a complete lattice $L$, a mapping 
\ba
E:\Cl&\rightarrow& L\\
\lambda&\mapsto&E_{\lambda}
\ea
which, from the definition of ordering (\ref{def:order}), is equivalent to 
\be
(\varepsilon, \eta)\mapsto E_{\varepsilon}E_{\eta}
\ee
This represents a spectral family in $L$ for ($\epsilon, \eta \in \Rl^2$) if the following holds
\begin{enumerate}
\item $E_{\varepsilon}E_{\eta}\leq E_{\varepsilon^{'}}E_{\eta^{'}}$ for $\varepsilon\leq \varepsilon^{'}$ and $\eta\leq\eta^{'}$.
\item $E_{\varepsilon}E_{\eta}=\bigwedge_{\varepsilon^{'}>\varepsilon}\bigwedge_{\eta^{'}>\eta}E_{\varepsilon^{'}}E_{\eta^{'}}$ for all $(\varepsilon,\eta)\in \Rl^2$
\item $\bigwedge_{\varepsilon\in\Rl}\bigwedge_{\eta\in \Rl} E_{\varepsilon}E_{\eta}=0$, $\bigvee_{\varepsilon\in\Rl}\bigvee_{\eta\in\Rl}E_{\varepsilon}E_{\eta}=1$
\end{enumerate}
If there exists $(\varepsilon^{'},\eta^{'})\in \Rl^2$ such that $E_{\varepsilon}E_{\eta}=0$ for all $(\varepsilon,\eta)<(\varepsilon^{'},\eta^{'})$ and $E_{\varepsilon}E_{\eta}=1$ for $(\varepsilon,\eta)\geq(\varepsilon^{'},\eta^{'})$, then the spectral family $E$ is called bounded.
\end{Definition}
If $L$ is a complete lattice and $a\in L$, then
\be
L_a:=\{b\in L|b\leq a\}
\ee
is a complete distributive lattice with maximal element $a$. Moreover, given a bounded spectral family $E$ in $L$, then 
\be
E^a:\lambda\mapsto E_{\lambda}\wedge a
\ee
is a spectral family in $L_a$.

In our case $L$ would be the lattice of projection operators in an abelian von Neumann algebra $V$ denoted by $P(V)$ and $E$ would be the spectral family in $P(V)$ of a normal operator $\hat{A}$. In this situation the restriction
\be
E^P:\lambda\mapsto E_{\lambda}\wedge P
\ee
is a bounded spectral family in the ideal
\be
I_P:=\{\hat{Q}\in P(V)|\hat{Q}\leq \hat{P}\}\subseteq P(V)
\ee
We now introduce the notion of a {\it filter} and a {\it filter base}.
\begin{Definition}
Given a lattice $L$ with zero element $0$, a subset $F$ of $L$ is called a (proper) filter (or (proper) dual ideal) if
\begin{itemize}
\item [i)] $0\notin F$.
\item [ii)] If $a, b\in F$, then $a\wedge b\in F$
\item [iii)] If $a\in F$ and $b\geq a$ then $b\in F$
\end{itemize}
\end{Definition}
If $F$ is such that there exists no other filter which contains it, then $F$ is a {\it maximal filter}.
Of particular importance is a {\it filter base}. 
\begin{Definition}
Given a lattice $L$ with zero element $0$, a subset $\mathcal{B}$ of $L$ is called filter base if
\begin{itemize}
\item [i)] $0\notin \mathcal{B}$.
\item [ii)] If $a, b\in \mathcal{B}$, then $\exists c\in F$ such that $c\leq a\wedge b$
\end{itemize}
\end{Definition}
If a filter base $ \mathcal{B}$ is such that there exists no other filter base which contains it, then $\mathcal{B}$ is a {\it maximal filter base}. For any lattice with a zero element, a maximal filter and maximal filter base always exist. Moreover one can deduce that, for a quasipoint $\mb$ of the lattice $L$, we have that 
\be 
\forall a\in\mb\;\forall b\in L \text{ if }a \leq b \text{ then }b\in \mb
\ee
Therefore 
\be
\forall a, b \in\mb;\; a\wedge b\in\mb
\ee
i.e. all maximal filter bases are maximal filters and vice versa.

A maximal filter base is what in the literature is called a {\it quasipoint}. For the sake of completeness we will report the definition below.
\begin{Definition}
A non empty subset $\mb$ of a lattice $L$ is a \textbf{quasipoint} in $L$ iff the following conditions are satisfied
\begin{enumerate}
\item$ 0\notin \mb$
\item $\forall a,b\in mb$ there exists a $c\in \mb$ such that $c\leq a\wedge b$
\item $\mb$ is a maximal subset having properties 1 and 2.
\end{enumerate}
\end{Definition}
It is easy to see that a maximal filter is nothing but a maximal dual ideal. In a complemented distributive lattice a maximal filter is called {\it ultra filter} and it has the property that either $a\in L$ or $a^c\in L$.

The set of all quasipoints in a lattice $L$ is denoted by $\mathcal{Q}(L)$.
Such a set can be given a topology whose basis sets are, for each $a\in L$
\be
\mathcal{Q}_a(L):=\{\mb \in\mathcal{Q}(L)|a\in\mb\}
\ee
We then have from the fact that $a\in \mb$ and $b\in\mb$ imply $a\wedge b\in\mb$ that
\be
\mathcal{Q}_a(L)\cap \mathcal{Q}_b(L)=\mathcal{Q}_{a\wedge b}(L)
\ee
From the fact that $0\notin \mb$ it follows that
\be
\mathcal{Q}_0(L)=\emptyset
\ee
Finally, since $I$ is the upper bound of the lattice $L$,
then
\be
\mathcal{Q}_I(L)=\mathcal{Q}(L)
\ee
Thus indeed the set $\mathcal{Q}_a(L)$ forms a basis. Moreover, from the property of maximality of quasipoints, it follows that the sets $\mathcal{Q}_a(L)$
are clopen. In particular, from the definition the sets $\mathcal{Q}_a(L)$ are open. To show that they are closed we need to show that they contain all their  limit points\footnote{Recall that given a set $S$ a point $x$ is called a limit point of $S$ iff for every open set containing $x$ it also contains another point of $S$ different from $x$.}. 
In this case a limit point of $S$ will be a quasipoint $\mb$ such that there exists a $b\in \mb$ for which $\mathcal{Q}_a(L)\cap\mathcal{Q}_b(L)\neq \emptyset$. So, to show that $Q_a(L)$ contains all of its limiting points we have to show that all points contained in the complement $\mathcal{Q}/\mathcal{Q}_a(L)$ will not satisfy the condition $\mathcal{Q}_a(L)\cap\mathcal{Q}_b(L)\neq \emptyset$. In particular, if $\mb\in \mathcal{Q}/\mathcal{Q}_a(L)$ then $a\notin \mb$ there exists a $b\in\mb$ such that $a\wedge b=0$, thus $\mathcal{Q}_a(L)\cap\mathcal{Q}_b(L)=\emptyset$.

The topology whose basis are the clopen sets $\mathcal{Q}_a(L)$ is Hausdorff zero dimensional. Given this topology we can now define what a Stone spectrum of a lattice is.
\begin{Definition}
$\mathcal{Q}(L)$ equipped with the topology whose basis sets are the clopen sets $\mathcal{Q}_a(L)$ is called the Stone spectrum of the lattice $L$.
\end{Definition}

It was shown in \cite{Groote2007a} that if the lattice $L$ is the lattice of projection operators in an abelian von Neumann algebra $V$, then the Stone spectrum coincides with the Gel'fand spectrum\footnote{Given an abelian von Neumann algebra $V$, the Gel'fand spectrum of $V$ consists of all the multiplicative linear functionals on $\lambda:V\rightarrow \Cl$ with values in the complex numbers, such that $\lambda(\hat{1})=1$. } of $V$.
\begin{Theorem}
Given an abelian von Neumann algebra $V$, the Gel'fand spectrum $\us_V$ of $V$ is homeomorphic to the Stone spectrum $\mathcal{Q}(P(V))$ of $V$.
\end{Theorem}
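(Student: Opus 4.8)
The plan is to build an explicit map from $\us_V$ to $\mathcal{Q}(P(V))$ out of the values that characters take on projections, show it is a bijection, and then upgrade it to a homeomorphism by a compactness argument. First I would note that for every character $\lambda\in\us_V$ and every projection $\hat{P}\in P(V)$ one has $\lambda(\hat{P})=\lambda(\hat{P}^2)=\lambda(\hat{P})^2$, so $\lambda(\hat{P})\in\{0,1\}$. This allows the definition
\be
\Phi:\us_V\rightarrow\mathcal{Q}(P(V)),\qquad \lambda\mapsto\mb_\lambda:=\{\hat{P}\in P(V)\,|\,\lambda(\hat{P})=1\}.
\ee
I would then verify that $\mb_\lambda$ is a quasipoint: $0\notin\mb_\lambda$ since $\lambda(0)=0$; it is closed under meets because for commuting projections $\hat{P}\wedge\hat{Q}=\hat{P}\hat{Q}$ and $\lambda(\hat{P}\hat{Q})=\lambda(\hat{P})\lambda(\hat{Q})$; and it is upward closed because $\hat{Q}\geq\hat{P}$ gives $\hat{P}\hat{Q}=\hat{P}$, so $\lambda(\hat{P})=1$ forces $\lambda(\hat{Q})=1$. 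Maximality follows from $\lambda(\hat{P})+\lambda(\hat{1}-\hat{P})=\lambda(\hat{1})=1$, which shows exactly one of $\hat{P},\hat{1}-\hat{P}$ lies in $\mb_\lambda$; this is precisely the ultrafilter condition in the complemented distributive lattice $P(V)$.

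Injectivity of $\Phi$ is immediate: if $\mb_\lambda=\mb_{\lambda'}$ then $\lambda$ and $\lambda'$ agree on all projections, and since the spectral theorem expresses every element of the abelian algebra $V$ as a norm limit of linear combinations of its spectral projections, $\lambda=\lambda'$ by continuity and linearity. For surjectivity I would reverse the construction: given a quasipoint $\mb$, define a candidate functional on self-adjoint elements by the observable function already introduced,
\be
\lambda_{\mb}(\hat{A}):=f_{\hat{A}}(\mb)=\inf\{\gamma\,|\,\hat{E}^{\hat{A}}_{\gamma}\in\mb\},
\ee
extend it complex-linearly via $\hat{A}=\hat{C}+i\hat{B}$, and check that $\lambda_{\mb}$ is a unital multiplicative linear functional with $\mb_{\lambda_{\mb}}=\mb$.

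The main obstacle is exactly this surjectivity step, namely verifying that $\lambda_{\mb}$ is genuinely multiplicative. The clean way to see it is to regard $P(V)$ as a Boolean algebra (being the projection lattice of an abelian von Neumann algebra) and to recall that unital $\{0,1\}$-valued homomorphisms on a Boolean algebra are in bijection with its ultrafilters; since $V$ is generated as a von Neumann algebra by $P(V)$, a character is determined by, and can be reconstructed from, such a homomorphism. Multiplicativity on the dense span of projections then reduces to the identity $\lambda_{\mb}(\hat{P}\hat{Q})=\lambda_{\mb}(\hat{P})\lambda_{\mb}(\hat{Q})$, which holds because $\mb$ is a filter closed under $\wedge$, and it extends to all of $V$ by continuity. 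The results on observable functions established above, in particular that $\mathrm{im}\,f_{\hat{A}}=sp(\hat{A})$ together with Theorem \ref{the:restriction}, guarantee that $\lambda_{\mb}$ is well defined and bounded.

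Finally I would establish the topological equivalence. The basic clopen sets of the Stone topology are $\mathcal{Q}_{\hat{P}}(P(V))=\{\mb\,|\,\hat{P}\in\mb\}$, and their preimages $\Phi^{-1}(\mathcal{Q}_{\hat{P}}(P(V)))=\{\lambda\in\us_V\,|\,\lambda(\hat{P})=1\}$ are weak-$*$ open, since the evaluation $\lambda\mapsto\lambda(\hat{P})$ is weak-$*$ continuous with range in the two-point discrete set $\{0,1\}$, making $\{\lambda(\hat{P})=1\}$ clopen; hence $\Phi$ is continuous. Both $\us_V$ (compact in the weak-$*$ topology by Banach--Alaoglu) and $\mathcal{Q}(P(V))$ (compact Hausdorff and zero-dimensional, as shown in the appendix) are compact Hausdorff, so the continuous bijection $\Phi$ is automatically a homeomorphism, completing the proof. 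The only remaining subtlety to confirm is compactness of $\mathcal{Q}(P(V))$, which follows from the maximality of quasipoints together with the clopen basis $\{\mathcal{Q}_{\hat{P}}(P(V))\}$ exhibited above.
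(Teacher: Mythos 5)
Your construction is exactly the one the paper itself uses (and attributes to de Groote): your map $\Phi:\lambda\mapsto\{\hat{P}\in P(V)\,|\,\lambda(\hat{P})=1\}$ is precisely the paper's map $\beta$ of (\ref{ali:stone}), and your verification that it lands in quasipoints, is a bijection, and is continuous---hence a homeomorphism since $\us_V$ is compact and $\mathcal{Q}(P(V))$ is Hausdorff---soundly fills in the details that the paper simply delegates to the citation. Two small corrections: do not invoke Theorem \ref{the:restriction} in the surjectivity step, because its statement already presupposes the isomorphism $C(\mathcal{Q}(P(V)),\Cl)\simeq C(\us_V,\Cl)$ induced by the very homeomorphism being proved (well-definedness and boundedness of $\lambda_{\mb}$ follow non-circularly from $\mathrm{im}\,f_{\hat{A}}=sp(\hat{A})$ alone); and the compactness of $\mathcal{Q}(P(V))$ that you flag as a remaining subtlety is not actually needed, since a continuous bijection from a compact space onto a Hausdorff space is automatically a homeomorphism.
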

The proof can be found in \cite{Groote2007a} and rests on the fact that for each element $\lambda\in \us_V$ one can define the corresponding quasipoint
\be
\beta(\lambda):=\{\hat{P}\in P(V)|\lambda(\hat{P})=1\}
\ee
The mapping 
\ba\label{ali:stone}
\beta:\us_V&\rightarrow& \mathcal{Q}(P(V))\\
\lambda&\mapsto&\beta(\lambda)
\ea
are then the desired homeomorphisms.

\subsection{State Space and the Quantity Value Object}\label{sec:topos}
The topos analogue of the state space (\cite{Doering2008} ) is the object in $\Sets^{\mv(\mh)}$ called the spectral presheaf which is defined as follows:
\begin{Definition}
The spectral presheaf, $\Sig$, is the covariant functor from the
category $\mv(\mh)$ to $\Sets$ (equivalently, the
contravariant functor from $\mathcal{V(H)}$ to $\Sets$) defined
by:
\begin{itemize}
\item Objects: Given an object $V$ in $\mv(\mh)$, the associated set $\Sig(V)=\us_V$ is defined to be the Gel'fand spectrum of the (unital) commutative von Neumann sub-algebra $V$, i.e. the set of all multiplicative linear functionals $\lambda:V\rightarrow \Cl$, such that $\lambda(\hat{1})=1$.
\item Morphisms: Given a morphism $i_{V^{'}V}:V^{'}\rightarrow V$ ($V^{'}\subseteq V$) in $\mv(\mh)$, the associated function $\Sig(i_{V^{'}V}):\Sig(V)\rightarrow
\Sig(V^{'})$ is defined for all $\lambda\in\Sig(V)$ to be the
restriction of the functional $\lambda:V\rightarrow\Cl$ to the
sub-algebra $V^{'}\subseteq V$, i.e.
$\Sig(i_{V^{'}V})(\lambda):=\lambda_{|V^{'}}$.
\end{itemize}
\end{Definition}
On the other hand the quantity valued object is defined as follows:
In the topos $\Sets^{\mv(\mh)}$ the representation of the quantity value object $\mathcal{R}$ is given by the following presheaf:
\begin{Definition}
The presheaf $\ps{\Rl^{\leftrightarrow}}$ has as
\begin{enumerate}
\item [i)] Objects\footnote{A map $\mu:\downarrow V\rightarrow\Rl$ is said to be order reversing if $V^{'}\subseteq V$ implies that $\mu(V^{'})\leq\mu(V)$. A map $\nu:\downarrow V\rightarrow\Rl$ is order reversing if $V^{'}\subseteq V$ implies that $\nu(V^{'})\supseteq \nu(V)$.}: 
\be
\ps{\Rl^{\leftrightarrow}}_V:=\{(\mu,\nu)|\mu,\nu:\downarrow V\rightarrow\Rl|\mu\text{ is order preserving },\nu\text{ is order reversing }; \mu\leq\nu\}
\ee
\item[ii)] Arrows: given two contexts $V^{'}\subseteq V$ the corresponding morphism is
\ba
\ps{\Rl^{\leftrightarrow}}_{V,V^{'}}:\ps{\Rl^{\leftrightarrow}}_V&\rightarrow&\ps{\Rl^{\leftrightarrow}}_{V^{'}}\\
(\mu,\nu)&\mapsto&(\mu_{|V^{'}},\nu_{|V^{'}})
\ea
\end{enumerate}
\end{Definition}
This presheaf is where physical quantities take their values, thus it has the same role as the reals in classical physics.

The reason why the quantity value object is defined in terms of order reversing and order preserving functions is because, in general, in quantum theory one can only give approximate values to quantities.
In most cases, the best approximation to the value of a physical quantity one can give is the smallest interval of possible values of that quantity.
For details see \cite{Doering2008}.

\bibliography{jabref}{}
\bibliographystyle{hplain}

\end{document}